\tikzset{cross/.style={cross out, draw=black, fill=none, minimum size=2*(#1-\pgflinewidth), inner sep=0pt, outer sep=0pt}, cross/.default={2pt}}
\newcommand{\set}[1]{\left\{ #1 \right\}}
\newcommand{\tuple}[1]{\left( #1 \right)}
\newcommand{\ie}{\emph{i.e.}\xspace}
\newcommand{\eg}{\emph{e.g.}\xspace}
\newcommand{\Eg}{\emph{E.g.}\xspace}
\newcommand{\aka}{also known as\xspace}
\newcommand{\wog}{without loss of generality\xspace}
\newcommand{\wo}{without\xspace}
\newcommand{\ldata}{\ensuremath{\mathcal L_D}\xspace}
\newcommand{\eldata}{\ensuremath{\exists\mathcal L_D}\xspace}
\newcommand{\lorigin}{\ensuremath{\mathcal L_T}\xspace}
\newcommand{\elorigin}{\ensuremath{\exists\mathcal L_T}\xspace}
\newcommand{\binmso}{\ensuremath{\text{MSO}_{\text{bin}}}\xspace}
\newcommand{\mso}{\ensuremath{\text{MSO}}\xspace}
\newcommand{\fo}{\ensuremath{\text{FO}}\xspace}
\newcommand{\msot}{\ensuremath{\text{MSOT}}\xspace}
\newcommand{\nmso}{\ensuremath{\text{NMSO}}\xspace}
\newcommand{\nmsot}{\ensuremath{\text{NMSOT}}\xspace}
\newcommand{\lpp}{MCP\xspace}
\newcommand{\dom}{\mathrm{dom}}
\newcommand{\nat}{\mathbb N}
\newcommand{\expspace}{{\sc ExpSpace}\xspace}
\newcommand{\expspacec}{{\sc ExpSpace}-c\xspace}
\newcommand{\nlogspace}{{\sc NLogSpace}\xspace}
\newcommand{\td}{\text{t2d}}
\newcommand{\out}{\texttt{out}}
\newcommand{\inp}{\texttt{in}}
\newcommand{\ori}{\texttt{o}}
\newcommand{\vo}{\mathrm{vo}}
\newcommand{\exinput}{\exists^{\inp}}
\newcommand{\forinput}{\forall^{\inp}}
\newcommand{\exoutput}{\exists^{\out}}
\newcommand{\foroutput}{\forall^{\out}}
\newcommand{\leqinput}{\mathrel{\leq_{\inp}}}
\newcommand{\leqoutput}{\mathrel{\leq_{\out}}}
\newcommand\sem[1]{\llbracket #1 \rrbracket}
\newcommand\semo[1]{\llbracket #1 \rrbracket_o}
\newcommand{\Prods}[2]{\mathcal{OG}(#1,#2)}
\newcommand{\ProdSG}{\Prods{\Sigma}{\Gamma}}
\newcommand{\dataw}[2]{\mathcal{TDW}(#1,#2)}
\newcommand{\datawSG}{\dataw{\Sigma}{\Gamma}}
\newcommand{\cS}{S_\Psi}
\newcommand{\Select}{{SP}}
\newcommand{\profile}{\lambda}
\newcommand{\profseq}{\text{Seq}}
\newcommand{\fprofseq}{\text{fullSeq}}
\newcommand{\clauses}{\mathcal{C}}
\newtheorem{theorem}{Theorem}
\theoremstyle{plain}
\newtheorem{proposition}[theorem]{Proposition}
\newtheorem{corollary}[theorem]{Corollary}
\theoremstyle{definition}
\newtheorem{definition}{Definition}
\newtheorem{example}[theorem]{Example}
\theoremstyle{remark}
\newtheorem{remark}[theorem]{Remark}
\renewcommand{\paragraph}[1]{\par\smallskip\noindent\textbf{#1.}}
\begin{document}

\title{Logics for Word Transductions with Synthesis}         


\author{Luc Dartois}
\affiliation{
	\institution{Universit\'e Libre de Bruxelles}
 \country{Belgium}                    
}
\email{luc.dartois@ulb.ac.be}          

\author{Emmanuel Filiot}
\affiliation{
	\institution{Universit\'e Libre de Bruxelles}
 \country{Belgium}                    
}
\email{efiliot@ulb.ac.be}          

\author{Nathan Lhote}
\affiliation{
  \institution{Universit\'e Libre de Bruxelles}           
  \country{Belgium}                   
}
\affiliation{
 \institution{LaBRI, Universit\'e de Bordeaux}           
  \country{France}                   
}
\email{nlhote@labri.fr}

\begin{abstract}
    We introduce a
    logic, called $\lorigin$, to express properties of
    transductions, \ie binary relations from input to output (finite)
    words. In $\lorigin$, the input/output dependencies are 
    modelled via an \emph{origin function}  
    which associates to any position of the output word,
    the input position from which it originates. \lorigin is
    well-suited to express relations (which are not necessarily
    functional), and can express all regular functional
    transductions, \ie transductions definable for instance by
    deterministic two-way transducers.

    Despite its high expressive power, $\lorigin$ has decidable satisfiability and equivalence
    problems, with tight non-elementary and elementary complexities,
    depending on specific representation of \lorigin-formulas. 
    Our main contribution is a synthesis result: from any 
    transduction $R$ defined in $\lorigin$, it is possible to synthesise a
    \emph{regular} functional transduction $f$ such that for all input
    words $u$ in the domain of $R$, $f$ is defined and $(u,f(u))\in
    R$. As a consequence, we obtain that any functional transduction
    is regular iff it is $\lorigin$-definable.

    We also investigate the
    algorithmic and expressiveness properties of several extensions of
    $\lorigin$, and explicit a correspondence between transductions
    and data words. As a side-result, we obtain a new decidable logic
    for data words. 
\end{abstract}

%

\keywords{Transductions, Origin, Logic, Synthesis, Data words}  

\maketitle

\begin{acks}
{This work was supported by the French ANR  project \emph{ExStream}
	(ANR-13-JS02-0010), the Belgian FNRS CDR project \emph{Flare} (J013116) and the ARC project \emph{Transform} (F\'ed\'eration Wallonie Bruxelles).
	
	We are also grateful to Jean-Fran\,cois Raskin from fruitful discussions on this work.}
\end{acks}
\section{Introduction}

The theory of regular languages of finite and infinite words is rich and robust, founded on
the equivalence of a descriptive model (monadic second-order logic, MSO)
and a computational one (finite automata), due the works of B\"uchi, Elgot, 
McNaughton and Trahtenbrot \cite{Tho97handbook}. Since then, many
logics have been designed and studied to describe languages (see
for instance \cite{str94,DiekertGastin08}), among which temporal
logics, with notable applications in model-checking~\cite{VW86}.

In this paper, we consider transductions, \ie binary
relations relating input to output words. \Eg the transduction $\tau_{\text{shuffle}}$ associates with a word all its permutations
 -- $(ab,ab), (ab,ba)\in\tau_{\text{shuffle}}$. Operational models, namely extensions of automata with outputs,
called \emph{transducers}, have been studied for computing
transductions. This includes finite transducers, \ie finite automata with outputs,
which have been studied since the 60s \cite{Bers79,sakarovich:2009a} and
two-way transducers (two-way automata with a one-way
output tape). When restricted to transducers defining functions (called
functional transducers), the latter model has recently received a lot of
attention due to its appealing algorithmic properties, its expressive
power and its many equivalent models: deterministic two-way
transducers~\cite{dS13}, reversible  two-way transducers~\cite{DFJL17},
deterministic (one-way) transducers with registers~\cite{AC10} (\aka
streaming string transducers), regular combinator expressions~\cite{AFR14}
and Courcelle's MSO-transducers~\cite{EH01} (MSOT), a model we will
come back to in the related work section. Because of these many characterisations, the class defined by these models has been coined \emph{regular
  transductions}, or regular functions.

However, much less is known about logics to describe
transductions (see for instance~\cite{DBLP:journals/siglog/FiliotR16}
for a brief overview). Recently, Boja\'{n}czyk, Daviaud, Guillon and
Penelle have considered an expressive logic, namely MSO over \emph{origin graphs} (o-graphs) \cite{B17}. Such graphs
encode pairs of words together with an \emph{origin mapping}, relating
any output position to an input position, as
depicted in Fig.~\ref{fig:ographs}. Intuitively, if one thinks of an
operational model for transductions, the origin of an output position
is the input position from which it has been produced.%
\begin{figure}[h]
\vspace{-2mm}
\begin{center}
\begin{tikzpicture}[>=stealth',auto,node distance=3cm,scale=0.8,every node/.style={scale=0.8}]
\node[] (u0) at (-3,0) {input};
\node[] (v0) at (-2,-0.5) {origin};
\node[] (w0) at (-3,-1) {output};

\node[] (i1) at (-1,0) {$a$};
\node[] (i2) at (0,0) {$b$};
\node[] (i3) at (1,0) {$c$};
\node[] (i4) at (2,0) {$a$};

%

\node[] (j1) at (-1,-1) {${a}$};
\node[] (j2) at (2,-1) {${b}$};
\node[] (j3) at (0,-1) {${c}$};
\node[] (j4) at (1,-1) {${a}$};

\draw [->] (j1) -- (i1) ;
\draw [->] (j2) -- (i2) ;
\draw [->] (j3) -- (i3) ;
\draw [->] (j4) -- (i4) ;

\draw [->] (i1) -- (i2) ;
\draw [->] (i2) -- (i3) ;
\draw [->] (i3) -- (i4) ;
\draw [->] (j1) -- (j3) ;
\draw [->] (j3) -- (j4) ;
\draw [->] (j4) -- (j2) ;

\node[] (i1') at (3.5,0) {$a$};
\node[] (i2') at (4.5,0) {$b$};
\node[] (i3') at (5.5,0) {$c$};
\node[] (i4') at (6.5,0) {$a$};


\node[] (j1') at (6.5,-1) {${a}$};
\node[] (j2') at (5.5,-1) {${b}$};
\node[] (j3') at (4.5,-1) {${c}$};
\node[] (j4') at (3.5,-1) {${a}$};

\draw [->] (j1') -- (i1') ;
\draw [->] (j2') -- (i2') ;
\draw [->] (j3') -- (i3') ;
\draw [->] (j4') -- (i4') ;

\draw [->] (i1') -- (i2') ;
\draw [->] (i2') -- (i3') ;
\draw [->] (i3') -- (i4') ;
\draw [->] (j4') -- (j3') ;
\draw [->] (j3') -- (j2') ;
\draw [->] (j2') -- (j1') ;

 \end{tikzpicture}
\end{center}
\vspace{-2mm}
\caption{Possible o-graphs for $\tau_{\text{shuffle}}$}\label{fig:ographs}
\vspace{-2mm}
\end{figure}
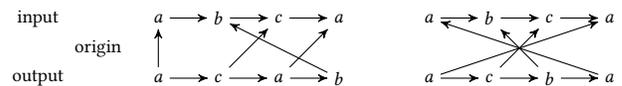
As noticed in~\cite{Bojanczyk14}, most known transducer models not only define
transductions, but \emph{origin transductions} (o-transductions), \ie sets
of o-graphs, and can thus be naturally interpreted in both
\emph{origin-free semantics} (\ie usual semantics) or the richer
\emph{origin semantics}. %
We denote by $\mso_\ori$ monadic second-order logic over
o-graphs. Precisely, it is MSO equipped with monadic predicates $\sigma(x)$ for 
position labels, a linear order
$\leq_\inp$ (resp. $\leq_\out$) over input (resp. output) positions,
and an origin function $\ori$. We denote by $\semo{\phi}$ the
origin-transduction defined by $\phi$, \ie the set of o-graphs
satisfying $\phi$, and by $\sem{\phi}$ the transduction defined by
$\phi$ (obtained by projecting away the origin mapping of
$\semo{\phi}$). While \cite{B17} was mostly concerned
with characterising classes of o-graphs generated by particular
classes of transducers, the authors have shown another interesting
result, namely the decidability of model-checking regular functions
with origin against $\mso_\ori$ properties: it is decidable, given an $\mso_\ori$ sentence $\phi$ and a deterministic two-way
transducer $T$, whether all $o$-graphs of $T$ satisfy $\phi$. 

\paragraph{Satisfiability and synthesis} Important and natural
verification-oriented questions are not considered in~\cite{B17}. The first is the satisfiability
problem for $\mso_\ori$: given a sentence
$\phi$, is it satisfied by some o-graph? While being one of the most
fundamental problem in logic, its decidability would also entail the
decidability of the equivalence problem, a fundamental problem in
transducer theory: given two sentences $\phi_1,\phi_2$ of $\mso_\ori$, does
$\semo{\phi_1} = \semo{\phi_2}$ hold? The second problem is the
regular synthesis problem: given an $\mso_\ori$-sentence $\phi$, does there exist a deterministic
two-way transducer $T$ such that $(1)$ $T$ has the same domain as $\sem{\phi}$
(the set of words which have some image by $\sem{\phi}$) and $(2)$ for
all $u$ in the domain of $T$, its image $T(u)$ satisfies $(u,T(u))\in
\sem{\phi}$. Note that
without requirement $(1)$, any transducer $T$ with empty domain
would satisfy $(2)$. So, instead of designing a transducer and then verifying a posteriori that
it satisfies some $\mso_\ori$ properties, the goal is to check whether
some transducer can be automatically generated from
these properties (and to synthesise it), making it correct \emph{by
  construction}. Unsurprisingly, we show that both these
problems are \emph{undecidable} for $\mso_\ori$. 

\paragraph{Contribution: The fragment $\lorigin$} We define a fragment of $\mso_\ori$ called
$\lorigin$ for which, amongst other interesting properties, the two problems mentioned before are decidable. 
Before stating our precise results on $\lorigin$, let us intuitively
define it and provide examples. $\lorigin$ is the
two-variable fragment\footnote{Only
  two variable names can be used (and reused) in a formula, see \eg ~\cite{str94}} 
  of first-order logic -- $\fo^2$. The predicates in its signature are the output labels,
the linear order $\leq_\out$ for the output positions, the origin function $\ori$, and 
\emph{any} binary MSO predicate restricted to input positions, using  
input label predicates and the input order
$\leq_\inp$. We write it $\lorigin
:=\fo^2[\Gamma,\leqoutput,\ori,\binmso[\leqinput,\Sigma]]$ where
$\Gamma$ is the output alphabet and $\Sigma$ the input
alphabet.

As an example, let us define the transduction $\tau_{\text{shuffle}}$ in $\lorigin$. 
We express that $(1)$ $\ori$ preserves the labelling: $\forall^\out x \bigwedge_{\sigma\in \Gamma}
\sigma(x)\rightarrow \{\sigma(\ori(x))\}$,
and $(2)$ $\ori$ is  bijective, \ie injective: $\forall^\out x,y\ \{\ori(x)= \ori(y)\}\rightarrow x=y
$ and surjective: $\forinput x \exoutput y\ \{x=\ori(y)\}$. 
The notation $\forall^\out$ is a macro which restricts quantification over output
positions, and we use brackets $\{,\}$ to distinguish the
binary MSO predicates. 
Extending this, suppose we have some alphabetic linear
order $\preceq$ over $\Sigma$ and we want to sort the input
labels by increasing order. This can be done by adding the requirement 
$\forall^\out x,y\bigwedge_{\sigma\prec \sigma'} \sigma(x)\wedge
\sigma'(y)\rightarrow x\leq_\out y$. This simply defined transduction
can be realised by a two-way transducer, which would make one pass per symbol $\sigma$ (in increasing order),
during which it copies the $\sigma$-symbols on the output tape and not
the others. 

\paragraph{Results} We show the following results on $\lorigin$:
\begin{itemize}[leftmargin=*]
  \item it is \emph{expressive}: any regular functional transduction is definable
    in $\lorigin$. Beyond functions, $\lorigin$ is incomparable with
    non-deterministic two-way transducers and non-deterministic
    streaming string transducers (it can express
    $\tau_{\text{shuffle}}$ which is definable in none of these
    models). 
  \item it \emph{characterises} the regular functional transductions:
    a functional transduction is regular iff it is
    $\lorigin$-definable. Moreover, given an $\lorigin$-formula, it is
    decidable whether it defines a functional transduction. 
  \item the satisfiability problem is decidable (in non-elementary
    time, which is unavoidable because of the binary MSO predicates),
    and \expspacec if the binary MSO
    predicates are given by automata. Since $\lorigin$ is closed under
    negation, we obtain as a consequence  the decidability of the
    equivalence problem for $\lorigin$-definable o-transductions. 
  \item it admits regular synthesis: from any $\lorigin$-sentence $\phi$, one can
    \emph{always} synthesise a deterministic two-way transducer which
    has the same domain as $\sem{\phi}$ and whose o-graphs all satisfy
    $\phi$. 
\end{itemize}

Finally, we provide two strictly more expressive extensions of
$\lorigin$, shown to admit regular synthesis, and hence decidable
satisfiability problem. The first one $\elorigin$ extends any
$\lorigin$-formula with a block of existential monadic second-order
quantifiers  and it captures all transductions defined by
non-deterministic \mso-transducers or equivalently non-deterministic
streaming string transducers~\cite{AD11}. Then, we introduce 
$\elorigin^{\mathrm {so}}$ which extends $\elorigin$ with unary
predicates $L(x)$ called \emph{single-origin predicates}, where $L$ is
a regular language, which holds in an input position $x$ if the word
formed by the positions having origin $x$ belongs to $L$. 
For instance one could express that any input position labelled by $a$
has to produce a word in $(bc)^*$, which cannot be done with a $\fo^2$ formula. 
This extension allows us to additionally  capture any
rational relation, \ie the transductions defined by (nondeterministic) one-way
transducers~\cite{Bers79}.


Our main and most technical result is regular synthesis. Indeed, it
entails satisfiability (test domain emptiness of the constructed
transducer), and, since no automata/transducer model is known to be
equivalent to $\mso_\ori$ nor $\lorigin$, we could not directly rely on
automata-based techniques. The techniques of \cite{B17} for
model-checking do not apply
either because the target model is not given when considering
satisfiability and synthesis. Instead, we introduce a sound and complete
bounded abstraction of the o-graphs satisfying a given
\lorigin-formula. This abstraction was inspired by techniques used in data
word logics~\cite{SZ12}, although we could not directly reuse known results, since
they were only concerned with the satisfiability
problem. Nonetheless, we exhibit a tight connection between o-graphs and data words.

\paragraph{A consequence on data words} As a side contribution, we explicit a bijection between non-erasing  origin graphs (the origin mapping
is surjective) and words over an infinite alphabet of totally ordered
symbols, called data words. Briefly, the origin becomes the data and conversely the data
becomes the origin. We show that this bijection carries over to the
logical level, and we obtain a \emph{new} decidable logic for data words, which
strictly extends the logic $\fo^2[\leq,\preceq,S_\preceq]$ (linear position order and
linear order and successor over data), known to be decidable from \cite{SZ12}, with any binary MSO$[\preceq]$ predicate talking only about
the data.

\paragraph{Related Work} First, let us mention some known logical way
of defining transductions. Synchronised (binary) relations, \aka automatic
relations, are relations defined by automata running over word
convolutions~\cite{sakarovich:2009a}. A convolution $u\otimes v$ is obtained by
overlapping two words $u,v$ and by using a padding symbol $\bot$ if they do not
have the same length. E.g. $aba\otimes bc = (a,b)(b,c)(a,\bot)$. By
taking MSO over word convolutions, one obtains a logic to define
transductions. It is however quite weak in expressive power, as it
cannot even express all functional transductions definable by one-way input-deterministic
finite transducers.

Courcelle has introduced MSO-transducers to define graph
transductions~\cite{DBLP:journals/iandc/Courcelle90} and which, casted to words, gives a logic-based
formalism to define word transductions. Roughly, the predicates of the
output word are defined by several MSO-formulas with free variables,
interpreted over a bounded number of copies of the input
structure. Additionally, several free parameters can be used to add a
form of non-determinism. Functional MSO-transducers correspond exactly
to functional regular transduction~\cite{EH01}. However, they have a
relatively limited expressive power when it comes to relations,
because, unlike $\lorigin$, the number of images of a word is always
finite. For instance, the universal transduction $\Sigma^*\times
\Sigma^*$ is not definable in this formalism, while it is simply
definable by the $\lorigin$-formula $\top$, nor is
$\tau_{\text{shuffle}}$ (this can be shown using cardinality arguments).

Finally, there is a number of recent works on reactive synthesis~\cite{syntcomp}, since the
seminal problem raised by Church~\cite{Chur62}, and studied by Pnueli and
Rosner for LTL specifications~\cite{PnuRos:89}. In these works however, the
specification is always a synchronised relation and the target
implementation is a Mealy machine (an input-deterministic finite transducer
alternatively reading and producing exactly one symbol at a time). While
$\lorigin$ does not make any synchronicity assumption, the target
implementations in this paper are  deterministic two-way transducer which are, computationally speaking, more powerful. We
leave open the question of whether the following synthesis problem is
decidable: given an $\lorigin$-formula $\phi$, is there a (one-way) input-deterministic (\aka
sequential) transducer realising $\phi$?

Transducer synthesis is also equivalently known as \emph{uniformisation} in transducer theory~\cite{sakarovich:2009a}. This problem has been
studied in the origin-free semantics for the class of rational
relations. It is known that from any
rational relation one can synthesise a rational function~\cite{Elgot:Mezei:ibmjrd:1965}, and
that checking whether it is realisable by a sequential function is
undecidable~\cite{CarayolL14,conf/icalp/FiliotJLW16}. The former
result is a consequence of our results on the extension
$\lorigin^{\mathrm {so}}$: we show that any rational relation defined
by a one-way transducer is $\lorigin^{\mathrm {so}}$-definable (while preserving the origin
mappings) and moreover, any transduction defined in
$\lorigin^{\mathrm {so}}$ is realisable by a regular function
. Hence, from rational relation given as a one-way transducer
$T$ we obtain an order-preserving and functional regular o-transduction
that realises the relation defined by $T$. Such o-transductions are
easily seen to be equivalent to rational functions~\cite{Bojanczyk14,Filiot-ICLA15}. 
Finally, we mention that
transducer synthesis has also been recently studied in the context of
trees, where the specification is a tree automatic relation~\cite{DBLP:journals/corr/LodingW14}.


Due to the lack of space, some proofs are omitted or only sketched in the body of the paper. The full proofs are given in the appendix.

\section{Logics with origin for transductions}\label{sec:lorigin}

\paragraph{Words and transductions} We denote by $\Sigma^*$ the set of
finite words over some alphabet $\Sigma$, and by $\epsilon$ the empty
word. The length of a word $u\in\Sigma^*$ is denoted by $|u|$, in
particular $|\epsilon|=0$. The set of positions of $u$ is $\dom(u) =
\{1,\dots,|u|\}$, an element $i\in\dom(u)$ denoting the $i$th position
of $u$, whose symbol is denoted $u(i)\in\Sigma$.

Let $\Sigma$ and $\Gamma$ be two alphabets, \wog assumed to be disjoint.
A \emph{transduction} is a subset of $\Sigma^+\times
\Gamma^*$ of pairs $(u,v)$, where $u$ is called the input word and $v$
the output word. An \emph{origin mapping} from a word $v\in\Gamma^*$ to a
word $u\in\Sigma^+$ is a mapping $o : \dom(v)\rightarrow
\dom(u)$. Intuitively, it means that position $i$ was produced when
processing position $o(i)$ in the input word $u$. We exclude the empty
input word from the definition of transductions, because we require 
every output position to have some origin. This does not weaken
the modelling power of the logics we consider, up to putting
some starting marker for instance. Following the
terminology of \cite{B17}, an \emph{origin-graph}
(o-graph for short) is a pair $(u,(v,o))$ such that
$(u,v)\in\Sigma^+\times \Gamma^*$ and $o$ is an origin mapping from $v$
to $u$. We denote by $\ProdSG$
the set of o-graphs from $\Sigma$ to $\Gamma$. A \emph{transduction
  with origin} (or just o-transduction) $\tau$ from $\Sigma$ to
$\Gamma$ is a set of o-graphs. We say that $\tau$ is
functional (or is a function) if for all $u$, there is at most one pair $(v,o)$ such that
$(u,(v,o))\in\tau$, and rather denote it by $f$ instead of $\tau$. 
The \emph{domain} of an o-transduction $\tau$ is the set 
$\dom(\tau) = \{ u\mid \exists (u,(v,o))\in\tau\}$. 
Finally, the \emph{origin-free projection} of $\tau$ is the
transduction $\{ (u,v)\mid \exists (u,(v,o))\in\tau\}$. 
Many results of this paper hold with or \wo origins. 
We always state them in their strongest version, usually
\wo origin.

\paragraph{Regular functional transductions} Regular functional transductions (or
regular functions) have many characterisations, as mentioned in the
introduction. We will briefly define them as the transductions
definable by deterministic two-way transducers, which are pairs
$(A,\rho)$ such that $A$ is a deterministic two-way automaton with set
of transitions $\Delta$, and $\rho$ is a morphism of type
$\Delta^*\rightarrow \Gamma^*$. The transduction defined by $(A,\rho)$
has domain $L(A)$ (the language recognised by $A$) and for all words
$u$ in its domain, the output of $u$ is the word $\rho(r)$, where $r$
is the accepting sequence of transitions of $A$ on $u$. Such transducers (as well as other
known equivalent models) can be naturally equipped with an origin semantics~\cite{Bojanczyk14} and
we say that a functional $o$-transduction is regular if it is equal to the set of
$o$-graphs of some deterministic two-way transducer.



\paragraph{FO and MSO logics for transductions}\label{subsec:fomsotrans} We consider FO
and MSO over particular signatures. Without defining their syntax formally (we refer the reader \eg to \cite{str94}),
recall that MSO over a set of predicates $S$ allows for first-order
quantification $\exists x$ over elements, second-order quantification
$\exists X$ over element sets, membership predicates $x\in
X$, predicates of $S$ and all Boolean operators. We use the
notation $\mso[S]$ (or $\fo[S]$) to emphasise that formulas are built
over a particular signature $S$. As usual, $\phi(x_1,\dots,x_n)$
denotes a formula with $n$ free first-order variables, and we call
\emph{sentence} a formula \wo free variables. Finally, $\models$ denotes
the satisfiability relation.

Origin-graphs $(u,(v,o))$ of $\ProdSG$ are seen as
\emph{structures} with domain
$\dom(u)\uplus \dom(v)$ over the 
signature $\mathcal S_{\Sigma,\Gamma}$ composed of unary predicates
$\delta(x)$, for all $\delta\in \Sigma\cup \Gamma$, holding true on
all positions labelled $\delta$, $\leqinput$ a linear-order on the
positions of $u$, $\leqoutput$ a linear-order on the positions of $v$,
and $\ori$ a unary function for the origin, which is naturally
interpreted by $o$ over $\dom(v)$, and as the identity
function\footnote{As functional symbols must be interpreted by
  total functions, we need to interpret $\ori$ over $\dom(u)$ as well.} over
$\dom(u)$. We also use the predicates $=$, $<_\inp$ and $<_\out$,
which are all definable in the logics we consider. We denote by
$\mso_\ori$ the logic $\mso[\mathcal S_{\Sigma,\Gamma}]$.
Any $\mso_\ori$ sentence $\phi$ defines an
o-transduction $\semo{\phi} = \{ (u,(v,o))\in\ProdSG\mid (u,(v,o))\models
\phi\}$ and its origin-free counterpart $\sem{\phi}$. An o-transduction (resp. transduction) $\tau$
is $\mso_\ori$-definable if $\tau = \semo{\phi}$ (resp. $\tau =
\sem{\phi}$) for some sentence $\phi\in\mso_\ori$.

\begin{example}\label{ex:mso}
First, we define the 
transduction $\tau_{\text{cfl}}$ mapping $a^nb^n$ to $(ab)^n$, as the
origin-free projection of the set of o-graphs defined by some
$\mso_\ori$-sentence $\phi_{\text{cfl}}$, which expresses that (1) the domain is in
$a^*b^*$, (2) the codomain in
$(ab)^*$ (both properties are regular and, hence, respectively
$\mso[\Sigma,\leqinput]$- and $\mso[\Gamma,\leqoutput]$-definable),
and (3) the origin-mapping is bijective and label-preserving (see
introduction).
\end{example}

\paragraph{Satisfiability and synthesis problems} The satisfiability
problem asks, given an $\mso_\ori$-sentence $\phi$, whether it is
satisfied by some $o$-graph, \ie whether
$\semo{\phi}\neq\varnothing$ (or equivalently
$\sem{\phi}\neq\varnothing$) holds. %
By encoding the Post Correspondence Problem, we show that $\mso_\ori$ has
undecidable satisfiability problem, even if restricted to the
two-variable fragment of \fo with the $S_\out$ predicate, denoting the
successor relation over output positions:

\begin{restatable}{proposition}{propUndecfoD}\label{prop:undecfo2}
    Over o-graphs, the logic
    $\fo^2[\Sigma,\Gamma,\leqinput,\leqoutput,S_\out,\ori]$ has
    undecidable satisfiability problem. 
\end{restatable}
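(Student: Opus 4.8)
The plan is to encode the Post Correspondence Problem (PCP) as an $\fo^2[\Sigma,\Gamma,\leqinput,\leqoutput,S_\out,\ori]$ satisfiability question over o-graphs. Recall an instance of PCP is a finite list of pairs of words $(u_1,v_1),\dots,(u_k,v_k)$ over some alphabet, and a solution is a nonempty sequence of indices $i_1,\dots,i_n$ with $u_{i_1}\cdots u_{i_n} = v_{i_1}\cdots v_{i_n}$; this problem is undecidable. The key idea is to use the input word of the o-graph to store a candidate concatenation together with a ``tiling'' into blocks, the output word to store a second copy of the same string tiled according to the $v$-side, and the origin function to witness that the two tilings come from the same index sequence. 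Since PCP is undecidable, the existence of such an o-graph will be undecidable as well, giving the result.

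\textbf{Key steps.} First I would fix the encoding. Over the input, I would write a word of the form $\#\, w_1 \#\, w_2 \#\cdots\#\, w_n\#$ where each $w_j$ is a block of letters (from a copy $\Sigma_a$ of the PCP alphabet) tagged with the index $i_j\in\{1,\dots,k\}$ chosen for that block, and such that $w_j$ ``spells'' $u_{i_j}$ letter by letter; this is an $\fo^2$-expressible (indeed even regular, hence $\binmso[\leqinput,\Sigma]$-expressible, but here we only have plain $\fo^2$ over input) shape, and with the successor relation on the input being definable from $\leqinput$ one can check block well-formedness using the standard $\fo^2$ technique of talking about a position and its successor/predecessor via $\leqinput$ together with the ``no position strictly between'' idiom. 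Symmetrically, the output word has shape $\#\, w'_1\#\cdots\#\,w'_m\#$ with each $w'_\ell$ spelling $v_{j_\ell}$ for a chosen index $j_\ell$. The crucial use of the origin function: I would require that $\ori$ restricted to output letters is a \emph{label- and order-preserving bijection} onto the input letters, so that the output string equals the input string letter-for-letter (expressible with $\fo^2$ using $\ori$, $S_\out$ and $\leqinput$ to propagate order: $x S_\out y$ implies $\ori(x)$ is the input-successor of $\ori(y)$, plus injectivity and surjectivity as in the $\tau_{\text{shuffle}}$ example from the introduction). This forces $u_{i_1}\cdots u_{i_n} = v_{j_1}\cdots v_{j_m}$ as raw strings.

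The remaining constraint is that the \emph{index sequences agree}, i.e. $n=m$ and $i_\ell = j_\ell$ for all $\ell$. I would enforce this by making $\ori$ map each output block separator $\#$ to the corresponding input block separator $\#$ — again order-preservingly via $S_\out$ — and requiring that an output separator starting a $v_{j_\ell}$-block have origin an input separator starting a $u_{i_\ell}$-block \emph{with the same index tag}, checkable by a disjunction over $\{1,\dots,k\}$ comparing the index labels of $x$ and $\ori(x)$. Since both the $\#$-positions and the letter-positions are sent order-preservingly and the input is partitioned by its $\#$'s exactly as the output is partitioned by its $\#$'s, this forces the two tilings to coincide block by block, hence the index sequences are equal and we have a genuine PCP solution. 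Conversely, any PCP solution yields such an o-graph. A few routine checks remain: that all the block-shape and spelling conditions are genuinely $\fo^2$ (one reuses each of the two variable names, never nesting three quantifiers), that nonemptiness of the word is expressible (there is at least one $\#$-pair, i.e. $\exists^{\inp} x$), and that we only ever use $\leqinput,\leqoutput,S_\out,\ori$ and the label predicates — not $S_\inp$, which we do not have but can avoid since input-successor is $\leqinput$-definable.

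\textbf{Main obstacle.} The delicate point is squeezing everything into \emph{two variables}: the natural formulation of ``block $j$ of the input spells $u_{i_j}$'' wants to quantify over a position, its block-start, and an intermediate letter simultaneously. The fix is to phrase all block/spelling well-formedness \emph{locally} — each letter position constrains only itself and its immediate $\leqinput$-neighbour (the label that must follow it inside a $u_i$-block, depending on the index tag carried along), and each $\#$ constrains only the maximal $\#$-free stretch to its right via the successor idiom — so that two variables suffice, exactly as in classical $\fo^2$ encodings of PCP. Getting this locality right for all of: input shape, output shape, the letter-bijection via $\ori$, and the separator-bijection-with-matching-index via $\ori$, while never needing a third variable, is the technical heart of the argument; everything else is bookkeeping.
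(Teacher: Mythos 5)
There is a genuine gap, and it sits exactly where the statement's signature is most delicate. Your encoding puts the block structure (separators, index tags, letter-by-letter spelling of $u_{i_j}$) on the \emph{input} side and proposes to verify it using the input successor, which you claim is ``$\leqinput$-definable''. It is not, in $\fo^2$: expressing ``no position lies strictly between $x$ and $y$'' while keeping both $x$ and $y$ free requires a third variable, and it is a classical fact that the successor relation is not $\fo^2$-definable from a linear order (this is precisely why $\fo^2[\le]$ and $\fo^2[\le,S]$ differ). The proposition's signature deliberately provides $S_\out$ but \emph{not} $S_\inp$, and the surrounding text stresses that the logic without any successor is a fragment of the decidable $\lorigin$; so any reduction that needs input adjacency (your block well-formedness, the same-tag-within-a-block condition, and the clause ``$x\,S_\out\,y$ implies $\ori(x)$ is the input-successor of $\ori(y)$'') cannot be written as stated. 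The ``locality'' fix you sketch in the last paragraph does not help, because locality on the input is exactly what the two-variable fragment with only $\leqinput$ cannot see.

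The paper's proof avoids this entirely by keeping the input completely unstructured: the input is just the raw common word $u_{i_1}\cdots u_{i_k}=v_{i_1}\cdots v_{i_k}$ over the PCP alphabet, and \emph{all} block and index structure lives in the output, which is the interleaving $1(u_{i_1})2(v_{i_1})\cdots 1(u_{i_k})2(v_{i_k})$ over two tagged copies $A_1,A_2$. Block shape and the agreement of the two index sequences are then purely output-local conditions, checked by chaining $S_\out$ a bounded number of times (the $\Lbag\cdot\Rbag_j$ trick, reusing the two variables), and each tagged copy is tied to the input only through an origin map required to be a label- and order-preserving bijection, which needs just $\leqinput$, $\leqoutput$ and $\ori$ --- no input successor anywhere. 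If you want to rescue your scheme you would have to either move the tiling to the output as the paper does, or simulate input adjacency through the origin of an order-preserving surjective map, which essentially collapses back to the paper's construction.
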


Given a transduction $\tau$ and a functional transduction $f$, we say
that $f$ \emph{realises} $\tau$ if $\dom(f) = \dom(\tau)$, and
for all input $u$, $(u,f(u))\in \tau $. The \emph{regular synthesis problem} asks whether
given an o-transduction $\tau$, there exists a regular functional o-transduction $f$
which realises it. As claimed in the introduction, this problem is
undecidable when $\tau$ is defined in $\mso_\ori$.

\begin{restatable}{proposition}{propUndecSynth}\label{prop:undecsynth}
    The regular synthesis problem is undecidable for
    $\mso_\ori$-definable transductions. 
\end{restatable}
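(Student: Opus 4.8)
The plan is to reduce from the Post Correspondence Problem (PCP), in the same spirit as the reduction for Proposition~\ref{prop:undecfo2}, but arranged so that undecidability of \emph{synthesis} — not just satisfiability — follows. Fix an instance of PCP given by pairs $(u_1,v_1),\dots,(u_k,v_k)$ over an alphabet, say $\{a,b\}$. First I would fix an input alphabet $\Sigma = \{1,\dots,k,\#\}$ so that an input word encodes a candidate index sequence $i_1 i_2 \cdots i_n$ (possibly padded by $\#$'s), and an output alphabet $\Gamma$ rich enough to spell out, with origins, both concatenations $u_{i_1}\cdots u_{i_n}$ and $v_{i_1}\cdots v_{i_n}$ interleaved or stacked. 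The key design choice is to make the o-transduction $\tau$ have a domain that is \emph{independent} of whether the PCP instance is positive: $\dom(\tau)$ should be all of $\Sigma^+$ (or all words of the form $\{1,\dots,k\}^+\#^*$), so that requirement (1) of the synthesis problem forces any realiser to be total on this domain.

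The core of the construction is to let $\tau$ contain, for every input $w = i_1\cdots i_n$, two kinds of output behaviour: a "trivial" output (say the empty word, or a fixed dummy word produced with some canonical origin) that is \emph{always} available, together with a "solution-witnessing" output that is available \emph{only} when $i_1\cdots i_n$ is a PCP solution — i.e. when $u_{i_1}\cdots u_{i_n} = v_{i_1}\cdots v_{i_n}$. Concretely, $\tau$ is defined by an $\mso_\ori$ sentence that is a disjunction $\phi_{\mathrm{triv}} \vee \phi_{\mathrm{sol}}$, where $\phi_{\mathrm{sol}}$ uses the origin function to match the $j$-th letter of $u_{i_1}\cdots u_{i_n}$ with the $j$-th letter of $v_{i_1}\cdots v_{i_n}$ and to verify both concatenations are spelled correctly block by block — all of which is MSO-expressible over o-graphs using $\leqinput,\leqoutput,\ori$ and the label predicates, exactly as in the proof of Proposition~\ref{prop:undecfo2}. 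Then: if the PCP instance has \emph{no} solution, $\tau$ coincides with $\semo{\phi_{\mathrm{triv}}}$, a simple regular o-transduction, so $\tau$ is trivially realisable by a regular function. The delicate direction is to engineer $\phi_{\mathrm{sol}}$ (or an additional clause) so that if the instance \emph{does} have a solution, then \emph{no} regular function can realise $\tau$.

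The main obstacle — and where I'd spend the most care — is precisely this last point: making a positive PCP instance produce a specification that is unrealisable by \emph{any} deterministic two-way transducer. A plausible route is to force the realiser, on every input of the form $1^m$ say, to output a word whose length or content encodes an unbounded search for a solution (e.g. "output a dummy string of length equal to the least index of a solution sequence"), exploiting the fact that a regular function has linear-size output and cannot implement such unbounded conditional behaviour; alternatively, one pads the input with a parameter and requires the realiser to behave like the trivial transduction on some inputs and like an impossible (non-regular, e.g. exponential-length or solution-dependent) transduction on others, so that no single regular function can be consistent with $\tau$ on the whole domain. One must check two things: that the constructed $\tau$ is genuinely $\mso_\ori$-definable (routine, by composing the regular language checks of the block structure with the origin-matching formula), and that a realiser existing would let us decide PCP (by running the synthesised transducer on a well-chosen family of inputs and inspecting outputs). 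I expect the cleanest packaging is to reuse the undecidability argument of Proposition~\ref{prop:undecfo2} as a black box: take the PCP-encoding sentence $\psi$ from there, and set $\tau := \semo{\psi} \cup \{(w,(\epsilon,o_\emptyset)) : w \in \dom_0\}$ for a suitable fixed domain $\dom_0 \supseteq \dom(\semo{\psi})$; then any realiser restricted to inputs witnessing $\psi$ must output a genuine PCP solution encoding, so from the realiser one recovers a solution whenever one exists, and from totality one concludes $\semo\psi$ was nonempty — reducing the (undecidable) non-emptiness of $\semo\psi$ to the synthesis question.
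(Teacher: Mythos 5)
There is a genuine gap, and it is exactly at the point you flag as "where I'd spend the most care": your construction can never make the specification unrealisable. Enlarging the set of allowed outputs only makes synthesis easier, and you insist that the domain of $\tau$ be regular and independent of the PCP instance while a trivial output is available on \emph{every} input of that domain. Under those two design choices the constant regular function (outputting the dummy word, or $\epsilon$, on each input of $\dom_0$) satisfies both requirements of the synthesis problem --- it has the right (regular) domain and its graph is contained in $\tau$ --- so the answer to the synthesis question is ``yes'' no matter whether the PCP instance has a solution. In particular your final packaging $\tau := \semo{\psi}\cup\{(w,(\epsilon,o_\varnothing)) : w\in\dom_0\}$ is always realisable, and the claim that ``any realiser restricted to inputs witnessing $\psi$ must output a genuine PCP solution encoding'' is false: on those inputs the realiser may simply pick the trivial branch, which is also in $\tau$. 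No engineering of $\phi_{\mathrm{sol}}$ can repair this as long as $\phi_{\mathrm{triv}}$ remains a disjunct whose domain already covers $\dom(\tau)$; and the alternative fixes you sketch (forcing outputs whose length encodes an ``unbounded search'', or making the specification ``impossible'' on some inputs only when a solution exists) either collapse to constant behaviour or require expressing ``the input encodes no PCP solution'', a non-regular property of the input alone that $\mso_\ori$ cannot define.

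The paper's proof uses the opposite lever, namely requirement $(1)$ of the synthesis problem (domain equality), and reduces from satisfiability of $\mso_\ori$ (already undecidable by Prop.~\ref{prop:undecfo2}, which is where the PCP encoding lives). Given $\psi$, it builds $\psi'$ defining the transduction that maps $u_1\# u_2$ to $v_1\# v_2$ with $(u_1,v_1)\in\sem{\psi}$ and $(u_2,v_2)\in\sem{\phi_{\text{cfl}}}$, where $\phi_{\text{cfl}}$ defines $a^nb^n\mapsto (ab)^n$ and thus has a non-regular domain. If $\psi$ is unsatisfiable the domain of $\sem{\psi'}$ is empty and the empty regular function realises it; if $\psi$ has a model with input $u_1$, then any realiser $f$ restricted to the regular language $u_1\#(a+b)^*$ would be a regular function with the non-regular domain $\{u_1\#a^nb^n \mid n\ge 0\}$, a contradiction since regular functions are closed under regular domain restriction and have regular domains. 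If you want to salvage your approach, this is the missing idea: make the \emph{domain} of the specification non-regular precisely when the encoded problem has a positive answer, rather than keeping it fixed.
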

\begin{proof}[Sketch]
    We reduce the $\mso_\ori$ satisfiability problem. First, consider
    the $\mso_\ori$-sentence $\phi_{\text{cfl}}$ of Ex.~\ref{ex:mso}
    defining a transduction with non-regular domain. 
    Then, given an $\mso_\ori$-formula
    $\psi$ of which one wants to check satisfiability, we define in $\mso_\ori$,
    using $\psi$ and $\phi_{\text{cfl}}$, 
    the transduction $\tau$ mapping
    any word $u_1\#u_2$ to $v_1\#v_2$ such that
    $(u_1,v_1)\in\sem{\psi}$ and
    $(u_2,v_2)\in\sem{\phi_{\text{cfl}}}$. 
    Then, $\dom(\tau)$ is non-regular
    iff it is nonempty. Since regular functions have regular domains, $\tau$ is realisable by a
    regular function iff $\dom(\tau) = \varnothing$
    iff $\sem{\psi}=\varnothing$ iff $\semo{\psi} = \varnothing$.
\end{proof}

\paragraph{The logic \lorigin for transductions}\label{subsec:lorigin}
Informally, the logic $\lorigin$ extends the two-variable logic
$\fo^2[\Sigma,\Gamma,\leqinput,\leqoutput,\ori]$ with any binary predicate definable in
$\mso[\leqinput,\Sigma]$, \ie any binary MSO predicate that is only
allowed to talk about the input positions, in order to capture regular
input properties. Formally, we denote by $\binmso[\leqinput,\Sigma]$
the set of $n$-ary predicates, $n\in\{0,1,2\}$, denoted by $\{ \phi\}$, where 
$\phi$ is an $\mso[\leqinput,\Sigma]$-formula
with at most $n$ free first-order variables. Over a word $u$, such a
formula $\phi$ defines an $n$-ary relation $R_{\phi,u}$ on its
position, and over an o-graph $(u,(v,o))$, we interpret $\{\phi\}$
by $R_{\phi,u}$. The logic $\lorigin$ is the two-variable fragment of first-order
logic over the output symbol predicates, the linear-order
$\leqoutput$, and all predicates in $\binmso[\leqinput,\Sigma]$,
\ie $\lorigin
:=\fo^2[\Gamma,\leqoutput,\ori,\binmso[\leqinput,\Sigma]]$. Modulo
removing the brackets $\{,\}$, it is a proper fragment of
$\mso_\ori$. 


\paragraph{Examples of \lorigin-transductions}\label{ex:lorigin}
The true formula $\top$ is satisfied by any o-graph.
Hence $\sem{\top}=\Sigma^+\times\Gamma^*$.
Let us now define several macros that will be useful throughout the
paper. The formula  $\inp(x)\equiv x\leqinput x$ (resp. 
    $\out(x)\equiv x\leqoutput x$) holds true if $x$
    belongs to the input word (resp. output word). Now
    for $\alpha\in\{\inp,\out\}$, we define the guarded quantifiers
    $\exists^\alpha x\ \phi$ and $\forall^\alpha x\ \phi$ as shortcuts for 
    $\exists x\ \alpha(x)\wedge \phi$ and $\forall x\
    \alpha(x)\rightarrow \phi$ (note that $\neg \exists^\alpha x\ \phi$
    is equivalent to $\forall^\alpha x\ \neg\phi$).

Preservation of
the input/output orders is expressed by the $\lorigin$-formula
$\forall^\out x,y\ 
(x\leqoutput y) \rightarrow  \{ x'\leqinput y'\}(\ori(x),\ori(y))
$. 
Note that we could equivalently
replace $x'$ and $y'$ by any variable (even $x$ and $y$), without changing the
semantics: the formula $x'\leqinput y'$ defines a binary relation
on the input word, which is used as an interpretation of the predicate $\{
x'\leqinput y'\}$ in o-graphs.  To ease the notations, any predicate
$\{\phi\}(t_1,t_2)$ where $\phi$ has two free variables $x_1$ and
$x_2$ may be sometimes written $\{ \phi[x_1/t_1,x_2/t_2]\}$, \ie $\phi$ in
which $t_i$ has been substituted for $x_i$. We keep the
brackets $\{$ and $\}$ to emphasise the fact that it is a binary MSO
formula which speaks about the input word. Hence, the previous
formula may also be written $\phi_{\text{pres}} \equiv \forall^\out x,y\ (x\leqoutput y) \rightarrow \{ \ori(x)\leqinput \ori(y)\}
    $. 

The fact that $o$ is a bijective mapping is expressible by some
$\lorigin$-formula $\phi_{\text{bij}}$, as seen in the introduction. %
Then, the shuffle transduction $\tau_{\text{shuffle}}$ 
is defined by
$
\phi_{\text{shuffle}} \equiv \phi_{\text{bij}} \wedge \forall^\out x 
\bigwedge_{\sigma\in \Gamma} \sigma(\ori(x))\rightarrow {\sigma}(x)
$.
    If the origin mapping is also required to be order-preserving, we
    get a formula defining identity: $\phi_{\text{id}}  \equiv
                                  \phi_{\text{shuffle}}\wedge \phi_{\text{pres}}$.

    Let us now consider the transduction $\tau:(ab)^n \mapsto
    a^nb^n$. By taking any bijective
    and label-preserving origin mapping, \eg as follows: 
\begin{center}
\vspace{-3mm}
\begin{tikzpicture}[>=stealth',auto,node distance=3cm,scale=0.8,every node/.style={scale=0.8}]
\node[] (i1) at (0,0) {$a$};
\node[] (i2) at (1,0) {$b$};
\node[] (i3) at (2,0) {$a$};
\node[] (i4) at (3,0) {$b$};
\node[] (i5) at (4,0) {$a$};
\node[] (i6) at (5,0) {$b$};
\node[] (i7) at (6,0) {$a$};
\node[] (i8) at (7,0) {$b$};

\node[] (j1) at (0,-1) {$a$};
\node[] (j2) at (1,-1) {$a$};
\node[] (j3) at (2,-1) {$a$};
\node[] (j4) at (3,-1) {$a$};
\node[] (j5) at (4,-1) {$b$};
\node[] (j6) at (5,-1) {$b$};
\node[] (j7) at (6,-1) {$b$};
\node[] (j8) at (7,-1) {$b$};

\draw [->] (j1) -- (i1) ;
\draw [->] (j2) -- (i3) ;
\draw [->] (j3) -- (i5) ;
\draw [->] (j4) -- (i7) ;
\draw [->] (j5) -- (i2) ;
\draw [->] (j6) -- (i4) ;
 \draw [->] (j7) -- (i6) ;
\draw [->] (j8) -- (i8) ;
 \end{tikzpicture}
\vspace{-3mm}
\end{center}

\noindent one can define $\tau$, as long as the input word is in
$(ab)^*$, which is regular, hence definable by some $\mso[\leqinput,\Sigma]$-formula
$\phi_{(ab)^*}$. Then, $\tau$ is defined by:
$\{\phi_{(ab)^*}\} \wedge \phi_{\text{bij}}\wedge
\bigwedge_{\alpha\in\{a,b\}}\forall^\out x\ 
\newline 
\alpha(x)\rightarrow
\{\alpha(\ori(x))\} \wedge \forall^\out x, y\ a(x)\wedge
                          b(y)\rightarrow x\leqoutput y$.
More generally, one could associate with any word $(ab)^n$ the set of
all well-parenthesised words of length $n$ over $\Gamma$. 

\begin{remark}\label{rm:nonclosure}
According to the previous examples, one can express in $\lorigin$ the
transduction $\tau_1$ defined as the shuffle over the language $a^*b^
*$, and also $\tau_2:(ab)^n\mapsto a^nb^ n$.
Hence the composition $\tau_2\circ \tau_1:a^nb^ n\mapsto a^ nb^ n$ has
a non-regular domain. However, as we will see in
Section~\ref{sec:properties}, the domain of an $\lorigin$-transduction
is always regular, which means that $\lorigin$-transductions are not
closed under composition.
\end{remark}

\section{Expressiveness, satisfiability and synthesis}\label{sec:properties}

\subsection{Expressiveness of  \lorigin}\label{subsec:expresslorigin}

Our first result is that $\lorigin$ can express all regular
functions. To show this result, we use their characterisation as deterministic
\mso-transducers~\cite{EH01}. We briefly recall that an
\mso-transducer is defined by some $\mso[\leq_\inp,\Sigma]$-formulas
interpreted over the input word structure (with linear order denoted
here by $\leq_\inp$), which specify the predicates of the output
word structure, the domain of which are copies of the input nodes. More
precisely, a constant $k$ specifies  the number 
of copies of the input word structure, $\mso[\leq_\inp]$-formulas $\phi_{pos}^c(x)$
specify whether the $c$th copy of node $x$ is kept in the
output structure, monadic formulas $\phi_{\gamma}^c(x)$
for each copy $c\in\{1,\dots,k\}$ and $\gamma\in \Gamma$, specify
whether the $c$th copy of input node $x$ is labelled $\gamma$ in the output
structure, and ordering formulas $\phi_{\leq_\out}^{c,d}(x,y)$,
say if the $c$th copy of $x$ is before the $d$th copy of
$y$ in the output. 

\begin{restatable}{theorem}{thmMSOexpr}\label{thm:msoexpr}
Any regular function is \lorigin-definable.
\end{restatable}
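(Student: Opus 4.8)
The plan is to start from a deterministic \mso-transducer $\mathcal{T}$ with $k$ copies, given by formulas $\phi_{pos}^c(x)$, $\phi_\gamma^c(x)$, and $\phi_{\leq_\out}^{c,d}(x,y)$, all in $\mso[\leqinput,\Sigma]$, and build an equivalent $\lorigin$-sentence $\Phi$. The central idea is that the output structure of $\mathcal{T}$ on input $u$ has domain contained in $\{1,\dots,k\}\times\dom(u)$, so every output position naturally "comes from" an input position --- exactly what an origin mapping records. So I would design $\Phi$ so that its o-graphs $(u,(v,o))$ are precisely those where $v$ and $o$ encode the output structure of $\mathcal{T}(u)$, with $o(y)$ being the input node of which $y$ is a copy. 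The one subtlety is that several output positions (up to $k$) may share the same origin, so I need to recover, inside an $\fo^2$ formula, \emph{which copy} a given output position represents.

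The key steps, in order. First, fix a way to identify copies: since at most $k$ output positions share an origin, and $\leqoutput$ linearly orders them, I can speak of "the output positions with origin $t$" and their relative $\leqoutput$-order; I will encode the copy index of an output position $x$ by an $\fo^2[\leqoutput,\ori]$ formula $\mathrm{copy}_c(x)$ expressing "$x$ has origin $\ori(x)$ and there are exactly $c-1$ output positions $y$ with $\ori(y)=\ori(x)$ and $y\leqoutput x$" --- but this counting must itself be $\fo^2$-expressible, which it is for each fixed constant $c\le k$ by nested guarded quantifiers over the two variables (this is the standard trick for counting up to a constant with two variables). Second, using $\mathrm{copy}_c$, I translate each ingredient of $\mathcal{T}$: (a) the number of output positions with origin $t$ equals the number of copies $c$ with $\{\phi_{pos}^c\}(t)$ true --- write this as a conjunction over the $2^k$ possible "kept-copy profiles" of $t$, each conjunct being a binary MSO predicate $\{\psi\}(\ori(x))$ combined with the appropriate $\mathrm{copy}$-count constraint; (b) labelling: $\forall^\out x\ \bigwedge_{c,\gamma}\ (\mathrm{copy}_c(x)\wedge\{\phi_\gamma^c\}(\ori(x)))\rightarrow\gamma(x)$, plus the dual ensuring $x$ gets \emph{only} that label; (c) output order: $\forall^\out x,y\ \bigwedge_{c,d}\ (\mathrm{copy}_c(x)\wedge\mathrm{copy}_d(y))\rightarrow\big(x\leqoutput y\leftrightarrow\{\phi_{\leq_\out}^{c,d}\}(\ori(x),\ori(y))\big)$ --- here $\{\phi_{\leq_\out}^{c,d}\}$ is exactly a binary MSO predicate on the input, so this is a legal $\lorigin$ formula. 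Third, add a conjunct forcing the domain to equal $\dom(\mathcal{T})$: this is a regular input property, hence a nullary predicate $\{\phi_{\dom}\}$ in $\binmso[\leqinput,\Sigma]$. Finally, argue both inclusions: if $(u,(v,o))\models\Phi$ then reading off copies via $\mathrm{copy}_c$ gives an isomorphism between $(v,\leqoutput)$ and the output structure of $\mathcal{T}(u)$; conversely $\mathcal{T}(u)$, equipped with the origin "copy-of", satisfies every conjunct.

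The main obstacle --- and the step I would spend the most care on --- is (b)/(c) above combined with the copy-identification: I must be sure that "$x$ is the $c$th copy" is expressible with only two variables and that, when $\phi_{\leq_\out}^{c,d}$ is plugged in as a binary input predicate applied to $(\ori(x),\ori(y))$, the resulting formula genuinely lies in $\fo^2[\Gamma,\leqoutput,\ori,\binmso[\leqinput,\Sigma]]$ (in particular that the copy-counting does not secretly need a third variable). The counting-up-to-$k$-with-two-variables argument is routine but must be stated; everything else is bookkeeping over the finitely many copy indices and label/profile combinations. A secondary point to check is consistency of the guessed output structure --- e.g.\ that $\phi_{\leq_\out}$ as given by $\mathcal{T}$ really is a linear order on the kept copies --- but since $\mathcal{T}$ is a genuine \mso-transducer this holds by assumption, so $\Phi$ need not re-verify it; it only needs to \emph{copy} $\mathcal{T}$'s specification faithfully.
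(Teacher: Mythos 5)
Your overall strategy coincides with the paper's: use the origin to tie each output position to the input node it is a copy of, identify an output position's rank among the outputs sharing its origin by two-variable counting along $\leqoutput$ (the paper's formulas $C_i(x)$), and import the transducer's formulas as $\binmso[\leqinput,\Sigma]$ predicates applied to $\ori(x),\ori(y)$. However, there is a genuine gap in how you recover the copy index. Your $\mathrm{copy}_c(x)$ only says that $x$ is the $c$th output position, in $\leqoutput$-order, among those with origin $\ori(x)$; this positional rank is in general \emph{not} the transducer's copy index $c$. At an input position $t$ the kept copies may be, say, $2$ and $5$, with $\phi_{\leq_\out}^{5,2}(t,t)$ placing copy $5$ before copy $2$; then the first output position with origin $t$ must carry copy $5$'s label, whereas your conjunct (b), $(\mathrm{copy}_1(x)\wedge\{\phi_{\gamma}^{1}\}(\ori(x)))\rightarrow\gamma(x)$ together with its ``only that label'' dual, imposes copy $1$'s data (a copy that need not even be kept at $t$), and your conjunct (c) likewise imposes the order constraints $\phi_{\leq_\out}^{c,d}$ for the wrong pair of copies. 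So the sentence $\Phi$ as specified does not define $f$.

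The missing device is exactly the indirection the paper introduces: input-\mso\ formulas $\phi_{c_1,\dots,c_l,v}(x)$ stating that the copies of $x$ kept in the output are exactly $c_1,\dots,c_l$, that they occur in this order (expressible from the $\phi_{\leq_\out}^{c_i,c_j}(x,x)$), and that they are labelled $v(1),\dots,v(l)$. With these as $\binmso[\leqinput,\Sigma]$ predicates, the $i$th same-origin output (your $\mathrm{copy}_i$, the paper's $C_i$) is matched with copy $c_i$, and the label and order conjuncts are taken over all sequences $c_1,\dots,c_l$ and $d_1,\dots,d_m$ and ranks $i\leq l$, $j\leq m$, using $\phi_{\leq_\out}^{c_i,d_j}(\ori(x),\ori(y))$ and forcing the labels $v(i)$ and $w(j)$. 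Your step (a) with ``kept-copy profiles'' is close but only constrains cardinalities, not this rank-to-copy correspondence, which is where the argument actually needs it; the rest of your plan (two-variable counting along the output order, the domain conjunct, the two inclusions) matches the paper's proof.
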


\begin{proof}[Sketch of proof]
    Let $f$ be a regular function. Since it is regular, there exists an \mso-transducer
    defining it. We convert it into an
    $\lorigin$-formula. First, it is not difficult to define an  
    $\mso[\leq_\inp,\Sigma]$-formula $\phi_{c_1,\dots,c_l,v}(x)$,
    $c_1,\dots,c_l\in\{1,\dots,k\}$ and $v\in\Gamma^*$, 
    which
    holds true if and only if in the output structure generated by the
    \mso-transducer, the copies of $x$ that are used are exactly
    $c_1,\dots,c_l$, they occur in this order in the output structure,
    and they are respectively labelled $v(1),\dots,v(l)$. In other
    words, input position $x$ generates the subword $v$ in the output
    structure. Then, we define $\lorigin$-formulas $C_i(x)$, for all
    $i\in\{1,\dots,k\}$ and $x$ an output node (in the o-graph),
    which hold, respectively, iff $x$ is the $i$th node (in the output order) whose
    origin is $\ori(x)$. This can be done using only two
    variables: $C_1(x)\equiv \out(x)\wedge \forall^\out y,\  y <_\out x
    \rightarrow  \{\ori(x) \neq \ori(y)\} $ and for $i\geq 1$, $C_{i+1}(x)\equiv$
$$ \begin{array}{ll}
 \exists^\out y\ (y<_\out x \wedge \{ \ori(x) {=} \ori(y)\} \wedge C_i(y)) \wedge \big(\forall y\ (y<x \wedge  & \\
   \{ \ori(x) {=} \ori(y)\})\to \neg (\exists^\out x\ (x<_\out y \wedge \{ \ori(x) {=} \ori(y)\} \wedge C_i(x))\big) & \\
       
\end{array}
$$
Finally, we construct the final \lorigin-formula (omitting some minor details)
as a conjunction, for all $m,l\leq
    k$, all copies $ c_1,\dots,c_l$ and $d_1,\dots,d_m$, all words
    $v\in\Gamma^l$ and $w\in\Gamma^m$, all $i\leq l$ and $j\leq m$, of
    the formulas:
    
$$
\begin{array}{lll}
\forall^\out x,y\ \\
 \ \big(\{     \phi^{c_i,d_j}_{\leq} (\ori(x),\ori(y))\wedge
    \phi_{c_1,\dots,c_l,v} (\ori(x)) \wedge \phi_{d_1,\dots,d_l,w}(\ori(y))\}  \\
\ \wedge C_i(x)\wedge C_j(y)\big) \rightarrow \tuple{x\leq_\out y\wedge v(i)(x) \wedge w(j)(y)}
\end{array}
$$
\end{proof}

\mso-transducers have been extended with nondeterminism
(\nmso-transducers or just \nmsot) to express
non-functional transductions, by using a set of 
monadic second-order parameters $X_1,\dots,X_n$ \citep{EH01}. Each formula of
an $\nmso$-transduction can use $X_1,\dots,X_n$ as free variables. Once
an interpretation for these variables as sets of positions has been fixed,
the transduction becomes functional. Therefore, the maximal number of 
output words for the same input word is bounded by the number of
interpretations for $X_1,\dots,X_n$. \nmso-transducers are
linear-size increase (the length of any output word is linearly
bounded by the length of the input word), hence the universal transduction $\Sigma^+\times
\Gamma^*$ is not definable in $\nmso$, while it is
$\lorigin$-definable by $\top$. The shuffle transduction is not
definable in \nmsot as well (this can be shown by cardinality
arguments). Conversely, it turns out that a transduction like $(u,vv)$ where $v$
is a subword of $u$ of even length is not \lorigin-definable whereas is it in \nmsot.

Rational relations are transductions defined by (non-deterministic)  finite
transducers
(finite automata over the product monoid $\Sigma^*\times \Gamma^*$),
denoted 1NFT~\cite{Bers79}. This class is incomparable with $\lorigin$: the shuffle
is not a rational relation, while the relation $\{a\}\times L$, where
$L$ is a non-$\fo^2$-definable regular language is not
$\lorigin$-definable. Indeed,
when all inputs are restricted to the word $a$, the expressive power
of $\lorigin$ is then restricted to $\fo^2[\leq_\out,\Gamma]$ over the
output.

Non-deterministic two-way transducers (2NFT), are incomparable to \nmso
~\cite{EH01}, and also to \lorigin, since they extend 1NFT and cannot
define the shuffle transduction. Fig.~\ref{fig:expr} depicts these
comparisons, summarised by the following proposition: 


\begin{restatable}{proposition}{propIncomp}
    The classes of $\lorigin$, 2NFT (resp. 1NFT), and \nmsot-definable
    transductions are
    pairwise incomparable.
\end{restatable}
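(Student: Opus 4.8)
The goal is to establish pairwise incomparability of the three classes: $\lorigin$-definable, 2NFT-definable (and 1NFT-definable), and \nmsot-definable transductions. The strategy is to exhibit, for each ordered pair of classes, a separating transduction belonging to one but not the other; by symmetry this amounts to six (or fewer, by reusing witnesses) concrete separations. Most of the witnesses have already been named in the surrounding text, so the work is to assemble them and justify the non-membership claims.

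The witnesses I would use are the following. \emph{(a)} The shuffle transduction $\tau_{\text{shuffle}}$ is $\lorigin$-definable (shown earlier via $\phi_{\text{shuffle}}$) but is not a rational relation, hence not 1NFT-definable, and is not 2NFT-definable either since 2NFTs have linear-size increase while $\tau_{\text{shuffle}}$ also fails a cardinality/combinatorial test, and not \nmsot-definable (again by cardinality: an \nmsot has only boundedly many images per input, whereas a word of length $n$ has $n!$ permutations in general, or more carefully $\binom{n}{k}$ distinct anagrams). So $\tau_{\text{shuffle}}$ separates $\lorigin$ from all of 1NFT, 2NFT, \nmsot. \emph{(b)} For the reverse direction against $\lorigin$: take $\{a\}\times L$ where $L$ is a regular language not definable in $\fo^2[\leq_\out,\Gamma]$ (such $L$ exists, e.g.\ a suitable language requiring three variables or counting modulo). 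This is trivially 1NFT-definable (hence 2NFT-definable), and it is \nmsot-definable (it is linear-size increase — constant size in fact — and the few output positions can be laid out on copies of the single input position with the membership in $L$ checked by an MSO formula on... wait, MSO cannot speak about the output, so one must instead realise $L$ as a finite union over its finitely many... no: $L$ is infinite). Let me reconsider: $\{a\} \times L$ is \emph{not} obviously \nmsot-definable since the output length is unbounded but the input is fixed; in fact it is \emph{not} \nmsot-definable because \nmsot is linear-size increase, so I should pick a different witness for \nmsot versus $\lorigin$. Use instead: the relation $(u, vv)$ where $v$ ranges over subwords of $u$ of even length — stated in the text to be \nmsot-definable but not $\lorigin$-definable. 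For 1NFT/2NFT versus $\lorigin$, the witness $\{a\}\times L$ with $L\notin\fo^2$ works as argued in the text ("when all inputs are restricted to the word $a$, the expressive power of $\lorigin$ is then restricted to $\fo^2[\leq_\out,\Gamma]$"). \emph{(c)} For 2NFT versus 1NFT: the word-reversal function $u\mapsto u^{\mathrm{rev}}$, or $u \mapsto uu^{\mathrm{rev}}$, is 2NFT-definable (indeed a deterministic two-way transduction) but not 1NFT-definable (1NFTs define rational relations, which cannot reverse). \emph{(d)} For 1NFT (hence also a class incomparable-from-below issue) versus \nmsot: a rational relation that is not \nmsot-definable — e.g.\ $\Sigma^+\times\Gamma^*$ restricted suitably, or $\{(a^n, b^m) : m \le n\}\cup\ldots$; concretely, the relation $\{(u,v) : v \in \Gamma^*, |v| \le |u|\}$ is 1NFT-definable but has unboundedly many images per input so is not \nmsot-definable. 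Conversely an \nmsot transduction that is not 1NFT: e.g.\ $u\mapsto u^{\mathrm{rev}}$ again (\nmsot can reverse since MSO-transducers freely reorder copies via $\phi^{c,d}_{\leq}$), which is not rational.

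The main obstacle — the step needing genuine argument rather than citation — is the collection of \emph{non-membership} proofs, and in particular (i) showing $\tau_{\text{shuffle}}\notin$ \nmsot and $\notin$ 2NFT (the cardinality argument: bound the number of images of $a^{n/2}b^{n/2}$ under an \nmsot by a polynomial in $n$ via the bounded number of parameter-valuations, contradicting the exponentially many anagrams; for 2NFT one uses that 2NFT images of a fixed input form a language of a single word per run... actually 2NFT is nondeterministic so one argues via the linear-size-increase bound plus a pumping/combinatorial argument on the number of accepting runs), and (ii) showing the $(u,vv)$-subword transduction is not $\lorigin$-definable, which presumably requires an Ehrenfeucht–Fraïssé argument for $\fo^2$ over o-graphs or a reduction to a known inexpressibility result for two-variable logic. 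I would handle the $\lorigin$ non-definability results by the observation already present in the text — restricting the input alphabet collapses $\lorigin$ to $\fo^2$ over the output — together with a standard $\fo^2$ EF-game, and the \nmsot non-definability results uniformly via the bounded-ambiguity / linear-size-increase properties of \nmsot. The remaining separations (2NFT vs.\ 1NFT and 1NFT vs.\ \nmsot) are classical and can be cited from \cite{EH01,Bers79}.

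Once the witnesses and their membership/non-membership statuses are in hand, the proposition follows immediately by inspecting all three unordered pairs; I would present it as a short table or a sentence per pair, referring to Fig.~\ref{fig:expr} for the overall picture.
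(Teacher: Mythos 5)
Your proposal follows the paper's general witness-by-witness strategy, but it has genuine gaps on the pair $(\text{2NFT}/\text{1NFT},\nmsot)$, in both directions. First, your witness for $\text{1NFT}\not\subseteq\nmsot$, the relation $\{(u,v)\mid |v|\leq |u|\}$, is in fact \nmsot-definable: with a bounded number of monadic parameters one can guess, for each input position, a letter of $\Gamma$ or ``erase'', and output the chosen letters in input order. Your justification is also quantitatively wrong: each input has only finitely many images under this relation, and an \nmsot admits up to $2^{c|u|}$ images (one per valuation of the $c$ parameters), not constantly or polynomially many. The witness the paper actually uses is the universal relation $\Sigma^+\times\Gamma^*$, which is rational (hence 1NFT- and 2NFT-definable) and $\lorigin$-definable by $\top$, yet not \nmsot-definable because a fixed input would need infinitely many outputs while an \nmsot produces only finitely many. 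Second, the direction $\nmsot\not\subseteq\text{2NFT}$ is not addressed at all: your only \nmsot-but-not-rational witness is reversal, which a two-way transducer computes trivially. The paper separates here with $\{(u,vv)\mid v \text{ a subword of } u\}$ (or its even-length variant), definable by an \nmsot with a single parameter $X$, but not by a 2NFT, since a 2NFT cannot reproduce the same nondeterministic selection over two passes of the input.

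The same quantitative slip also undermines your argument that $\tau_{\text{shuffle}}\notin\nmsot$: the number of parameter valuations is exponential, and over a fixed alphabet the number of anagrams of $a^{n/2}b^{n/2}$ is $\binom{n}{n/2}<2^{n}$, so the counting does not close; the paper's proof avoids relying on this fact and uses the universal relation for $\lorigin\not\subseteq\nmsot$. The side claim that 2NFTs are linear-size increase is false (e.g.\ $\{(u,u^n)\mid n\geq 0\}$ is 2NFT-definable), though this is not load-bearing for shuffle. Your remaining separations do match the paper's: $\lorigin\not\subseteq\text{1NFT},\text{2NFT}$ via $\tau_{\text{shuffle}}$, and $\text{1NFT},\text{2NFT},\nmsot\not\subseteq\lorigin$ via $\{a\}\times L$ with $L$ regular but not $\fo^2[\leq]$-definable and via the even-length-subword transduction (both asserted with the same level of informality as the paper). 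The 2NFT versus 1NFT separation you develop is not required by the statement.
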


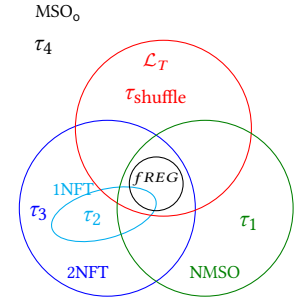
\begin{figure}[t]
\begin{center}
\def\firstcircle{(0,0) circle (1.8cm)}
\def\secondcircle{(55:2cm) circle (1.8cm)}
\def\thirdcircle{(0:2cm) circle (1.8cm)}

\begin{tikzpicture}[scale=0.65]
    \draw[color=blue] \firstcircle node[below left] {};
    \node[color=blue] (tnft) at (-.4,-1.3) {\footnotesize $\text{2NFT}$};
    \draw[color=cyan,rotate=15] (-0.1,-0.1) ellipse (1.1cm and 0.5cm);
    \node[color=cyan] (nft) at (-0.7,0.4) {\footnotesize $\text{1NFT}$};
    \draw[color=red] \secondcircle node [above] {};
    \node[color=red] (lt) at (1,3) {\footnotesize $\lorigin$};
    \draw[color=green!50!black]  \thirdcircle node [below right] {};
    \node[color=green!50!black] (nmsot) at (2.2,-1.3) {\footnotesize $\nmso$};
    \node (msot) at (1,0.6) {\tiny $fREG$};
    \draw (1,0.5) circle (0.55cm);

    \node (msor) at (-1,4) {\footnotesize $\mso_\ori$};

    \node[color=red]  (t1) at (1,2.3) { $\tau_{\text{shuffle}}$};
    \node[color=green!50!black] (t2) at (2.9,-0.3) { $\tau_1$};
    \node[color=blue] (t3) at (-1.4,-0.1) { $\tau_3$};
    \node[color=cyan] (t3) at (-0.3,-0.2) { $\tau_2$};
    \node (t9) at (-1.3,3.3) { $\tau_4$};
\end{tikzpicture}
\end{center}

\caption{Expressiveness of \lorigin, compared to $\mso_\ori$,
  non-deterministic \mso transductions, non-deterministic one-way and two-way
  transducers and regular functions. Here, $\tau_1=\{(u,vv)\mid v\text{ is a subword of $u$ of even length}\}$,
  $\tau_2=\{a\}\times (ab)^*$, $\tau_3=\{(u,u^n)\mid n\geq 0\}$ and $\tau_4=\{a^nb^n,(ab)^n\mid\ n>0\}$. }\label{fig:expr}
\end{figure}

\subsection{Satisfiability and equivalence problems}

Our first main contribution is the following result, whose proof is sketched in
Section~\ref{sec:sketch}. Here and throughout the paper, by
effectively we mean that the proof effectively constructs a finite object.

\begin{restatable}{proposition}{propRegDom}\label{prop:regDom}
    The input domain of any $\lorigin$
    -transduction is (effectively) regular. 
\end{restatable}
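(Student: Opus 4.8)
The plan is to show that membership of an input word $u$ in $\dom(\tau)$ for $\tau = \semo{\phi}$ with $\phi \in \lorigin$ can be tested by a finite automaton reading $u$. The key idea is that $\phi$ is an $\fo^2$ sentence, and for $\fo^2$ the relevant amount of information about the output word $v$ together with the origin map $o$ that influences the truth value of $\phi$ is \emph{bounded}: we only need to know, for each output position, a bounded "type" recording its output label, the MSO-type (over $\binmso[\leqinput,\Sigma]$) of its origin relative to the origins of other positions, and the $\leqoutput$-ordering. Concretely, I would proceed as follows. First, fix the finite set $\Psi$ of $\mso[\leqinput,\Sigma]$-formulas (of arity $\le 2$) appearing in $\phi$, together with the $\fo^2$ quantifier rank of $\phi$. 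For a candidate witness $(v,o)$ for $u$, associate to each output position $p$ its \emph{profile}: the output letter $v(p)$, and the $\Psi$-type of the origin $o(p)$ in $u$ — i.e. which unary $\{\psi\}\in\Psi$ hold at $o(p)$, and, crucially, for the binary ones such as $\{\ori(x)\leqinput\ori(y)\}$, the information is recovered once we know the relative $\binmso[\leqinput,\Sigma]$-type of the pair of origins. Since $u$ is fixed, an automaton reading $u$ can, by running the MSO automata for the formulas in $\Psi$, "guess-and-check" which positions of $u$ are used as origins and with which multiplicities, and what their pairwise binary types are.

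The second step is to make precise that the truth of $\phi$ in $(u,(v,o))$ depends only on the \emph{ordered sequence of profiles} of the output positions (an abstraction of the o-graph), and that, up to $\fo^2$-equivalence, it suffices to know for each profile whether it occurs $0$, $1$, \dots, or "at least $N$" times in that sequence, for $N$ bounded by the quantifier rank — this is the standard $\fo^2$ (indeed, two-variable) "bounded counting" argument: two output structures with the same number of positions of each profile up to threshold $N$, and the same ordering pattern between profile-classes, satisfy the same $\fo^2$ sentences of the given rank. (The linear order $\leqoutput$ forces us to be a bit careful: we keep as the abstraction the \emph{word} over the alphabet of profiles obtained by listing output positions in $\leqoutput$-order, and invoke that $\fo^2[\leqoutput,\text{unary}]$ over such words has finitely many types of a given rank — again classical.) Thus "there exists $(v,o)$ with $(u,(v,o))\models\phi$" becomes "there exists a profile-word $W$ such that (i) $W$ is realisable over $u$, meaning the profiles' origin-types can be consistently assigned to positions of $u$, and (ii) $W$, viewed as a labelled linear order, satisfies the rank-$r$ $\fo^2$ condition extracted from $\phi$."

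The third step assembles this into an automaton. Condition (ii) on $W$ is a regular property of $W$ (finitely many relevant profiles, classical for $\fo^2$ over words). Condition (i) — realisability of the multiset of origin-types demanded by $W$ inside $u$ — is checkable by a finite automaton on $u$: the automaton nondeterministically guesses, for each position of $u$, the set of output-profiles "anchored" there (this set is drawn from a finite universe), simultaneously runs the product of the deterministic MSO automata for all formulas in $\Psi$ to certify the unary and pairwise-binary origin-types, and records in its (finite) state the multiset-up-to-threshold of profiles seen so far, which it finally checks against a valid $W$. Projecting away (existentially quantifying over) the output profile-word $W$ — a finite object — preserves regularity. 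Hence $\dom(\tau)$ is recognised by a finite automaton, and the construction is effective since all the ingredients (MSO automata for $\Psi$, the finite set of profiles, the $\fo^2$-type computation) are computable from $\phi$. The main obstacle, and the part requiring the most care, is the second step: pinning down exactly which finite abstraction of an o-graph is both (a) coarse enough to have finitely many values computable from $u$ and (b) fine enough that $\fo^2$-truth of $\phi$ factors through it — in particular handling the interaction between $\leqoutput$-quantification and the origin function, and proving the Ehrenfeucht–Fraïssé–style lemma that equal abstractions give $\fo^2$-equivalent o-graphs; this is precisely the "sound and complete bounded abstraction" alluded to in the introduction and developed in Section~\ref{sec:sketch}.
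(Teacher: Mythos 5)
Your plan founders on the binary input predicates, and the place where it breaks is exactly the step you yourself flag as "requiring the most care." The profile you attach to an output position --- its output letter plus the $\binmso[\leqinput,\Sigma]$-type of its origin --- is a \emph{unary} abstraction, but the atoms of $\phi$ have the form $\{\psi\}(\ori(x),\ori(y))$ and depend on the type of the \emph{pair} of origins, which is not a function of the two unary types. The paper's own example $Bet_\sigma(x,y)=\exists z\ \sigma(z)\wedge x<z\wedge z<y$ already kills the claimed Ehrenfeucht--Fra\"iss\'e lemma of your second step: in a suitable input word two origins can have identical unary types of any fixed rank while some pairs have a $\sigma$ between them and others do not, so two o-graphs with literally the same profile word $W$ can disagree on $\phi$. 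Consequently "FO$^2$-truth of $\phi$ factors through $W$" is false for the abstraction as you define it, and the threshold-counting argument (which is correct for FO$^2$ over words with \emph{unary} predicates only) does not apply. The remark in your third step that the automaton "runs the product of the deterministic MSO automata for all formulas in $\Psi$ to certify the unary and pairwise-binary origin-types" is precisely the missing mathematics: certifying binary types between an unbounded number of anchored origins with a finite-state device is the whole difficulty, and it is not achieved by a product construction plus a multiset-up-to-threshold of profiles.

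For comparison, the paper resolves this by abstracting per \emph{input} position rather than per output position: after normalising to non-erasing transductions, putting $\phi$ in Scott normal form and turning it into universal/existential constraints, each input position gets a profile containing, for every output position, a \emph{clause} that records not a unary type but a set $R$ of pairs of states of query automata (automata with selecting pairs for the binary MSO predicates), together with a direction; boundedness comes not from FO$^2$ counting but from the SNF-specific observation that per column only the two extremal occurrences of each clause matter. The correctness of this abstraction then needs both a completeness lemma and a soundness lemma, the latter resting on consistency and \emph{maximality} conditions (so that the recorded state-pair sets are exactly the true ones) and on a linearisation argument for the induced partial order. If you want to salvage your outline, you would have to enrich your per-position profiles with exactly this kind of run information and prove the analogous consistency/maximality statements --- at which point you have essentially reconstructed the paper's proof rather than found a shorter one.
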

\begin{theorem}\label{Thm-LoriginSat} Over $o$-graphs, the
    logic $\lorigin$ has decidable satisfiability problem. 
\end{theorem}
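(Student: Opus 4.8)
The plan is to reduce satisfiability of an \lorigin-sentence $\phi$ to a satisfiability question over a classical finite-alphabet structure, leveraging Proposition~\ref{prop:regDom}. The key observation is that an o-graph $(u,(v,o))$ is almost a data word: reading the output word $v$ left to right, each output position $i$ carries a letter $v(i)\in\Gamma$ together with its origin $o(i)\in\dom(u)$, which acts as a \emph{data value} drawn from the linearly ordered set $(\dom(u),\leqinput)$. Since \lorigin is the two-variable fragment $\fo^2$ over the output order $\leqoutput$, the origin function $\ori$, and binary MSO predicates that only speak about the input, the only way a formula can access the input is, for a pair of output positions $x,y$, through the MSO$[\leqinput,\Sigma]$-definable relation that holds of $(o(x),o(y))$ in $u$. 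So I would first rephrase $\phi$ as a formula over data words where the ``data'' is the origin and the binary input predicates become binary predicates on data values, interpreted via an auxiliary regular input structure.

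Concretely, the key steps, in order:
\begin{enumerate}
\item \textbf{Separate input and output concerns.} For each binary MSO$[\leqinput,\Sigma]$-predicate $\{\psi\}$ occurring in $\phi$, build a finite automaton $A_\psi$ recognising the convolution language $\{\,u\otimes(i,j)\mid u\models\psi(i,j)\,\}$ over positions of $u$ (and similarly for the unary and nullary cases). This is just the standard MSO-to-automata translation on input words, so it is effective.
\item \textbf{Encode an o-graph as a finite-alphabet word plus a linear data order.} An o-graph satisfying $\phi$ is determined by: the input word $u\in\Sigma^+$; the output word $v\in\Gamma^*$; and the origin $o:\dom(v)\to\dom(u)$. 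I encode this as a ``data word'' whose $i$-th position carries the letter $v(i)$ and a data value $d_i := o(i)$, together with the input word $u$ sitting underneath. The constraints imposed by $\phi$ on a pair $(x,y)$ of output positions only depend on $v(x)$, $v(y)$, the relative order of $x,y$ in $\leqoutput$, and the type of the pair of data values $(d_x,d_y)$ with respect to the regular predicates — i.e. for each $A_\psi$, whether $u$ accepts $u\otimes(d_x,d_y)$.
\item \textbf{Abstract the input away.} Here is the crucial point: because $\phi$ is $\fo^2$, a Scott-normal-form argument reduces $\phi$ to a conjunction of $\forall x\forall y\,\alpha$ and $\forall x\exists y\,\beta$ statements where $\alpha,\beta$ are quantifier-free. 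Such a formula is satisfiable iff there is an input word $u$ and a sequence of ``1-types'' and ``2-types'' over the data (= over $\dom(u)$ with $\leqinput$) that is internally consistent and realisable. I would argue that it suffices to track, for each output position $x$, a \emph{bounded} amount of information about $o(x)$: namely the ``local profile'' of the input position $o(x)$ — its label, and the MSO-types (up to the quantifier rank occurring in the $A_\psi$) of the prefix and suffix of $u$ split at $o(x)$, plus which other output positions share the same origin and in what order. Since there are finitely many such profiles, the o-graph is faithfully abstracted by a finite-alphabet structure; the linear order on origins is recovered from the ordering of the prefix/suffix types.
\item \textbf{Reduce to decidable data-word / automata satisfiability.} The resulting condition is: does there exist (a) a regular input word $u$, whose existence and whose set of realisable position-profiles is governed by a finite automaton (this is where Proposition~\ref{prop:regDom} — or rather the construction behind it — is reused: the set of consistent profile-sequences is regular), and (b) an output data word over the finite profile alphabet consistent with the $\forall\forall$ and $\forall\exists$ constraints of $\phi$? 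The $\forall\forall$ part is a safety condition checkable by an automaton reading the data word with the ordered-data structure; the $\forall\exists$ part is handled exactly as in the known decidability of $\fo^2$ over ordered data words, e.g. the $\fo^2[\leq,\preceq,S_\preceq]$ result of \cite{SZ12}, to which the paper explicitly alludes. Chaining these decidable checks gives decidability of satisfiability for $\phi$.
\end{enumerate}

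\textbf{Main obstacle.} The hard part will be Step 3–4: making precise the bounded abstraction of the input. One must show that the only thing an \lorigin-formula ``feels'' about an input position $p=o(x)$ is a bounded piece of data — its label and the regular (MSO-bounded-rank) types of $u$ to the left and right of $p$ — and that this abstraction is \emph{sound and complete}, i.e. a profile-sequence is realisable as an actual o-graph iff it passes the finite checks. Soundness is a routine composition argument on MSO types; completeness requires carefully rebuilding a genuine input word $u$ and origin mapping from an abstract profile sequence, which is where the two-variable restriction is essential (with three variables, as Proposition~\ref{prop:undecfo2} shows, no such bounded abstraction exists). Once this abstraction is in place, decidability follows by combining regularity of the input-profile language with the known decidability of $\fo^2$ satisfiability over linearly ordered data words; indeed, the construction behind Proposition~\ref{prop:regDom} already provides most of the input-side machinery, and I expect the full regular-synthesis result sketched in Section~\ref{sec:sketch} to subsume this theorem as the special case of testing nonemptiness of the synthesised transducer's domain.
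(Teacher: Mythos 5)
Your overall architecture is the same as the paper's: reduce to Scott normal form, abstract each o-graph by a bounded per-input-position ``profile'', show the set of admissible profile sequences is regular (this is exactly the machinery behind Prop.~\ref{prop:regDom}), prove the abstraction sound and complete, and obtain satisfiability as nonemptiness of the regular domain (the paper indeed derives Thm.~\ref{Thm-LoriginSat} from the synthesis theorem, Thm.~\ref{thm:unif}). However, the concrete abstraction you propose in Step 3 would fail. You suggest recording, for each output position $x$, only the label of $\ori(x)$ and the bounded-rank MSO types of the prefix and suffix of $u$ split at $\ori(x)$, plus the relative order of outputs sharing the same origin. This per-position (``1-type'') information, even together with the relative order of the origins, does not determine the truth of a binary predicate $\{\psi\}(\ori(x),\ori(y))$ when $\ori(x)\neq\ori(y)$: the 2-type of a pair of positions is not a function of the two marked 1-types. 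A concrete witness inside $\binmso[\leqinput,\Sigma]$ is the predicate ``the distance between $\ori(x)$ and $\ori(y)$ is even'' (or already ``some $\sigma$ occurs strictly between them'', as in the paper's $Bet_\sigma$ example): in a long homogeneous input all middle positions have the same bounded-rank marked type, yet the predicate's value varies with the pair. This is precisely why the paper's profiles do not store prefix/suffix types but, for (the extremal occurrences of) every output position, a \emph{relation} $R$ of pairs of query-automaton states linking the current input position to that output's origin, together with a direction, and why a successor/consistency relation between adjacent profiles (plus a maximality condition) is needed to guarantee that these relations are realised by actual accepting runs. That relational, run-based content and its propagation along the sequence is the missing idea; without it completeness of your abstraction breaks down.

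A second, smaller inaccuracy: you cannot invoke the decidability of $\fo^2[\leq,\preceq,S_\preceq]$ from \cite{SZ12} as a black box for the $\forall\exists$ constraints, because after your encoding the data predicates are arbitrary binary $\mso$ predicates over the ordered data, which strictly exceed $\preceq$ and $S_\preceq$ (the paper explicitly notes it could only take inspiration from \cite{SZ12}, not reuse it). In the paper the $\forall\exists$ and $\forall\forall$ constraints are instead compiled into existential and universal constraints checked \emph{inside} each profile (validity), exploiting the key observation that keeping only the two extremal occurrences of each clause per column preserves satisfaction; this is what makes the profiles bounded while still handling the general binary $\mso$ predicates.
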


This latter theorem is a consequence of Thm~\ref{thm:unif}.
We point out that it holds also for origin-free
transductions, because given an $\lorigin$-formula $\phi$,
$\sem{\phi}=\varnothing$ iff $\semo{\phi}=\varnothing$. 
The \emph{equivalence
problem} asks, given two formulas $\phi_1,\phi_2$, whether 
$\semo{\phi_1} = \semo{\phi_2}$, \ie whether $\phi_1\leftrightarrow
\phi_2$ is universally true. As a consequence of
Thm.~\ref{Thm-LoriginSat} and
closure under negation of $\lorigin$ we have the decidability of the equivalence problem for $\lorigin$.

With respect to satisfiability, $\lorigin$ seems to lie
at the decidability frontier. Adding just the successor relation over
outputs already leads to undecidability, by Prop.~\ref{prop:undecfo2}.





\subsection{Regular synthesis of \lorigin and consequences}

Our main result is the regular synthesis of
$\lorigin$-transductions. 

\begin{restatable}[Regular synthesis of \lorigin]{theorem}{thmUnif}
\label{thm:unif}
Let $\phi$ be an \lorigin formula. The transduction defined by $\phi$ is
(effectively) realisable by a regular function.
\end{restatable}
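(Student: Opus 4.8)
The plan is to build, from a given \lorigin-sentence $\phi$, a deterministic two-way transducer whose domain equals $\dom(\sem\phi)$ and whose o-graphs all satisfy $\phi$. Since $\phi$ is over the two-variable fragment, the key is to control the "shape" of a witnessing o-graph by a bounded amount of information. The first step is a \emph{normalisation} of $\phi$: using standard Scott-normal-form arguments for $\fo^2$, rewrite $\phi$ into a Boolean combination of statements about the $\mso_\ori$-type of each output position $x$ (its label, the binary MSO relations its origin has with the origins of the other output positions, etc.), so that satisfaction is determined by counting, up to a fixed threshold, how many output positions realise each such local type. In parallel, invoke Proposition~\ref{prop:regDom} to obtain a DFA $\mathcal A_{\dom}$ for $\dom(\sem\phi)$; the transducer we synthesise will first check membership in $\dom(\sem\phi)$ on its input (rejecting otherwise), which takes care of requirement~(1).

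The heart of the argument is a \emph{bounded abstraction} of o-graphs, as announced in the introduction: for a fixed input word $u$, an o-graph over $u$ satisfying $\phi$ can be summarised by (i) for each input position $i$, the regular "single-origin language content" — i.e. which finite word over $\Gamma$, up to a bounded-length abstraction, is produced at $i$ — together with (ii) the relative output-ordering between the blocks produced at different input positions, again abstracted to what $\fo^2$ can see (before/after, interleaving patterns up to threshold counting). The crucial combinatorial lemma is that if $\phi$ is satisfiable over $u$ at all, then it is satisfiable by an o-graph whose abstraction is of bounded size: each input position produces a word of length bounded by a computable function of $|\phi|$, and the interleaving pattern is likewise bounded. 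This is where I expect the main obstacle to lie — proving a small-model property \emph{parameterised by the input word}, i.e. that satisfiability of $\phi$ relative to a fixed $u$ is witnessed by a "short" output with a simple origin pattern. The idea is a cut-and-paste / pumping argument on the output word: if some input position produces a long block, or if the interleaving is complicated, two output positions with the same $\fo^2$-type (and compatible origins) can be merged or a segment deleted without affecting satisfaction of the Scott-normal-form conjuncts, since those only count types up to a threshold and only constrain pairs via the (fixed) binary MSO input predicates. One must be careful that deletion respects the $\forall\exists$ conjuncts — this is handled by keeping at least threshold-many witnesses of each type, exactly as in the classical $\fo^2$ small-model proof, now carried out fibrewise over $u$.

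Once the bounded abstraction is in place, the construction of the transducer is routine but needs care. Build an MSO (equivalently, two-way automaton) property $\Psi$ over the input word alone that asserts: "there exists a choice of bounded abstraction — a bounded-length output block per position plus a bounded interleaving pattern — that is globally consistent with $\phi$." Consistency is itself expressible in $\mso[\leqinput,\Sigma]$ because all the binary constraints of $\phi$ that mention input positions are, by definition of \lorigin, $\mso[\leqinput,\Sigma]$-predicates, and the threshold-counting conditions on types translate to monadic conditions on the (boundedly many) colours we guess on the input. By Büchi–Elgot–Trahtenbrot, $\Psi$ is recognised by a finite automaton $\mathcal B$; moreover $L(\mathcal B) = \dom(\sem\phi)$ by the small-model property together with soundness of the abstraction. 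Now determinise $\mathcal B$ and equip it with an output morphism: since the abstraction fixes, for each input position, exactly which word over $\Gamma$ is produced there and where it is inserted relative to the others, a deterministic two-way transducer can, after choosing (deterministically, e.g. lexicographically least) an accepting run of $\mathcal B$, make a bounded number of left-to-right passes over the input — one pass per "layer" of the interleaving pattern — printing at input position $i$ the portion of its block scheduled for that layer, thereby emitting exactly the abstracted output word in the correct order, with origins as prescribed. The resulting o-graph satisfies $\phi$ by completeness of the abstraction, and the transducer's domain is $L(\mathcal B) = \dom(\sem\phi)$, establishing requirements~(1) and~(2). Finiteness and effectiveness of all constructions is clear since every bound is a computable function of $|\phi|$.
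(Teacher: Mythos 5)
Your overall architecture --- Scott normal form, a bounded per-input-position abstraction of witnessing o-graphs, a regular (MSO/automaton) check on the input that such an abstraction can be chosen consistently, and a transducer that then prints the abstracted output --- matches the paper's strategy up to the last step, and your per-position small-model claim (each input position needs only boundedly many output positions, kept as extremal witnesses of each type) is essentially the paper's ``key property'' underlying its profile abstraction. The genuine gap is in how the output \emph{order} is produced. Your combinatorial lemma asserts that the interleaving pattern between the blocks of different origins is bounded, so that the output can be emitted in a bounded number of left-to-right passes, ``one pass per layer''. This is false: the order-reversing bijection (mirror) is \lorigin-definable (bijective, label-preserving origin with $\forall^\out x,y\ x\leqoutput y \rightarrow \{\ori(y)\leqinput \ori(x)\}$) and functional, so over an input $u$ the output order is \emph{forced} to be the reverse of the origin order; any left-to-right pass emits a subsequence with non-decreasing origins, hence a decreasing-origin output of length $|u|$ needs $|u|$ such passes. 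So no bounded ``layer'' abstraction of the global block ordering exists, and the final step of your construction fails precisely where the realising transducer must use genuine two-way (or MSO-transducer) power rather than bounded left-to-right sweeps.

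What the paper does instead at this point: from a good (valid, maximally consistent) profile sequence it builds an abstract graph whose maximal paths/connected components are in bijection with output positions, shows via the no-crossing consistency conditions that the induced relation on these components is a partial order, and --- crucially --- proves a soundness lemma stating that \emph{every} linearisation of this partial order yields a model of $\phi$ with the prescribed origins. Regular synthesis is then obtained by composing four transductions: a uniformisation of the rational relation mapping $u$ to one of its good profile sequences, two MSO transductions constructing the abstract graph and the locally ordered DAG of components, and Courcelle's MSO-definable topological sort of locally ordered DAGs, using closure of \msot under composition (origin-preserving). To repair your proof you would have to replace the ``bounded interleaving + bounded left-to-right passes'' step by such an MSO-definable linearisation of an a priori unbounded partial order on output blocks, and the freedom to output \emph{any} linearisation (your abstraction must be shown sound for all of them, not just one) is exactly what makes that step workable.
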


In other words, from any specification $\phi$ written in $\lorigin$, one can
synthesise a functional transduction  $f$,  in the proof
represented by an \mso-transducer $T$, such that $\dom(f) =
\dom(\sem{\phi})$ and $f = \sem{T}\subseteq \sem{\phi}$. Moreover, it turns out
that the constructed transducer $T$ defines a functional
o-transduction $\semo{T}$ such that $\semo{T}\subseteq
\semo{\phi}$. In other words, $T$ does not change the origins
specified in $\phi$. Since we rely on MSO-to-automata translation in the
construction, the size of the constructed \mso-transducer is
non-elementary in the size of $\phi$. %
%
%
One of the main consequences of the synthesis and expressiveness results is a new
characterisation of the class of regular functions. 

\begin{theorem}[New characterisation of regular functions]
    Let $f : \Sigma^*\rightarrow \Gamma^*$. Then, $f$ is regular iff $f$ is
    \lorigin-definable. 
\end{theorem}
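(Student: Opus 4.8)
The plan is to read off the equivalence by assembling the two directions we have already proved, adding only a one-line functionality argument. Concretely, the $(\Rightarrow)$ direction is exactly Theorem~\ref{thm:msoexpr}: if $f$ is a regular function then it is $\lorigin$-definable (in fact that construction even preserves origins, so $\semo{\phi}$ can be taken to be the $o$-graphs of the given deterministic two-way transducer). Nothing further is needed for this implication.

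For the $(\Leftarrow)$ direction, I would argue as follows. Suppose $f = \sem{\phi}$ for some $\lorigin$-sentence $\phi$. By Theorem~\ref{thm:unif} (regular synthesis of $\lorigin$), the transduction $\sem{\phi}$ is realised by a regular function $g$; by definition of "realises", this means $\dom(g) = \dom(\sem{\phi})$ and $(u,g(u)) \in \sem{\phi}$ for every $u \in \dom(g)$. Now invoke the hypothesis that $f = \sem{\phi}$ is functional: for each $u \in \dom(f)$ there is a \emph{unique} output word associated with $u$ by $\sem{\phi}$, namely $f(u)$. Since $(u,g(u)) \in \sem{\phi}$ and $u \in \dom(g) = \dom(f)$, uniqueness forces $g(u) = f(u)$. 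As this holds on the common domain $\dom(f) = \dom(g)$, we conclude $f = g$, and hence $f$ is regular.

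I do not expect a genuine obstacle, since both nontrivial inclusions are prior theorems in the paper; the only subtlety worth spelling out is that the "realises" relation of Theorem~\ref{thm:unif} in general only matches domains and inclusion $\sem{T}\subseteq\sem{\phi}$, but it pins the output down \emph{exactly} once the specification $\phi$ is already known to be functional, so the synthesised regular function must coincide with $f$ rather than merely refine it. It is also worth remarking that, because we assume $f$ functional as a hypothesis, we do not even need here the (decidable) functionality test for $\lorigin$-formulas mentioned in Section~\ref{sec:properties}; that test is only relevant when one wants to recognise, among arbitrary $\lorigin$-sentences, those that happen to define functions.
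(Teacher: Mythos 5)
Your proposal is correct and follows exactly the paper's own argument: the forward direction is Theorem~\ref{thm:msoexpr} and the backward direction is the synthesis result (Theorem~\ref{thm:unif}), with your explicit remark that functionality of $\sem{\phi}=f$ forces the synthesised realisation to coincide with $f$ being precisely the step the paper leaves implicit in its one-line proof.
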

\begin{proof}
    By Thm.~\ref{thm:msoexpr}, $f$ regular implies $f$ is
    $\lorigin$-definable, 
    which implies by Thm.~\ref{thm:unif} that $f$ is regular.
\end{proof}

A consequence of synthesis is the following positive
result on functionality:

\begin{corollary}[Functionality]
    Given an $\lorigin$-sentence $\phi$, it is decidable whether the
    o-transduction $\semo{\phi}$ is functional.
\end{corollary}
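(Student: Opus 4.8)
The plan is to reduce the functionality question for $\semo{\phi}$ to an equivalence test between $\lorigin$-definable o-transductions, which is decidable by Thm.~\ref{Thm-LoriginSat} together with closure of $\lorigin$ under negation. Recall that $\semo{\phi}$ is functional iff for every input word $u$ there is at most one pair $(v,o)$ with $(u,(v,o))\in\semo{\phi}$. Equivalently, $\semo{\phi}$ fails to be functional exactly when there exists an o-graph $(u,(v_1,o_1))\in\semo{\phi}$ and an o-graph $(u,(v_2,o_2))\in\semo{\phi}$ with $(v_1,o_1)\neq(v_2,o_2)$, \ie the two o-graphs share the same input word but differ in either their output word or their origin mapping. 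So the task is to express in $\lorigin$ the existence of such a ``witness of non-functionality''.

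First I would take two disjoint copies of the output alphabet, $\Gamma_1$ and $\Gamma_2$ (together with two fresh copies of the output order), and build an $\lorigin$-formula $\psi$ over the signature with input alphabet $\Sigma$ and output alphabet $\Gamma_1\uplus\Gamma_2$ which says: the restriction of the o-graph to $\Gamma_1$-positions (forgetting the relabelling) satisfies $\phi$, the restriction to $\Gamma_2$-positions satisfies $\phi$, and the two restrictions are \emph{not} the same o-graph. The first two conjuncts are straightforward: $\phi$ is an $\fo^2$ formula over $\leqoutput$, the output-label predicates, $\ori$ and the binary $\mso[\leqinput,\Sigma]$ predicates, so relativising its output quantifiers to the $\Gamma_1$-positions (resp.\ $\Gamma_2$-positions) and renaming labels yields an $\lorigin$ formula — note that the input part is untouched, so the binary $\mso$ predicates are reused verbatim. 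For the inequality conjunct one observes that two o-graphs on the same input differ iff either their output lengths differ, or there is some rank $k$ at which the $k$th output position has a different label or a different origin in the two graphs. Because output positions within one copy are linearly ordered, ``same rank'' can be captured by a two-variable formula using $\leqoutput$: I would express that the order isomorphism between the $\Gamma_1$-chain and the $\Gamma_2$-chain (which exists iff they have equal length) is not label-and-origin preserving, and separately handle the length-mismatch case by saying one chain is strictly longer (expressible with $\fo^2$ over the two output orders, e.g.\ the $i$th $\Gamma_1$-position exists but the $i$th $\Gamma_2$-position does not, encoded via an $\fo^2$-definable injective order-embedding condition). All of this stays within $\fo^2$ over $\leqoutput$ and $\ori$, hence $\psi\in\lorigin$, and $\semo{\phi}$ is non-functional iff $\semo{\psi}\neq\varnothing$, which is decidable by Thm.~\ref{Thm-LoriginSat}.

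The step I expect to be the main obstacle is writing the ``different o-graph at the same rank'' and ``different length'' conditions using only \emph{two} first-order variables. The subtlety is that, with two output copies living in the same structure but ordered by two separate linear orders, relating the $k$th position of one copy to the $k$th position of the other is not obviously $\fo^2$-expressible, since one cannot simply count. The clean way around this is to interleave the two output copies into a \emph{single} linear order $\leqoutput$ on $\Gamma_1\uplus\Gamma_2$ in which the two subsequences alternate, forcing the $i$th $\Gamma_1$-position and the $i$th $\Gamma_2$-position to be consecutive; then ``same rank'' becomes ``adjacent in $\leqoutput$ with one in each copy'', an $\fo^2$ property, and length mismatch becomes ``the last position belongs to $\Gamma_1$'' (or $\Gamma_2$), again $\fo^2$. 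One then has to check that this interleaving can itself be forced by an $\lorigin$ constraint — it can, since alternation of two unary predicates along a linear order is $\fo^2$-definable — and that relativising $\phi$ to the (non-contiguous) $\Gamma_1$-positions is still legitimate; this is fine because $\phi$ only uses $\leqoutput$, which restricts correctly to any subset. Once the encoding is fixed, correctness of the reduction is immediate, and decidability follows from Thm.~\ref{Thm-LoriginSat}.
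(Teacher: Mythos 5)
Your overall reduction strategy---encode a non-functionality witness as a single o-graph over the doubled output alphabet $\Gamma_1\uplus\Gamma_2$ and test satisfiability of an $\lorigin$-formula---breaks down exactly at the step you yourself flag as the main obstacle, and the fix you propose does not work. To make ``the two restrictions differ'' local, you need the $\Gamma_1$- and $\Gamma_2$-positions to be rank-synchronised, and you propose to force this by requiring the combined output to alternate between the two copies and then to compare \emph{adjacent} positions. But neither ingredient is available in $\lorigin$: (i) adjacency in $\leqoutput$ (the successor relation) is not definable in the two-variable fragment from the order alone---saying ``no position lies strictly between $x$ and $y$'' requires a third pinned variable---and it cannot simply be added as a predicate, since Prop.~\ref{prop:undecfo2} shows that $\fo^2$ with $S_\out$ over o-graphs is undecidable; (ii) the alternation constraint itself is an $(ab)^*$-type condition, which is not $\fo^2[\leqoutput]$-definable (its syntactic monoid lies outside DA); the paper uses precisely this fact when it observes that $\{a\}\times(ab)^*$ is not $\lorigin$-definable. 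The origin function and the binary MSO input predicates cannot rescue the encoding either: when the input is a single letter they carry no information, and $\lorigin$ collapses to $\fo^2[\leqoutput,\Gamma_1\uplus\Gamma_2]$, in which your formula $\psi$ would have to live. Worse, without some enforced synchronisation, ``the $\Gamma_1$-projection equals the $\Gamma_2$-projection'' is not even a regular property of the combined output word (already for a unary $\Gamma$ it amounts to counting), so no weakening of the difference condition inside $\lorigin$ can be both sound and complete. Hence the formula $\psi$ you want does not exist, and the reduction to plain satisfiability fails.

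This is why the paper does not prove functionality by satisfiability alone but as a corollary of the synthesis theorem: by Thm.~\ref{thm:unif} one synthesises a regular function $T$ with $\semo{T}\subseteq\semo{\phi}$ and $\dom(\semo{T})=\dom(\semo{\phi})$; then $\semo{\phi}$ is functional iff $\semo{\phi}\subseteq\semo{T}$, and this inclusion is decidable because $T$ converts back (origin-preservingly) into an $\lorigin$-formula $\psi$ by Thm.~\ref{thm:msoexpr}, so one only has to check that $\phi\wedge\neg\psi$ is unsatisfiable (Thm.~\ref{Thm-LoriginSat}). If you want to keep the flavour of your argument, the synthesised function is exactly the device that replaces your missing rank-synchronisation: it pins down one canonical output per input, so that ``some other output exists'' becomes expressible without comparing two unconstrained outputs position by position.
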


\begin{proof}
To test whether $\semo{\phi}$ is functional, first realise it
by a regular function (Thm.~\ref{thm:unif}), defined \eg by a
deterministic two-way transducer $T$, and then test
whether $\semo{\phi} \subseteq \semo{T}$. The latter is decidable since $T$ can
be converted (while preserving origins) into an equivalent \lorigin-formula  $\psi$ (Thm.~\ref{thm:msoexpr})
 and test that $\phi\rightarrow \psi$ is satisfiable (Thm.\ref{Thm-LoriginSat}).
\end{proof}



\begin{figure*}[t]
\begin{center}
\includegraphics[scale=0.19]{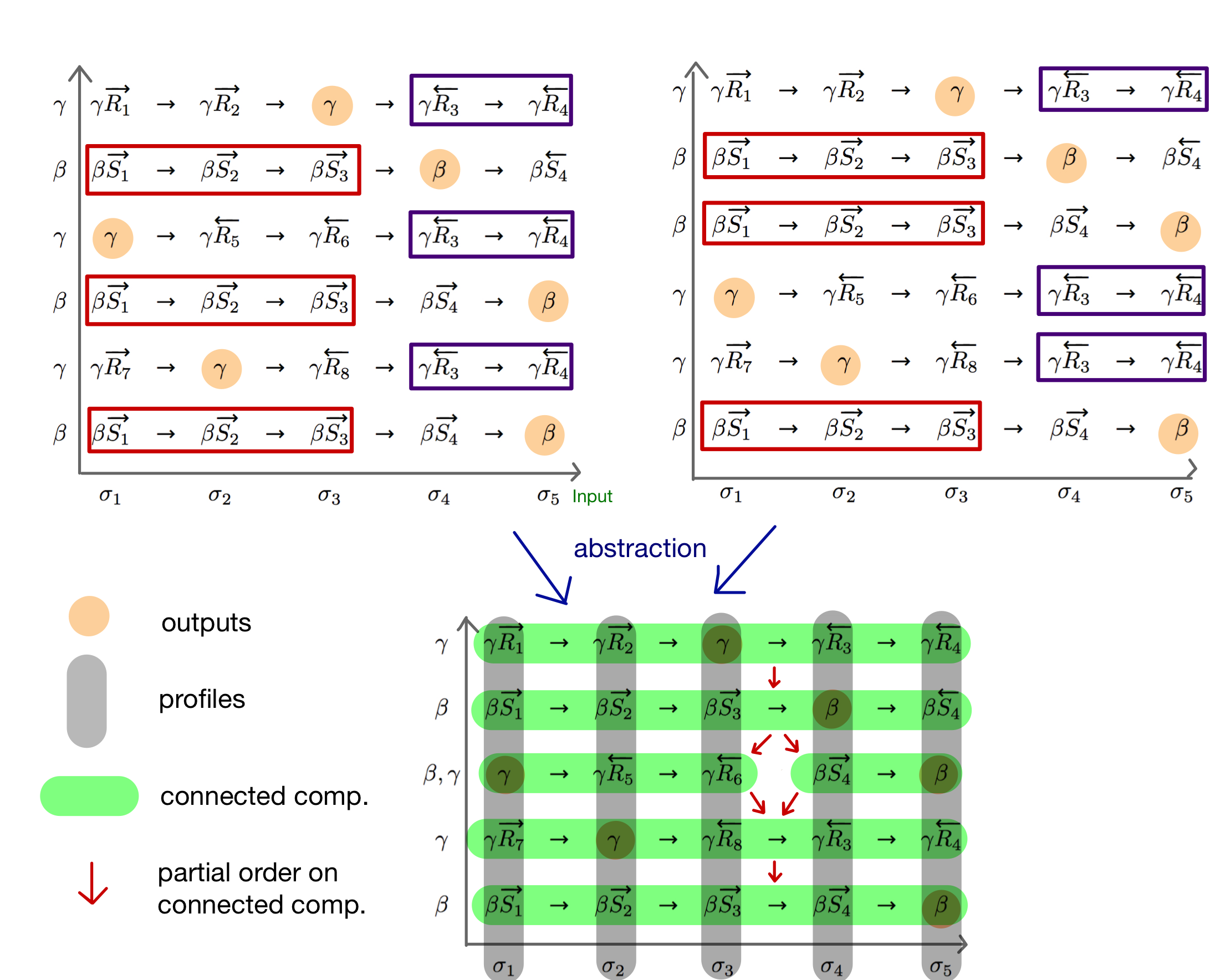}
\caption{The profile abstraction and the graph of
  clauses}\label{fig:abstraction}
\end{center}
\end{figure*}

\section{Domain regularity and synthesis: sketch of proofs}\label{sec:sketch}
In this section, we sketch  the proofs of Prop.~\ref{prop:regDom} (domain
regularity of $\lorigin$-transductions) and
Thm.~\ref{thm:unif} (regular synthesis). These two results are
based on common tools which we now describe. We let $\phi$ be an
$\lorigin$-sentence over input and output alphabets $\Sigma,\Gamma$
respectively. We assume that $\lorigin$ defines a non-erasing
o-transduction, \ie an o-transduction which uses every input position
at least once (the origin mapping is surjective). This can be done
\wog, \ie one can transform in polynomial time any
$\lorigin$-sentence into a non-erasing one (by adding dummy output
positions the origins of which are the erased input positions), while preserving the
domain and set of regular functions realising it (modulo the previous
encoding).

\paragraph{Scott normal form} The $\lorigin$
formula $\phi$ is then transformed into a Scott normal form (SNF),
a standard transformation when dealing with two-variable
logics (see for instance~\cite{GO99}).
By enriching the alphabet, the transformation allows to restrain ourself to the easier setting of formulas of quantifier-depth two.
Precisely, we obtain a formula of the form: $$\forall^\out x,y\ \psi(x,y)\wedge\bigwedge_{i=1}^m\forall^\out x\exists^\out y\ \psi_i(x,y)$$
where the formulas $\psi$ and $\psi_i$, $i=1,\dots,m$, are
quantifier free, but over an extended output alphabet $\Gamma\times
\Gamma'$ (where $\Gamma'$ may be exponential in $\phi$). 
These subformulas can also still contain binary MSO
predicates over the input, which are not restricted in any way. 
Up to projection over $\Gamma$, the SNF formula accepts the
same models as $\phi$, and hence we now just assume that $\phi$ 
is a formula of the above form over an input alphabet $\Sigma$ and
output alphabet $\Gamma$. In the full proof (Appendix), the SNF is
further equivalently transformed into what we call a system of universal and
existential constraints (in the vein of \cite{SZ12}), which are easier
to manipulate in the proofs than the formulas $\psi$ and $\psi_i$, but
are not necessary at a conceptual level, so we do not include them in
the sketch.


\paragraph{The profile abstraction} We define an abstraction
which maps any o-graph $(u,(v,o))$ to a sequence of $|u|$ tuples
$\lambda_1\dots \lambda_{|u|}$ called \emph{profiles}, one for each
input position. A profile contains bounded information (bounded in
the size of $\phi$) about the binary input \mso predicates, the input
symbol and some output positions. To explain this abstraction, we
first informally define what we call the \emph{full graph} of an
o-graph $(u,(v,o))$. Intuitively, the full graph contains a node for
each pair $(p,p')\in\dom(u)\times \dom(v)$, labelled by some
information called \emph{clause} about the ``effect'' of position $p'$
at position $p$. To understand it, it is convenient to
see the full graph as a two-dimensional structure with the input
position as $x$-axis (ordered by $\leq_\inp$) and the output position
as the $y$-axis (ordered by $\leq_\out$). Figure~\ref{fig:abstraction} shows
such a representation. \Eg the
top-left figure represents the full graph of an $o$-graph which translates
$\sigma_1\dots \sigma_5$ into $(\beta\gamma)^3$ (for instance, the
origin of the last output position, labelled $\gamma$, is the third
input position, labelled $\sigma_3$),  plus some additional
information which we now detail.

Each row contains a single node labelled in $\Gamma$, corresponding to
an output position, and placed according to its origin. Let $(p,p')$
(output position $p'$ with origin $p$) be such a node, labelled by
some $\gamma\in\Gamma$. This node generates an horizontal trace around it, whose
elements are of the form $\gamma
\overleftarrow{R}$ or $\gamma \overrightarrow{R}$. 
The arrows indicate in which direction the $\gamma$-labelled node
is. The elements $R$, say at coordinates $(s,p')$, are
$\mso[\Sigma,\leqinput]$-types (of bounded quantifier rank) talking about the input word $u$ with
the positions $s$ and $p$ marked. In the proof, we represent these
\mso-types as state information of node selecting automata (or query automata, see \eg \cite{NPTT05}). 
The idea behind this information is that, by looking independently at each column of
the full graph of an o-graph, it is possible to decide whether this
o-graph satisfies $\phi$. Suppose for instance we want to check
whether the o-graph satisfies a formula of the form $\forall^\out
x\exists^\out y\cdot \gamma(x)\rightarrow \gamma'(y)\wedge y<_\out
x\wedge \{\xi\}(\ori(y),\ori(x))$. Then, for every column containing a
$\gamma$-labelled node, say at coordinate $(p,p')$, one has to check
that there exists a node in the same column, say at position $(p,p'')$, labelled by some
$(\gamma'\overleftarrow{R})$ or some $(\gamma'\overrightarrow{R})$, such that $p''<p'$ and $R$ satisfies
$\xi$. Suppose that in the SNF we also have a conjunct of the form 
$\forall^\out x,y\cdot (\gamma(x)\wedge \gamma'(y)\wedge \{\ori(x)<_\out \ori(y))\}\rightarrow
\{\xi'\}(\ori(x),\ori(y))$, then we must additionally checks that for
every column, for every $\gamma$-labelled node in this column and every
$\gamma' \overrightarrow{R}$-labelled node on the same column, $R$
satisfies $\xi'$. We call a column which satisfies the SNF formula
$\phi$ a \emph{valid} column. 

A \emph{key property} we now use is that, if on a column there exists at least three nodes
with the same label, then removing all but the smallest and greatest
(in output order) of these nodes does not influence the validity of
the column. It is easy to see for subformulas of $\phi$ of the form 
$\forall^\out x,y\ \psi(x,y)$ (removing nodes makes such a formula
``easier'' to satisfy). For subformulas of the form
$\forall^\out\exists^\out y\ \psi_i(x,y)$, it is due to the fact that
$\psi_i$ is quantifier-free, and therefore it is safe to keep only the
extremal witnesses $y$ for $x$.

This observation leads us to the notion of \emph{abstract graph}, the subgraph of the full graph
obtained by keeping only the extremal occurrences of every node with
same labels. Figure~\ref{fig:abstraction} illustrates this abstraction,
on hypothetical full graphs where label equalities have been
underlined. Each column indexed by position $p$ of this abstract graph, together with the
input symbol, is what we call the \emph{profile} of $p$. Note
that this is a bounded object. Then, to any o-graph one can associate
a sequence of profiles this way, but this association is not
injective in general since we may lose information, as shown in
the figure. Put differently, the abstract graph can in general be
concretised in more than one full graph. 

\paragraph{Properties of profile sequences} The key ingredient of the
proof is to define properties on profile sequences $s$ (which are
nothing but words over the finite alphabet of profiles), that can be
checked in a regular manner (by an automaton) so that there exists at
least one o-graph $g$ such that $(1)$  $s$ is the profile sequence of
$g$ and $(2)$ $g\models \phi$. Property $(2)$ is ensured by the notion
of validity defined before, and by a notion of \emph{maximality} for the
MSO-types $R$ (no information can be withheld). 
Property $(1)$ is ensured by a notion of consistency between
profiles. Intuitively, it asks that the information declared in one
profile is consistent, in some precise way, with the information
declared in the next profile.
Roughly, 
since we use automata to represent the information $R$, one step
consistency corresponds to one step in the runs of the automata. 
Maximal and consistent sequences of
valid profiles are called \emph{good} profile sequences.

We then prove a completeness result: the
profile sequence of any model of $\phi$ is good. We also prove a
soundness result: any good profile sequence is the profile sequence of at least one
model of $\phi$. As a matter of fact, we prove a slightly stronger
result which allows one to recover not just one but potentially
several models of $\phi$. As illustrated on the figure, every
connected component of the abstract graph corresponds to exactly one
node labelled in $\Gamma$. The notion of consistency ensures this
property as well, and, as a matter of fact, the output positions of
the models we reconstruct out of good profile sequences are in bijection
with these connected components (CC). We can even order them partially, as
illustrated on the figure, by overlapping: a CC is the successor of
another one if they overlap horizontally, and the former is above the
latter (again, our definition of consistency ensures that there is no ``crossing'' in
abstract graphs, hence this relation can indeed be shown to be a
partial order). Hence, a good profile sequence
defines an abstract graph which gives us: the input position with their labels, the output
positions with their labels and origins, and some partial order
between these output positions. What's missing is a linear order on
these output positions, but we prove that \emph{any} linearisation of
this partial order actually defines an o-graph which satisfies
$\phi$. Coming back to the example, the two possibly linearisations
give the output words $\beta\gamma\beta\beta\gamma$ and 
$\beta\gamma\gamma\beta\gamma$.

\paragraph{Back to the theorems} To show domain regularity
(Prop.~\ref{prop:regDom}), we observe that the domain is the projection
on input alphabet $\Sigma$ of the set of good profile sequences, which
turns out to be regular (the whole point of defining the notions of
validity, maximality and consistency is that they can be checked by an
automaton). Since regular languages are closed under projection, we get
the result.

Showing regular synthesis (Thm.~\ref{thm:unif}) is a bit more
technical. The main idea is to show that the mapping which takes as input
a word $u$ over $\Sigma$, and which outputs all the abstract graphs of
o-graphs which satisfy $\phi$ and have $u$ as input, is definable by a
non-deterministic 
MSO word-to-DAG transduction
$T_1$. It is possible since the notions of consistency, maximality and
validity are all MSO-definable, and an abstract graph is always a DAG. Then, we use a result of Courcelle which states that there
exists a deterministic MSO DAG-to-word transduction $R_2$ which, given a DAG,
produces a topological sort of it \cite{cou95}. The DAG additionally needs to be
\emph{locally ordered} (the successors of a node are linearly ordered),
but we can ensure this property in our construction. Then, we use
closure under composition of \nmsot to
show that $R_2\circ R_1$ is definable by some word-to-word \nmsot,
which can be easily realised by a (deterministic) \msot, concluding
the proof.

\paragraph{Comparison with \cite{SZ12}} We would like to point out
that this proof was inspired by a decidability proof for the logic
$\fo^2[\leq,\preceq,S_\preceq]$ over data words (a linear order over positions and
a linear order and successor over data). We somehow had to cast it to transductions,
and extend it with binary MSO predicates. Moreover, further
manipulations and notions were needed to extract the synthesis
result. In particular, the ideas of using Scott normal form, to see
o-graphs as two-dimensional structures, and the abstraction, were
directly inspired from \cite{SZ12}.

\section{A decidable logic for typed data words}\label{sec:datawords}

We make here a bijective connection between o-transductions and
what we call typed data words, which slightly generalise data words, 
and introduce a new decidable logic $\ldata$ for typed data words, whose decidability stems from the equivalence with \lorigin. 



\paragraph{Typed data words} 
We consider \emph{typed data words} over an ordered data domain, such
that each datum also carries a label (type) from a finite alphabet. 
Formally, a typed data word of length $n$ and \emph{data size} $m$ over two disjoint
alphabets $\Gamma$ and $\Sigma$ is a word over the alphabet
$\Gamma \times \nat \times \Sigma$, $w=(\gamma_1,d_1,\sigma_1) \cdots (\gamma_n,d_n,\sigma_n)$ verifying the following properties:
$d_i$ is called the \emph{datum of position $i$}, we have that $\set{d_1,\ldots, d_n}= \set{1,\ldots, m}$ \footnote{We make this assumption \wog, because the logic we
  define will only be able to compare the order of data, and so cannot
  distinguish typed data words up to renaming of data, as long as the
  order is preserved. \Eg $(a,b,1)(c,d,3)(e,f,2)$ and
  $(a,b,2)(c,d,5)(e,f,4)$ will be indistinguishable by the logic.}
  and we also have for any positions $i,j$ that $d_i=d_j \Rightarrow \sigma_i=\sigma_j$, hence $\sigma_i$ is called the \emph{type of datum $d_i$}.

We denote by
$\datawSG$ the set of typed data words over alphabets $\Sigma,\Gamma$ 
of \emph{any} length $n$ and \emph{any} data size $m$.

The data of a typed data word $w$ induce a total preorder 
$\preceq$ over the positions of $w$ defined by
$i\preceq j$ if $d_i\leq d_j$. This preorder induces itself an
equivalence relation $\sim$ defined by $i\sim j$ iff $i\preceq j$
and $j\preceq i$, which means that the positions $i$ and $j$ carry
the same datum. 
Hence, a typed data word will equivalently be seen as a structure with 
letter predicates $\gamma\in \Gamma$, $\sigma\in\Sigma$, the linear order over positions
and the total preorder $\preceq$ previously defined.

\paragraph{The logic $\ldata$ for typed data
  words}\label{subsec:ldata} It is known from 
\cite{BMSSD06} that the logic 
\mso
 over
untyped data words (\ie $|\Sigma|=1$) is undecidable (even the
first-order fragment). We consider here a decidable fragment, over
typed data words, called 
$\ldata$. A formula
of \ldata can be seen as an $\fo^2$ formula using the linear order of
the positions and some additional binary data predicates. The logic
$\ldata$ is indeed built on top of MSO $n$-ary
predicates, for $n\leq 2$, which are allowed to speak only about the
data. Precisely, we define $\binmso[\Sigma,\preceq]$ to be
the set of $n$-ary predicates written $\{\phi\}$, for $n\leq 2$, where $\phi$
is an MSO-formula with $n$-free first-order variables, over the
unary predicates $\sigma(x)$ and the preorder $\preceq$, with the
following semantic restriction\footnote{Note that the semantic restriction could also be enforced in the logic by guarding quantifiers
$\exists X\psi$ with $\exists X[\forall x\forall y\ x\in X\wedge
y\sim x)\rightarrow y\in X]\rightarrow \psi$.}: second-order variables are
interpreted by $\sim$-closed sets of positions. Over typed data words,
 predicates $\{\phi\}$ are interpreted by relations on positions defined by formulas $\phi$.

Due to the semantic restriction, formulas in 
$\binmso[\Sigma,\preceq]$ cannot distinguish positions with the same
data and therefore, they can be thought of as formulas which quantify over data and sets of data. 
 As an example, the
formula $\forall y\ x\preceq y$ expresses that the datum of position $x$ is the
smallest, and it holds true for any $x'$ with the same datum. 
Then, the
logic $\ldata$ is defined as $\ldata := \fo^2[\Gamma,\leq, \binmso[\Sigma,\preceq]]$.

%
\begin{example}\label{ex:ldata}
    First, let us mention that $\binmso[\Sigma,\preceq]$ predicates
    can express any regular properties about the data, in the
    following sense.     Given a typed data word $w$, the total preorder $\preceq$ over
    positions of $w$ can be seen as
    a total order $\leq_\sim$ over the equivalence classes of  $\dom(w)/_\sim$, by
    $[i]_\sim \leq_\sim [j]_\sim$ if $i\preceq j$. Then, any typed
    data word induces a word $\sigma_1\dots \sigma_n\in\Sigma^*$ such
    that $\sigma_i$ is the type of the elements of the $i$th
    equivalence class of $\leq_\sim$. Any regular property of these
    induced words
    over $\Sigma$ transfers into a regular property about the data of typed data words (it
    suffices to replace in the MSO-formula on $\Sigma$-words
    expressing the property, the linear order by $\preceq$ and the
    equality by $\sim$). Examples of properties are: $n$ is even,
    which transfers into ``there is an even number of pieces of data'', or 
    $\sigma_1\dots \sigma_n$ contains an even number of
    $\sigma\in\Sigma$, for some $\sigma$, meaning ``there
    is an even number of pieces of data of type $\sigma$''.


%
%
\end{example}

\paragraph{From transductions to data words and
  back}\label{sec:transtodata}
 %
%
There is a straightforward encoding $\td$ of non-erasing o-graphs into
typed data words, and conversely. 
A non-erasing o-graph $(u,(v,o))$, with $v=v_1\ldots v_n$ and $u=u_1\ldots u_m$ is encoded as the typed data word
$\td((u,(v,o)))=(v_1,o(1),u_{o(1)})\ldots(v_n,o(n),u_{o(n)})$. Given a typed
data words $w=(\gamma_1,d_1,\sigma_1)\ldots
(\gamma_n,d_n,\sigma_n)$, we set $\td^{-1}(w)$ the non-erasing o-graph 
$\td^{-1}(w) = (u,(v,o))$ such that $v=\gamma_1\ldots\gamma_n$, $o(i)
= d_i$, and if we write $d_{i_j}=j$ then $u=\sigma_{i_1}\cdots\sigma_{i_m}$ where $m = \text{max}_i\
d_i$. 
We give here an example of this transformation:
\begin{center}
\begin{tikzpicture}[>=stealth',auto,node distance=3cm,scale=0.75,every node/.style={scale=0.8}]

\node[] (i1) at (0,0) {$\#$};
\node[] (i2) at (1,0) {$\$$};
\node[] (i3) at (2,0) {$@$};
\node[] (i5) at (3,0) {$\#$};
\node[] (i6) at (4,0) {$\#$};

\node[] (j1) at (0,-1) {$a$};
\node[] (j2) at (1,-1) {$b$};
\node[] (j3) at (2,-1) {$c$};
\node[] (j4) at (3,-1) {$c$};
\node[] (j5) at (4,-1) {$a$};
\node[] (j6) at (5,-1) {$b$};

\draw [->] (j1) -- (i3) ;
\draw [->] (j2) -- (i2) ;
\draw [->] (j3) -- (i1) ;
\draw [->] (j4) -- (i3) ;
\draw [->] (j5) -- (i6) ;
\draw [->] (j6) -- (i5) ;

\foreach \i/\l/\n/\t in {0/a/3/@,1/b/2/\$,2/c/1/\#,3/c/3/@,4/a/5/\#,5/b/4/\#}
{
\node (k\i) at (5.5+\i,-.2) {$(\l,\n,\t)$};
}

%

%
 \end{tikzpicture}
\end{center}

%
%

\begin{restatable}{theorem}{thmOrtoData}\label{Thm-OriginToData}
\label{thm:origin-to-data}
    Non-erasing o-graphs
    of $\ProdSG$ and typed data words
    of $\dataw{\Sigma}{\Gamma}$ are in bijection by $\td$. 
Moreover, a non-erasing o-transduction $\tau$ is \lorigin-definable  iff
$\td(\tau)$ is \ldata-definable. Conversely, a language of typed data words
$L$ is \ldata-definable iff $\td^{-1}(L)$ is \lorigin-definable.
\end{restatable}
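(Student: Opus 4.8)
The idea is to verify that $\td$ is a bijection by unwinding the definitions, and then to derive the two logical equivalences from mutually inverse, essentially atom-by-atom translations of sentences that are \emph{compatible with $\td$}; since $\td$ is a bijection, one such pair of translations yields all three statements at once (for the ``if'' directions, and for the typed-data-word claim, simply apply $\td^{-1}$ to the given defining sentence). For the bijection itself: if $(u,(v,o))$ is a non-erasing o-graph with $u = u_1\cdots u_m$ and $v = v_1\cdots v_n$, then surjectivity of $o$ gives $\{o(1),\dots,o(n)\} = \{1,\dots,m\}$ and $o(i) = o(j)$ trivially forces $u_{o(i)} = u_{o(j)}$, so $\td((u,(v,o)))$ is indeed a typed data word; conversely the defining conditions of typed data words make $\td^{-1}(w)$ a non-erasing o-graph. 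A direct computation shows $\td^{-1}\circ\td = \mathrm{id}$ and $\td\circ\td^{-1} = \mathrm{id}$: in $\td^{-1}(\td((u,(v,o))))$ the output word is read back as the sequence of first components, the origin as the sequence of data, and the $k$-th input letter as the (well-defined) type of the datum $k$, namely $u_k$; the other composition is symmetric. (The degenerate length-$0$ case is vacuously excluded on both sides, inputs being nonempty.)

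The heart of the argument is a \emph{structure dictionary}. Fix $g = (u,(v,o))$ and $w = \td(g)$. Then $\dom(w) = \dom(v)$, the position order of $w$ is $\leqoutput$, and the $\Gamma$-labels coincide; the $\sim$-classes of $w$ are exactly the fibres $o^{-1}(k)$ for $k\in\dom(u)$, so $k \mapsto o^{-1}(k)$ is an isomorphism from $(\dom(u),\leqinput,(\sigma)_{\sigma\in\Sigma})$ onto the quotient structure $(\dom(w)/{\sim},\leq_{\sim},(\sigma)_{\sigma\in\Sigma})$ (where $\leq_\sim$ and the type labels are those induced by $\preceq$ and by the third components), and under this identification $\ori$ becomes the class map $x \mapsto [x]_\sim$. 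By the standard equivalence between MSO on a quotient structure and MSO on the structure with congruence-closed set quantification, each $\phi\in\binmso[\leqinput,\Sigma]$ then corresponds to a $\phi^\sharp\in\binmso[\Sigma,\preceq]$ (replace $\leqinput$ by $\preceq$, read the letter predicates as type predicates; the $\sim$-closedness restriction built into $\binmso[\Sigma,\preceq]$ matches exactly the set quantification over $\dom(u)$) such that $u\models\phi[k_1,k_2]$ iff $w\models\phi^\sharp[x_1,x_2]$ whenever $o(x_i) = k_i$, and symmetrically each $\psi\in\binmso[\Sigma,\preceq]$ is $\sim$-invariant and yields a $\psi^\flat\in\binmso[\leqinput,\Sigma]$.

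Using the dictionary, from an \lorigin-sentence $\varphi$ (which, since $\tau$ is non-erasing, may be assumed to have only non-erasing models, e.g.\ by conjoining the \lorigin-formula $\forall^\inp x\,\exists^\out y\ x = \ori(y)$) we first relativise every first-order quantifier to $\inp$ or to $\out$, so each variable carries a fixed mode; an $\inp$-moded variable $z$ can only occur through $\sim$-invariant atoms — the type predicates $\{\sigma(z)\}$, the comparisons $\{z\leqinput z'\}$ and $\{z\leqinput\ori(x)\}$, the equalities $z = z'$, $z = \ori(x)$, $\ori(z) = z$, and the input brackets — since $\gamma(z)$ and $z\leqoutput\cdot$ are unsatisfiable for such $z$. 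We translate $\out$-variables to positions of $w$ and $\inp$-variables to witnesses of their $\sim$-class, sending atoms via $x\leqoutput y \mapsto x\leq y$, $\{\sigma(\ori(x))\} \mapsto \{\sigma(x)\}$, $\{\ori(x)\leqinput\ori(y)\} \mapsto \{x\preceq y\}$, $\{\ori(x) = \ori(y)\} \mapsto \{x\preceq y \wedge y\preceq x\}$, $\{\phi\}(\ori(x),\ori(y)) \mapsto \{\phi^\sharp\}(x,y)$, and keeping $\Gamma$-labels; a quantifier $\exists^\inp z$ becomes an ordinary $\exists z$ (and $\forall^\inp$ an ordinary $\forall$), which is sound precisely because $z$ occurs $\sim$-invariantly and, by non-erasingness, every $\sim$-class is witnessed. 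The result is an \ldata-sentence defining $\td(\semo{\varphi})$. The reverse translation $\ldata \to \lorigin$ is the mirror image: positions become $\out$-positions, $\leq \mapsto \leqoutput$, $\Gamma$-labels are kept, a type predicate $\{\sigma(x)\}$ becomes $\{\sigma(\ori(x))\}$, a bracket $\{\psi\}(x,y)$ — being $\sim$-invariant, hence really a statement about $[x]_\sim,[y]_\sim$ — becomes $\{\psi^\flat\}(\ori(x),\ori(y))$, and one conjoins $\forall^\inp x\,\exists^\out y\ x = \ori(y)$ so that only non-erasing o-graphs are models; correctness is immediate from the dictionary.

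The main obstacle is to stay inside the two-variable, input-MSO-bracket logic on both sides. This is exactly what forces the detour through moded variables, together with the observation that input positions are never referred to directly but only via $\ori$ and the input brackets, and it is where non-erasingness is essential: quantification over $\sim$-classes must be simulated by quantification over positions. The quotient-MSO correspondence for the brackets is routine but must be invoked with the $\sim$-closedness restriction lined up exactly with set quantification over input positions; everything else is bookkeeping.
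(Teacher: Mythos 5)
Your proposal is correct and follows essentially the same route as the paper: a syntactic, atom-by-atom translation mimicking $\td$ (mode-separated quantifiers, elimination of type-inconsistent atoms, $\ori^n(x)\mapsto x$, $\leqinput\mapsto\preceq$, $\leqoutput\mapsto\leq$, and the mirror translation back). Your explicit quotient-structure dictionary and the added non-erasingness conjunct are just slightly more careful packaging of what the paper does implicitly.
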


	
    The main idea of the proof is to make a bijective
    syntactic transformation that mimics the
    encoding $\td$: once inconsistent use of terms have been
    removed (such as \eg, $o(x)\leqoutput y$), terms $o^n(x)$
    are  replaced by $x$, predicates $\leqinput$ by $\preceq$ and
    $\leqoutput$ by $\leq$. 
Hence, this theorem and the decidability of \lorigin (Thm.~\ref{Thm-LoriginSat}) gives the following corollary.

\begin{corollary}\label{thm:ldata-dec}
    Over typed data words, the logic $\ldata$ has a decidable
    satisfiability problem. 
\end{corollary}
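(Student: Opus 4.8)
The plan is to reduce $\ldata$-satisfiability to $\lorigin$-satisfiability over $o$-graphs, exploiting the bijective correspondence of Theorem~\ref{thm:origin-to-data}. Given an input $\ldata$-sentence $\phi$, consider the language $L_\phi = \{\, w \in \datawSG \mid w \models \phi \,\}$ of typed data words it defines. This language is, trivially, $\ldata$-definable, so by the (effective) converse direction of Theorem~\ref{thm:origin-to-data} one can compute an $\lorigin$-sentence $\psi$ such that $\semo{\psi} = \td^{-1}(L_\phi)$.

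Since $\td$ is a bijection between the non-erasing $o$-graphs of $\ProdSG$ and the typed data words of $\datawSG$, emptiness transfers in both directions: $L_\phi = \varnothing$ iff $\td^{-1}(L_\phi) = \varnothing$ iff $\semo{\psi} = \varnothing$. By Theorem~\ref{Thm-LoriginSat} the last condition is decidable, and therefore so is whether $\phi$ is satisfiable. This is the whole argument; there is no genuine obstacle left, as the substance is already contained in Theorems~\ref{thm:origin-to-data} and~\ref{Thm-LoriginSat}.

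The one point that must be checked, rather than merely invoked, is that the logical transfer of Theorem~\ref{thm:origin-to-data} is \emph{effective}, i.e.\ that $\psi$ is computable from $\phi$. This is exactly what the sketched syntactic translation provides: after eliminating the inconsistent uses of the origin function (such as $\ori(x) \leqoutput y$), one rewrites every term $\ori^n(x)$ as $x$, replaces $\leqinput$ by $\preceq$ and $\leqoutput$ by $\leq$, and conversely; both directions are plainly computable and preserve the intended semantics under $\td$. Hence the decision procedure for $\ldata$ is: translate $\phi$ into $\psi \in \lorigin$, then run the satisfiability procedure of Theorem~\ref{Thm-LoriginSat} on $\psi$.
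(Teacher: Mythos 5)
Your proposal is correct and follows essentially the same route as the paper: the paper derives the corollary directly from the bijection and logical transfer of Theorem~\ref{thm:origin-to-data} together with the decidability of $\lorigin$ satisfiability (Theorem~\ref{Thm-LoriginSat}), which is exactly your reduction. Your extra remark on the effectiveness of the syntactic translation is precisely the (implicit) ingredient the paper relies on, so nothing is missing.
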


\noindent As a remark, we also note that thanks to the correspondence between
transductions and data words and some minor manipulations, we can also obtain the decidability 
of $\fo^2[\leq_\inp,\leq_\out,S_\inp,\ori]$ (for $\inp$ the input successor), which is a strict fragment of \lorigin, over o-graphs from
the decidability of $\fo^2[\leq,\preceq,S_\preceq]$ over data words, proved in \cite{SZ12}.
However, the logic $\fo^2[\leq,\preceq,S_\preceq]$ is a strict fragment of $\ldata$.


\section{Complexity of satisfiability}

To achieve decidability results for \lorigin, the binary \mso
predicates over the input of \lorigin-formulas are decomposed into
\mso-types, that we handle using query automata, as explained in the
sketch of proof in Section~\ref{sec:sketch}. A query automaton for a binary \mso formula $\psi(x,y)$ is
a non-deterministic finite automaton $\mathcal{A}=(Q,\Sigma,I,\Delta,F)$ equipped
with a set $\Select\subseteq Q^2$ of \emph{selecting pairs} 
with the following property: for any word $u\in\Sigma^*$ and any pair of positions
$(i,j)$ of $u$, we have $u\models \psi(i,j)$ if, and only if, there
exists an accepting run $\pi$ of $\mathcal{A}$ and a pair $(p,q)\in\Select$
such that $\pi$ reads $u(i)$ in state $p$ and $u(j)$ in state $q$.
Due to the \mso-formulas and their translation into query automata,
the complexity of the satisfiability of \lorigin is non-elementary,
and this is unavoidable~\cite{Stock74}.
However, if the binary \mso-formulas are already given as query
automata, we get a tight elementary complexity. Likewise, the binary
MSO predicates of the data word logic \ldata can be also represented
as query automata, and we get the same complexity as \lorigin. 

\begin{restatable}{theorem}{ThmComplex}\label{thm-Complex}
The satisfiability problem of \lorigin and \ldata is
\expspace-complete when the binary \mso predicates are given as query automata.
\end{restatable}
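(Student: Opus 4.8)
The plan is to establish the two bounds separately and to move freely between \lorigin and \ldata using Theorem~\ref{thm:origin-to-data}: the syntactic transformation behind that bijection is polynomial and leaves the binary predicates essentially untouched (it only swaps $\preceq$ for $\leqinput$, $\leq$ for $\leqoutput$ and erases iterated $\ori$'s), so it turns a query‑automaton presentation of an \ldata‑instance into a query‑automaton presentation of an \lorigin‑instance and conversely. Hence it suffices to prove the \expspace upper bound for \lorigin and the \expspace lower bound for either logic.

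For the upper bound I would rerun the construction of Section~\ref{sec:sketch} while tracking sizes. Let $N$ be the total size of the input (the \fod‑skeleton together with the query automata). The polynomial‑time non‑erasing transformation and the Scott normal form multiply the output alphabet by $2^{O(N)}$ (one fresh unary predicate per subformula), but keep every letter of polynomial bit‑size. Taking the product of the at most $N$ query automata for the binary input predicates (each of size at most $N$) yields an automaton with $\le N^{N}$ states, so the \mso‑types $R$ occurring in clauses — pairs of states together with selecting‑pair information — number at most $2^{O(N\log N)}$. Consequently a single profile is an object of size $2^{O(N\log N)}$ (a bounded‑multiplicity multiset of clauses over an exponential‑size universe of clauses), and the automaton $\mathcal A_{\mathrm{good}}$ recognising good profile sequences from Prop.~\ref{prop:regDom} has a doubly‑exponential state space and alphabet. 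The key observation is that $\mathcal A_{\mathrm{good}}$ is given \emph{implicitly}: validity and maximality of a profile, and one‑step consistency of a pair of profiles, can all be checked in exponential time, so the successor, initial and final profiles are computable in exponential space. Since $\phi$ is satisfiable iff there is a good profile sequence iff $L(\mathcal A_{\mathrm{good}})\neq\varnothing$, and non‑emptiness of an implicitly presented automaton with state set of size $M$ is reachability, a nondeterministic machine can decide it by guessing the run profile‑by‑profile, storing only the current profile and a step counter bounded by $M$ (so $\log M = 2^{O(N\log N)}$ bits, i.e. exponential space). This puts the problem in $\mathrm{NExpSpace}$, and by Savitch's theorem $\mathrm{NExpSpace}=\mathrm{ExpSpace}$; by the transfer above, \ldata‑satisfiability is in \expspace as well.

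For the lower bound the plan is to reduce from an \expspace‑complete problem, e.g. acceptance by a deterministic Turing machine $M$ running in space $2^{N}$ on inputs of length $N$ (equivalently, tiling of a corridor of width $2^{N}$). A halting computation is encoded as a typed data word that is a sequence of configuration blocks of length $2^{N}$, where the position for cell $i$ of configuration $t$ carries the tape symbol of that cell together with the $N$‑bit binary address of $i$. Polynomial‑size query automata suffice to force (i) that these addresses behave like a binary counter inside each block and reset at block boundaries (a regular property recognised by an $O(N)$‑size automaton tracking the carry) and (ii) that the first block is the initial configuration and the last block is accepting. The transition relation links cell $(t{+}1,i)$ to cells $(t,i{-}1),(t,i),(t,i{+}1)$, which lie about $2^{N}$ apart; to bridge this distance I would arrange the data so that $\preceq$ orders positions primarily by address and secondarily by time, so that the $\preceq$‑successor relation — itself a binary \mso$[\preceq]$ predicate, hence expressible by a query automaton — connects $(t,i)$ to $(t{+}1,i)$, while the in‑block neighbours $(t,i{\pm}1)$ are reached via the position order $\leq$. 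Since \fod has only two variables, I would precompute at every position a bounded label summarising the symbols of its immediate $\leq$‑ and $\preceq$‑neighbours, force local consistency of these labels, and then write the whole transition constraint as an \fod‑formula over the labels. The reduction produces an instance of size $\mathrm{poly}(N)$, so \expspace‑hardness of \ldata follows, and it transfers to \lorigin through Theorem~\ref{thm:origin-to-data}. The main obstacle is precisely this last step: laying out a two‑dimensional grid of exponential width inside a word while respecting the two‑variable restriction and using only polynomial‑size query automata requires the usual neighbourhood‑colouring gymnastics, and care is needed to ensure that no constraint ever has to speak about more than two positions simultaneously.
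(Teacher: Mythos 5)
Your upper bound follows essentially the paper's own route: Scott normal form with its exponential blow-up of the output alphabet, clauses and profiles of singly exponential size, and an on-the-fly non-emptiness test of the implicitly presented doubly exponential automaton for good profile sequences, giving the \expspace bound via nondeterministic exponential space and Savitch. One caveat: maximality is \emph{not} a property of a single profile or of one adjacent pair (it quantifies over all accepting runs on the whole input), so ``validity and maximality of a profile \ldots can all be checked in exponential time'' is not literally true as a local check. The paper deals with this in Lemma~\ref{lem-ProfileAut} by enriching each profile with the sets $S'$, $R'$ of states and pairs that could still be added, so that a maximality violation is detected step by step; with that correction your space analysis goes through, and the transfer between \lorigin and \ldata via the linear translation of Theorem~\ref{thm:origin-to-data} is exactly what the paper does.

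The genuine divergence, and the gap, is in the lower bound. The paper obtains \expspace-hardness in one line: $\fo^2[\Gamma,\preceq,S_\preceq,\leq]$ is a syntactic fragment of \ldata (the data successor $S_\preceq$ is $\mso[\preceq]$-definable, hence given by a constant-size query automaton), and that logic is \expspace-complete by \cite{SZ12}. You instead propose a direct reduction from exponentially space-bounded Turing machines, which amounts to reproving the hardness half of \cite{SZ12}, and your sketch breaks down exactly where you flag the ``main obstacle.'' Concretely: (i) if a single position carries the full $N$-bit address of its cell, the alphabet has size $2^N$ and the instance is no longer of size $\mathrm{poly}(N)$; if instead the address is spread over a block of $N$ positions, then checking that consecutive blocks encode an incremented counter is not recognisable by an $O(N)$-size automaton reading the word (it must remember $N$ bits), so additional machinery such as data-based bit matching is needed; (ii) the stipulation that $\preceq$ orders positions ``primarily by address and secondarily by time'' is itself a property you must enforce with an $\fo^2$ formula and polynomial-size query automata, not something you may simply arrange by fiat. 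These are precisely the difficulties the known hardness proof for $\fo^2[\leq,\preceq,S_\preceq]$ resolves; as written, your reduction is incomplete, and citing that result, as the paper does, closes the gap immediately.
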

\begin{proof}[Sketch of proof]
First, as the translation between \lorigin and \ldata is linear, the complexity of both logics is equivalent.
In showing decidability of the satisfiability of \lorigin, we obtain that the set of "good" profile sequences is effectively regular by Prop.~\ref{prop:regDom}.
With a careful analysis it is possible to construct a doubly exponential deterministic automaton recognising the good sequences. By checking emptiness on-the-fly instead of constructing the automaton, we get the \nlogspace emptiness of the automaton, and hence the \expspace complexity. 
Finally, since the logic $\fo^2[\Gamma,\preceq, S_\preceq,\leq]$ is \expspace-complete~\cite{SZ12}, we get \expspace-hardness as this logic is a syntactic fragment of \ldata.
\end{proof}

\section{Decidable Extensions of \lorigin}
\label{sec:ext}
We present here two main extensions to $\lorigin$ showing its robustness. The first one consists in adding a block of existential monadic second-order quantifiers in front of the formula while the second one consists in adding new predicates to the logic; both extensions preserve many properties of the logic which we describe below.

\paragraph{Existential \lorigin}
This new logic is denoted by $\elorigin$ and allows us to capture all non-deterministic \mso-transductions, but we lose the closure under negation of the logic.
Formally, we consider all formulas of the form
$\exists X_1\dots \exists X_n \phi$ where $\phi$ is a formula of
$\lorigin$ which can additionally use predicates of the form $x\in
X_i$. The variables $X_i$ range over sets of output, and also input
positions.

\begin{restatable}{proposition}{NmsoToElorigin}
Any \nmso-transduction is \elorigin-definable.
\end{restatable}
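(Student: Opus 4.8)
The plan is to exploit the fact that an \nmso-transducer is nothing but a deterministic \mso-transducer equipped with a tuple of monadic second-order parameters $X_1,\dots,X_n$ interpreted over the input structure: once a valuation of these parameters is fixed, the transducer becomes functional, and the construction behind Theorem~\ref{thm:msoexpr} applies. So from an \nmso-transducer $T$ I would produce an \elorigin-formula of the shape $\exists X_1\cdots\exists X_n\,\phi$, where the $X_i$ are declared to range over input positions and $\phi$ is obtained by running the translation of Theorem~\ref{thm:msoexpr} while keeping $X_1,\dots,X_n$ as free parameters throughout.

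In more detail: let $T$ have copy set $\{1,\dots,k\}$, parameters $\vec X=X_1,\dots,X_n$ (plus, if present, a precondition sentence $\chi(\vec X)\in\mso[\leqinput,\Sigma]$), and defining formulas $\phi_{pos}^c(x)$, $\phi_\gamma^c(x)$, $\phi_{\leqoutput}^{c,d}(x,y)$, each an $\mso[\leqinput,\Sigma]$-formula whose free second-order variables are among $\vec X$. For any input word $u$ and any valuation $\nu$ of $\vec X$ by subsets of $\dom(u)$ satisfying $\chi$, substituting $\nu$ yields a deterministic \mso-transducer $T_\nu$, and $\semo{T}=\bigcup_{u,\nu}\semo{T_\nu}$, the union ranging over all input words and admissible valuations. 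I would then apply the proof of Theorem~\ref{thm:msoexpr} to $T$ regarded as a deterministic \mso-transducer with extra free set variables $\vec X$: the auxiliary formulas $\phi_{c_1,\dots,c_l,v}(x)$ and $C_i(x)$ and the final conjunction are built exactly as there, the only difference being that the input \mso predicates now mention $\vec X$ in addition to $\leqinput$ and the label predicates. To stay literally within the syntax of \elorigin --- whose binary \mso predicates range over $\binmso[\leqinput,\Sigma]$ only --- I would first push the parameters into the input alphabet: replace $\Sigma$ by $\Sigma'=\Sigma\times\{0,1\}^n$ so that $T$ becomes a genuine deterministic \mso-transducer $T'$ over $\Sigma'$, obtain by Theorem~\ref{thm:msoexpr} an \lorigin-formula $\phi'$ over input alphabet $\Sigma'$ with $\semo{\phi'}=\semo{T'}$, and finally rewrite $\phi'$ over $\Sigma$ by replacing every occurrence (also inside binary \mso predicates) of a label atom $(\sigma,b_1\dots b_n)(z)$ with $\sigma(z)\wedge\bigwedge_{b_i=1}z\in X_i\wedge\bigwedge_{b_i=0}\neg(z\in X_i)$. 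Prepending $\exists X_1\cdots\exists X_n$ together with the guards $\bigwedge_i\forall z\,(z\in X_i\to\inp(z))$ and a translation of $\chi$ gives the desired \elorigin-formula $\Phi$.

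It then remains to check that $\semo{\Phi}=\semo{T}$ (whence $\sem{\Phi}=\sem{T}$ by projecting away origins): for a fixed valuation the rewritten formula computes exactly $\semo{T'}=\semo{T_\nu}$ by correctness of the Theorem~\ref{thm:msoexpr} construction, and the outer existential block takes the union over all admissible $\nu$, which is precisely $\semo{T}$; as in Theorem~\ref{thm:msoexpr} the origin of the $c$-th copy of an input position $p$ is $p$ itself, so origins are preserved. I do not expect a genuine obstacle here: the real work is already in Theorem~\ref{thm:msoexpr}, and the one point demanding care is the bookkeeping of threading the parameters $\vec X$ through the construction while respecting the restricted format of \lorigin's input predicates --- handled cleanly by the alphabet-extension step --- together with checking that the admissible valuations (subsets of the input positions satisfying the precondition) are exactly those recaptured by the guarded existential quantifiers.
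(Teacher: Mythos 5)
Your proposal is correct and follows essentially the same route as the paper: the paper's proof is literally the one-liner that the construction of Theorem~\ref{thm:msoexpr} is reused verbatim, with the existentially quantified monadic variables playing the role of the \nmso{} parameters $X_1,\dots,X_n$. Your extra bookkeeping (pushing the parameters into the input alphabet and rewriting label atoms as $x\in X_i$ conjuncts, so the parameters thread through the binary input predicates) is just a more explicit rendering of the same idea, and matches how the paper itself treats the $X_i$ as additional unary predicates in Proposition~\ref{prop-UnifElorigin}.
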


The synthesis result extends to $\elorigin$ using a quite common trick of considering for a formula $\exists X_1\dots \exists X_n \phi$, the formula $\phi$ but over 
an extended alphabet.
\begin{restatable}{proposition}{UnifElorigin}\label{prop-UnifElorigin}
Any \elorigin-transduction can be (effectively) realised by a regular function.
\end{restatable}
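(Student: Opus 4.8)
The plan is to reduce the synthesis problem for $\elorigin$ to the one already solved for $\lorigin$ in Theorem~\ref{thm:unif}, via an alphabet-extension argument. Let $\Psi \equiv \exists X_1\dots\exists X_n\,\phi$ be an $\elorigin$-sentence, where $\phi\in\lorigin$ may additionally use the membership predicates $x\in X_i$ and the $X_i$ range over both input and output positions. The key observation is that an interpretation of $X_1,\dots,X_n$ over the domain of an o-graph $(u,(v,o))$ can be encoded by enriching the input alphabet to $\Sigma' := \Sigma\times\{0,1\}^n$ and the output alphabet to $\Gamma' := \Gamma\times\{0,1\}^n$: the extra bit-vector component on each position records which of the $X_i$ it belongs to. Under this encoding, the membership predicate $x\in X_i$ becomes a disjunction of the unary label predicates on $\Gamma'$ (for output positions) and, for input positions, a unary $\mso[\leqinput,\Sigma']$-predicate — hence expressible in $\binmso[\leqinput,\Sigma']$. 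Thus $\phi$ translates into an honest $\lorigin$-formula $\phi'$ over alphabets $\Sigma',\Gamma'$, and we have $\sem{\Psi} = \pi(\sem{\phi'})$ where $\pi$ is the letter-projection $\Sigma'\to\Sigma$, $\Gamma'\to\Gamma$ erasing the bit-vectors.

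Next I would apply Theorem~\ref{thm:unif} to $\phi'$: there is a regular function $f'$ (represented by a deterministic two-way / \mso-transducer $T'$) with $\dom(f') = \dom(\sem{\phi'})$ and $\sem{T'}\subseteq\sem{\phi'}$, and moreover $\semo{T'}\subseteq\semo{\phi'}$ so origins are preserved. The function $f'$ reads words over $\Sigma'$ and produces words over $\Gamma'$. To obtain a regular function realising $\Psi$ over the original alphabets, I would compose with projections on both sides: precompose with a (regular, indeed length-preserving finite-state) relation that guesses a bit-vector annotation on a given $\Sigma$-word, and postcompose with the letter-morphism erasing the $\Gamma'$-annotation. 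Concretely, the desired transduction $g$ maps $u\in\Sigma^*$ to $\pi_\Gamma(f'(u'))$ for a suitable annotation $u'$ of $u$; the subtlety is that $f'$ is only defined on annotated words in $\dom(\sem{\phi'})$, and different annotations of the same $u$ may or may not lie in that domain. So the construction must select, for each $u\in\dom(\sem{\Psi})$, one annotation $u'$ with $u'\in\dom(\sem{\phi'})$. Since $\dom(\sem{\phi'})$ is regular (Prop.~\ref{prop:regDom}), the set of valid annotations of $u$ forms a regular ``slice'', and one can pick a canonical (e.g.\ lexicographically least) valid annotation $u'$ in an \mso-definable / two-way-computable manner: a two-way transducer can, in a first family of passes, determine for the lexicographically-first candidate annotation whether it is accepted by the automaton for $\dom(\sem{\phi'})$, backtrack and adjust as needed, ultimately committing to one valid annotation and then simulating $T'$ on it while outputting only the $\Gamma$-component. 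Equivalently, and more cleanly, one observes that the class of regular functions is closed under the operation ``restrict a regular function's domain-annotation to a regular canonical-selector and compose with an output morphism'', which follows from closure of \nmsot under composition together with the fact that canonical selection and letter-morphisms are (deterministic) \msot's; this is exactly the style of argument used in the proof sketch of Theorem~\ref{thm:unif} itself.

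Putting it together: $g = \rho_\Gamma \circ \sem{T'} \circ \mathrm{sel}$, where $\mathrm{sel}$ is the deterministic \mso-transduction computing the canonical valid annotation of its input (defined precisely on $\dom(\sem{\Psi})$), $\sem{T'}$ is the regular function from Theorem~\ref{thm:unif}, and $\rho_\Gamma$ is the output letter-morphism; closure of regular functions under composition gives that $g$ is regular, $\dom(g)=\dom(\sem{\Psi})$, and $(u,g(u))\in\sem{\Psi}$ for every $u$ in the domain because the annotated run of $T'$ witnesses a model of $\phi'$, whose projection is a model of $\Psi$ with the same input. I expect the main obstacle to be the canonical-selection step — making precise that one can choose, uniformly and by a regular function, one interpretation of the $X_i$ among the possibly many that make $\phi'$ satisfiable on a given input. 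The cleanest route is to note that $\dom(\sem{\phi'})\subseteq(\Sigma')^*$ is regular (Prop.~\ref{prop:regDom}), so the map sending $u$ to its $\le$-least annotation lying in that regular set is itself a regular function (a rational selection over a regular language), and then invoke closure under composition; everything else is bookkeeping about how $x\in X_i$ becomes a unary predicate after the alphabet extension.
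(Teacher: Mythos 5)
Your proposal is correct, and its first half (encode the $X_i$ as extra bit-vector components on both alphabets, turn $x\in X_i$ into label/unary predicates, apply Theorem~\ref{thm:unif} to the resulting $\lorigin$-formula $\phi'$) is exactly the paper's argument. Where you diverge is in how the nondeterminism created by forgetting the annotation is resolved. The paper projects the synthesised deterministic two-way transducer back to the original alphabets, observes that this yields a nondeterministic two-way transducer with the right domain whose relation is included in $\sem{\Psi}$, and then invokes the uniformisation/determinisation result of \cite{dS13} (or \cite{DFJL17} for a reversible machine) to extract a deterministic realiser. You instead keep the machine over the extended alphabet and pre-compose it with a canonical-annotation selector: since $\dom(\sem{\phi'})$ is regular (Prop.~\ref{prop:regDom}), the map sending $u$ to, say, its lexicographically least annotation in that regular set is MSO-definable (uniformisation of MSO over finite words), hence a regular function; composing selector, synthesised transducer and the output letter-morphism, and using closure of regular functions under composition, gives the realiser. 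Both routes are sound. The paper's is shorter but leans on an external two-way-transducer uniformisation theorem; yours stays within the toolkit already used in the proof of Theorem~\ref{thm:unif} (domain regularity, MSO-definable choice, composition closure), at the cost of having to justify the selection step --- which you do correctly, and which is the only point where your write-up should be made fully precise (the lex-least valid annotation is MSO-definable with the annotation sets as parameters, so the selector is an honest deterministic MSO-transduction; the earlier ``backtracking two-way passes'' description can simply be dropped).
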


One result of \lorigin does not carry over to \elorigin, namely the decidability of the equivalence problem. Indeed \elorigin is not closed under negation and thus equivalence of formulas cannot be reduced to satisfiability. Equivalence turns out to be undecidable for \elorigin and in fact the validity problem, which asks given a formula if it is satisfied by all o-graphs and which can be seen as the particular case of the equivalence with the formula $\top$, is itself undecidable for \elorigin.
\begin{restatable}{proposition}{UnivElorigin}
The validity and equivalence problems for \elorigin over o-graphs are undecidable. 
\end{restatable}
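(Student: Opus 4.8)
The plan is to reduce from the Post Correspondence Problem (equivalently, from the undecidability of satisfiability for $\fo^2[\Sigma,\Gamma,\leqinput,\leqoutput,\out,\ori]$ established in Prop.~\ref{prop:undecfo2}). I would treat validity first, since equivalence with $\top$ is exactly validity: once validity is shown undecidable, undecidability of equivalence is immediate because $\psi$ is valid over o-graphs iff $\semo{\psi}=\semo{\top}$.

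The first move is to dualise. Since $\lorigin$ is closed under negation, the negation of an $\elorigin$ sentence $\exists X_1\cdots\exists X_n\,\phi$ is a sentence of the shape $\forall X_1\cdots\forall X_n\,\phi'$ with $\phi'=\neg\phi\in\lorigin$ (a universal block of monadic second-order quantifiers over a $\lorigin$ matrix). Thus $\exists\vec X\,\phi$ is valid over o-graphs iff this dual sentence is \emph{unsatisfiable}, so it suffices to build, from a PCP instance $I=\{(u_i,v_i)\}_{i\le k}$, a sentence $\Xi_I$ of that dual shape that is satisfiable by an o-graph iff $I$ has a solution; then $\psi_I:=\neg\Xi_I$ is an $\elorigin$ sentence that is valid iff $I$ has no solution. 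The key point that makes $\Xi_I$ expressive enough despite the two-variable restriction on the output is that a universally quantified set variable which is \emph{uniquely pinned down} by $\fo^2$-expressible conditions can be used as though it had been existentially supplied, since $\forall\vec Y(\mathrm{pin}(\vec Y)\to\chi(\vec Y))$ collapses to $\chi$ evaluated at the unique such $\vec Y$. Concretely, for a bounded $L$ one can pin down sets $Y_1,\dots,Y_L$ where $Y_t$ collects the output positions that are the $t$-th, in output order, among those sharing a given origin (disjointness, covering, the origin-minimal position lying in $Y_1$, strict monotonicity of the offset index along same-origin positions, and no gaps together force this, and each of these is $\fo^2$ over $\leqoutput$, $\ori$ and the $\{\ori(x)=\ori(y)\}$ predicate). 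With these offset sets available, a $\fo^2$ matrix can verify that the output word, read within each origin-block, spells a prescribed factor (of bounded length) indexed by the label of that input position. I would also use that iterated $\leqoutput$-successor is definable with a universal monadic quantifier — $y$ is the successor of $x$ iff $x<_\out y$ and every $\leqoutput$-upward-closed set containing $y$ but not $x$ is exactly the up-set of $y$ — so such occurrences can be folded into the prefix.

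Using this, the encoding lets the input word be a candidate index sequence $i_1\cdots i_n$ over $\{1,\dots,k\}$, the output word encode the common image word, and the origin function attach each output position to the input position of its block, so that the block-wise spelling conditions force the output to realise both the $u$-concatenation and the $v$-concatenation of the sequence. The one genuinely delicate point — the main obstacle — is enforcing that these two concatenations \emph{coincide}, since a two-variable formula over $\leqoutput$ cannot compare, symbol by symbol, two interleaved subsequences of the output, and (as $\elorigin$-transductions have regular domains) the length constraint $|u$-concat$|=|v$-concat$|$ cannot be pushed into a domain condition; it must be forced by the mere existence of a well-formed output. I expect to resolve this by choosing the output layout (a staircase/interleaving of the $u$- and $v$-blocks, with the input word carrying the block alignment) so that the global equality reduces to a local check between $\leqoutput$-adjacent positions, checkable by the $\fo^2$ matrix together with the pinned-down sets and the definable successor. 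Verifying that this layout is faithful, that every required condition lands in the matrix of a $\forall\vec X\,\phi'$ sentence (in particular that the auxiliary universal quantifiers genuinely collapse and that the successor occurrences remain in the prefix), and that $\Xi_I$ is satisfiable exactly when $I$ has a solution, is where the bulk of the technical work lies; granting it, validity and hence equivalence of $\elorigin$ are undecidable.
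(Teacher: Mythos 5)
Your overall plan --- dualise so that validity of $\exists X_1\cdots\exists X_n\,\phi$ becomes unsatisfiability of a sentence with a universal monadic prefix over an $\lorigin$ matrix, then make that satisfiability problem undecidable via PCP, and get equivalence for free as equivalence with $\top$ --- is exactly the right frame, and it is the paper's frame too. But your concrete reduction has a genuine gap at the point you yourself flag as ``the main obstacle''. You take the input word to be the index sequence $i_1\cdots i_n$ and let the origin record only which block an output position belongs to. With that choice, all letters of $1(u_{i_j})2(v_{i_j})$ share the single origin $j$, so the origin gives you block-level alignment only; the position-level matching between the $u$-concatenation and the $v$-concatenation is simply not available in the structure. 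And it cannot be recovered by ``a local check between $\leqoutput$-adjacent positions'': for a solution $i_1\cdots i_n$ the offset $\sum_{l\le j}(|u_{i_l}|-|v_{i_l}|)$ between the two concatenations drifts unboundedly along the word, so the letter at global position $p$ of the $u$-concatenation and the letter at global position $p$ of the $v$-concatenation lie in blocks that are unboundedly far apart; no staircase layout makes this comparison local, and no output-regular condition (which is all your universally quantified window sets can enforce) expresses it either. Since this equality is the entire content of PCP, the reduction is not established; the auxiliary devices you develop (pinned-down offset sets, the $\forall$-definable successor) are peripheral --- indeed the offset predicates $C_i(x)$ are already plain $\lorigin$-definable, as used in the proof of Theorem~\ref{thm:msoexpr} --- and do not touch this core step.

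The paper's proof closes exactly this hole by a different choice of encoding, inherited from Proposition~\ref{prop:undecfo2}: the input word is the common solution word $u_{i_1}\cdots u_{i_k}=v_{i_1}\cdots v_{i_k}$ itself, the output is the interleaving $1(u_{i_1})2(v_{i_1})\cdots$, and the origin mapping, required to be bijective, order-preserving and label-preserving separately on the $A_1$-labelled and on the $A_2$-labelled output positions, supplies the position-level matching: both concatenations are forced to equal the input word, hence each other, by purely $\fo^2$-expressible conditions on $\ori$. The only place where Proposition~\ref{prop:undecfo2} used the output successor is the well-formedness of the block decomposition, and this is where the universal second-order prefix is spent: a single universally quantified set $X$, constrained to be a contiguous window of bounded length, must spell an admissible factor (with prefix and suffix variants), which pins the output inside $(\sum_i 1(u_i)2(v_i))^*$. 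If you switch your encoding to this one, the rest of your argument (dualisation, undecidability of validity, and equivalence via comparison with $\top$) goes through as you describe.
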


\paragraph{Single-origin predicates}
One ``weak'' point of $\lorigin$ is that if the input is restricted to, for instance, a single position, then the expressive power over the output is only $\mathrm{FO}^2[\leq_{\out}]$. For instance the transduction $\{a\}\times L$ is not definable if $L$ not an $\mathrm{FO}^2[\leqoutput]$-definable language.
A more general expression of this problem is that the class of transductions definable by one-way transducers, \aka rational transductions \citep{Bers79}, is incomparable with the class of \lorigin (resp. \elorigin) transductions.
The following extension, called $\lorigin^{\mathrm {so}}$  adds new predicates, called here \emph{single-origin predicates}, and we show that it captures all the rational transductions. These new predicates allow to test any regular property of a subword of the output word restricted to positions with a given origin position.

Given an o-graph $(u,(v,o))$ and an input position $i$ of $u$, we denote by $v_{|i}$ the subword of $v$ consisting of all the positions of $v$ whose origin is $i$, and we call this word the \emph{single-origin restriction of $v$ to $i$}.


Given any regular language $L$ (represented as an MSO formula for
instance), we define a unary input predicate $L(x)$, whose semantics
over an $o$-graph $(u,(v,o))$ is the set of input positions
$i\in\dom(u)$ such that $v_{|i}\in L$. The logic
$\lorigin^{\mathrm{so}}$ (resp. $\elorigin^{\mathrm{so}}$)
 is the extension of $\lorigin$ (resp. $\elorigin$) with the predicates $L(x)$,
for any regular language $L$. These predicates can be used just as the other unary input predicates and using the previous notation we have $\lorigin^{\mathrm{so}}:=\fo^2[\Gamma,\leqoutput,\ori,\binmso[\leqinput,\Sigma\uplus \{L(x)|\ L \text{ regular}\}]]$. For instance, let $L$ denote the language $(ab)^*$ then the formula $\foroutput x\ a(x) \rightarrow \{\mathrm{even}(\ori(x)) \wedge L(\ori(x)) \}$ states that the origin of each output position $x$ labelled by $a$ must be even and that the subword of origin $o(x)$ must be in $L$.

\begin{restatable}{proposition}{NFTtoLTSO}
Any rational transduction is $\lorigin^{\mathrm {so}}$ definable.
\end{restatable}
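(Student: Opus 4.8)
The plan is to take a one-way nondeterministic finite transducer (1NFT) $T = (Q, \Sigma, \Gamma, q_0, \Delta, F)$ realising the given rational transduction, and encode its accepting runs and outputs directly in $\lorigin^{\mathrm{so}}$. The crucial observation is that in a 1NFT, when the machine reads input position $i$ and takes a transition producing a word $w \in \Gamma^*$, \emph{all} of $w$ is produced while sitting on position $i$; so the natural origin mapping sends every output position produced during that transition to $i$. Consequently, the single-origin restriction $v_{|i}$ is exactly the concatenation of the output words of the transitions taken at position $i$ — and since $T$ is one-way, exactly one transition is taken at each position, so $v_{|i}$ is precisely the output label of that transition. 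This is the reason $\lorigin$ alone fails (it only sees $\mathrm{FO}^2[\leqoutput]$ over a single input position) and why single-origin predicates are exactly what is needed.

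The steps I would carry out are as follows. First, fix the origin semantics described above, and note it is an order-preserving origin mapping: if output position $p$ has origin $i$ and $q$ has origin $j$ with $i <_\inp j$, then $p <_\out q$, and within a single origin the output order agrees with the order in which symbols were emitted. Second, introduce existential monadic second-order variables (so we are actually working in $\elorigin^{\mathrm{so}}$, or we guess the run via a $\binmso[\leqinput,\Sigma]$ predicate over the input — either works; the cleanest is to guess the sequence of transitions by an MSO formula over the input word, which is possible since "there exists an accepting run of $T$ on $u$ whose transition at position $i$ is $\delta$" is MSO-definable over $u$). So we get input predicates $\mathrm{Tr}_\delta(x)$ for each $\delta \in \Delta$, pairwise exclusive, covering $\dom(u)$, and forming a valid accepting run — all expressible in $\binmso[\leqinput,\Sigma]$. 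Third, for each transition $\delta$ with output word $w_\delta$, use the single-origin predicate $\{w_\delta\}(x)$ (with $w_\delta$ viewed as the singleton regular language) to assert $\forall^\inp x\ \mathrm{Tr}_\delta(x) \rightarrow \{w_\delta\}(\ori^{-1}\text{-class of }x)$; more precisely, since single-origin predicates are unary input predicates $L(x)$ with semantics $v_{|x} \in L$, we write $\forall x\ \mathrm{Tr}_\delta(x) \rightarrow \{w_\delta\}(x)$, which forces the block of output produced at $x$ to be exactly $w_\delta$. Fourth, add $\phi_{\mathrm{pres}}$-style conjuncts forcing the origin mapping to be order-preserving so that the global output word is the correct concatenation; one also needs to handle the empty-input / erasing issue using the non-erasing normalisation discussed in Section~\ref{sec:sketch}, or simply allow erased positions. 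The conjunction of all these formulas defines exactly $\semo{T}$ (hence its origin-free projection is the rational transduction), and it is an $\lorigin^{\mathrm{so}}$ (really $\elorigin^{\mathrm{so}}$, or $\lorigin^{\mathrm{so}}$ if the run-guessing is folded into the input MSO) formula.

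The main obstacle is bookkeeping the \emph{global} output order from purely local (per-origin) information. Within one origin $i$ the single-origin predicate pins down the block $v_{|i}$ completely, and order-preservation of $\ori$ stitches consecutive blocks together; but one must be careful that $\lorigin^{\mathrm{so}}$ can actually force $\ori$ to be order-preserving \emph{and} force, within a block, that the $j$-th symbol of $w_\delta$ sits at the $j$-th output position of that block in the right order — the latter is where the $C_i(x)$-style formulas from the proof of Theorem~\ref{thm:msoexpr} (counting, with two variables, which output position among those sharing an origin one is at) get reused. A secondary subtlety is transitions that output $\epsilon$: such positions contribute nothing to the output but still occur in the run, so under the non-erasing convention one inserts a dummy output position, and one must check the single-origin language for those is $\{\epsilon\}$ (or adjust). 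None of this is deep, but getting the quantifier-rank-2 encoding of "block contents in the right internal order, glued in input order" right is the part that requires care.
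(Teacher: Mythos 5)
There is a genuine gap in the step where you claim that ``exactly one transition is taken at each position, so $v_{|i}$ is precisely the output label of that transition'' and then enforce this with the singleton language $\{w_\delta\}$. A one-way transducer in the sense used here is a finite automaton over $\Sigma^*\times\Gamma^*$, so it may take transitions that consume no input; consequently the block of output produced at a single input position is not the output of one transition but an arbitrary word in a regular set, and it can be unboundedly long. For instance the rational relation $\{a\}\times(ab)^*$ (the paper's $\tau_2$) is a counterexample to your encoding: with predicates $\mathrm{Tr}_\delta(x)$ and singleton block languages, each input letter admits only finitely many possible blocks, so the image of $a$ would be finite. Your construction therefore only proves the result for real-time transducers, a strictly weaker class, and it misses exactly the situation that motivates the single-origin extension. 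The paper's proof avoids this by guessing only the state information, via sets $X^p_{q,r}$ meaning ``position $x$ is entered in state $p$, reads its letter into $q$, and its output block takes the automaton from $q$ to $r$,'' and by constraining that block with the single-origin predicate $L_{q,r}(x)$ where $L_{q,r}$ is the full regular language of $\Gamma$-words leading from $q$ to $r$; order-preservation ($\phi_{\mathrm{pres}}$) then makes the blocks contiguous and glued in input order.

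A secondary issue is where the run-guessing lives. Independent unary predicates $\mathrm{Tr}_\delta(x)$ in $\binmso[\leqinput,\Sigma]$ are functions of the input word alone, so they cannot select, for a given o-graph, a run that matches the actual output blocks; and hoisting the existential set quantifiers to the front yields an $\elorigin^{\mathrm{so}}$ formula, which is not the claimed $\lorigin^{\mathrm{so}}$-definability. The paper stays inside $\lorigin^{\mathrm{so}}$ by packaging the existential quantification over the $X^p_{q,r}$, the run-consistency conditions ($\phi_{succ}$, $\phi_{min}$, $\phi_{max}$) and the single-origin constraints into one closed MSO input predicate $\{\exists X^p_{q,r}\ \phi\}$; this is legitimate precisely because single-origin predicates are part of the input signature of $\binmso$ in $\lorigin^{\mathrm{so}}$, so the existential choice of run is correlated with the output blocks. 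Finally, the $C_i(x)$ counting machinery you import from Theorem~\ref{thm:msoexpr} is unnecessary: the predicate $L(x)$ speaks about $v_{|x}$ as an ordered subword, so it already pins down the labels and internal order of each block.
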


Our synthesis result transfers to $\lorigin^{\mathrm {so}}$ (and $\elorigin^{\mathrm {so}}$):
\begin{restatable}{proposition}{UnifLTso}
Any $\elorigin^{\mathrm {so}}$-transduction can be (effectively) realised by a regular function.
\end{restatable}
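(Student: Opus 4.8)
The plan is to reduce $\elorigin^{\mathrm{so}}$ synthesis to the already-established synthesis result for $\elorigin$ (Prop.~\ref{prop-UnifElorigin}), by internalising the single-origin predicates $L(x)$ into ordinary binary input MSO predicates over an \emph{enriched} input alphabet. The key observation is that, unlike the arbitrary binary input predicates of $\lorigin$, the truth of $L(\ori(x))$ at an input position $i$ is not a property of the input word alone: it depends on the subword $v_{|i}$ of the \emph{output} that originates at $i$. So the trick is to guess, at each input position $i$, a state of (a deterministic automaton for) $L$, record it in the alphabet, and then use a combination of existential monadic quantification and $\fo^2$ constraints to check that this guess is consistent with the actual output positions having origin $i$.

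Concretely, I would proceed as follows. Fix a regular language $L$ appearing in a single-origin predicate, and let $\cA_L = (Q, \Gamma, q_0, \delta, F)$ be its minimal DFA. For each input position $i$, I want to annotate it with the sequence of states that $\cA_L$ passes through while reading $v_{|i}$ in output order; but since the output word restricted to origin $i$ can be long, I cannot store the whole run in the finite alphabet. Instead I use existential monadic second-order variables (available in $\elorigin$) to colour each output position $x$ by the state of $\cA_L$ reached \emph{just before} reading $x$ among the positions sharing its origin. I then add $\fo^2$ constraints enforcing: (i) the \emph{first} (in $\leqoutput$) output position with origin $i$ carries colour $q_0$; (ii) if $x <_\out y$ have the same origin and no output position strictly between them (in output order) shares that origin, and $x$ carries state $q$ with label $\gamma$, then $y$ carries state $\delta(q,\gamma)$ — the ``$y$ is the immediate successor of $x$ among same-origin positions'' relation being $\fo^2$-expressible using $\leqoutput$ and the binary input predicate $\{\ori(x) = \ori(y)\}$; and (iii) the annotation of $i$ in the (new) input alphabet records the final state $\delta(q,\gamma)$ reached after the \emph{last} same-origin output position $x$ labelled $\gamma$ carrying state $q$, or $q_0$ if $i$ has no origin-preimage in the output (using non-erasingness this last case can be avoided or handled uniformly). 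Once the new input-alphabet component faithfully records this final state, the predicate $L(\ori(x))$ becomes ``the recorded state at $\ori(x)$ is in $F$'', which is an ordinary unary input predicate over the enriched alphabet, hence a legitimate $\binmso[\leqinput,\Sigma']$ predicate. Doing this simultaneously for all (finitely many) languages $L$ mentioned in $\phi$, and projecting the enriched input alphabet back onto $\Sigma$ at the end, yields an $\elorigin$-formula $\phi'$ over $\Sigma' \times \text{(states)}$ whose origin-semantics projects exactly onto $\semo{\phi}$; applying Prop.~\ref{prop-UnifElorigin} to $\phi'$ and composing with the alphabet projection gives the regular function realising $\phi$.

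The main obstacle I anticipate is handling the ``no same-origin output position strictly between $x$ and $y$'' condition and the ``last same-origin position'' condition within the two-variable restriction: both are, on the face of it, three-variable statements (``for all $z$ between $x$ and $y$\dots''). The resolution is that $\leqoutput$ is a \emph{linear} order on output positions and $\{\ori(x)=\ori(y)\}$ is a fixed binary relation, so ``$y$ is the $\leqoutput$-successor of $x$ among the positions with the same origin'' can be phrased as $x <_\out y \wedge \{\ori(x)=\ori(y)\} \wedge \neg\exists^\out z\,(x <_\out z <_\out y \wedge \{\ori(x)=\ori(z)\})$, and since $\phi$ is first put into Scott normal form anyway, such an auxiliary binary relation on output positions can be named by a fresh existential monadic colour and its defining property checked by $\fo^2$ constraints — exactly the same device already used in the proof of Thm.~\ref{thm:msoexpr} for the predicates $C_i(x)$. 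A secondary point to be careful about is that the monadic colours here range over \emph{output} positions, which is allowed in $\elorigin$ (the $X_i$ range over both input and output positions), so nothing new is needed. Beyond this, the argument is routine bookkeeping: correctness of the encoding is a straightforward induction on the length of $v_{|i}$ matching the run of $\cA_L$, and effectivity is immediate since all constructions (minimal DFA, Scott normal form, the synthesis of Prop.~\ref{prop-UnifElorigin}) are effective.

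It is worth noting that the same encoding works verbatim for $\lorigin^{\mathrm{so}}$ if one does not want the extra existential block — but since we are already allowed existential monadic quantifiers in $\elorigin^{\mathrm{so}}$, and the encoding of the single-origin runs intrinsically needs to guess unbounded state sequences along same-origin output chains, routing everything through $\elorigin$ is the cleanest path and also immediately covers the $\elorigin^{\mathrm{so}}$ case claimed in the statement.
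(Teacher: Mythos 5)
Your reduction to $\elorigin$ breaks at its central step: verifying the guessed run colouring. Your constraints need the relation ``$y$ is the $\leqoutput$-successor of $x$ among the positions with the same origin'', and the proposed fix --- naming it by ``a fresh existential monadic colour'' as in the $C_i(x)$ device of Thm.~\ref{thm:msoexpr} --- does not work: monadic colours are unary, and the $C_i$ trick only defines ``$x$ is the $i$-th position of its origin class'' for each \emph{fixed} $i$ (there the number of copies is bounded), whereas here classes are unbounded; enforcing the colouring correctly is exactly the kind of successor/betweenness constraint that $\fo^2$ over $\leqoutput$ cannot state (and adding $S_\out$ is undecidable by Prop.~\ref{prop:undecfo2}). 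The failure is moreover semantic, not just a matter of phrasing. Take $\phi\in\lorigin^{\mathrm{so}}$ saying ``the input is a single letter $a$ and $v_{|1}$ has even length''. Over a one-position input, any $\elorigin$ formula (over \emph{any} enriched input alphabet: the binary input predicates become constants and the memberships of the unique input position in the $X_i$ contribute only finitely many Boolean cases) defines an output language of the form ``projection of an $\fo^2[\leqoutput,\Gamma\times\{0,1\}^n]$ language''. Since $\fo^2$ with order only defines finite unions of unambiguous monomials $A_0^*a_1A_1^*\cdots a_kA_k^*$, a letter-to-letter projection of such a language has a finite or cofinite set of lengths, so ``even length'' is not expressible. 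Hence no $\elorigin$ formula $\phi'$ over an annotated input alphabet can have $\semo{\phi'}$ project onto $\semo{\phi}$: on the annotation claiming acceptance, $\phi'$ would have to reject exactly the odd-length outputs. This is precisely why the single-origin predicates strictly add expressive power, and why they cannot be ``internalised'' into $\elorigin$.

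The paper's proof avoids checking class runs on the output side altogether: a rational relation $\tau_1$ nondeterministically inserts, after each input letter, a block over a copy of $\Gamma$ intended to be $v_{|i}$; on such enriched \emph{inputs}, full binary MSO over $\leqinput$ is available, so each $L(x)$ becomes an ordinary input predicate about the block with ``virtual origin'' $x$, and what remains is a plain $\lorigin$-transduction $\tau_2$ producing one output letter per block letter. One then synthesises $f_2$ realising $\tau_2$ by Thm.~\ref{thm:unif}, restricts $\tau_1$ to the regular domain of $f_2$, uniformises this rational relation into a rational function $f_1$ by Elgot--Mezei, and takes $f_2\circ f_1$ (the $\elorigin^{\mathrm{so}}$ case is then handled as in Prop.~\ref{prop-UnifElorigin}). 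Note that this uniformisation step is also what is missing at the end of your argument even on its own terms: the function obtained from Prop.~\ref{prop-UnifElorigin} reads \emph{annotated} inputs, and ``composing with the alphabet projection'' goes the wrong way --- one must still choose, regularly and functionally in $u$, one annotation of $u$ lying in the synthesised function's domain, which is again an Elgot--Mezei-style uniformisation, not a projection.
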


\begin{remark}
From the regular synthesis of $\elorigin^{\mathrm {so}}$, we can
deduce several results which we express in their strongest form: The
input domain of any $\elorigin^{\mathrm {so}}$-transduction is
effectively regular, the satisfiability problem for
$\elorigin^{\mathrm {so}}$ is decidable, the equivalence problem for
$\lorigin^{\mathrm {so}}$ is decidable. Finally, given a functional
transduction, it is regular if and only if it is $\elorigin^{\mathrm
  {so}}$-definable, and, given an $\elorigin^{\mathrm {so}}$ sentence
$\phi$, it is decidable whether $\sem{\phi}_o$ is functional.
\end{remark}

\paragraph{Extended logics over data words}
We define similarly the extensions $\eldata$, $\ldata^{\mathrm {sd}}$ and $\eldata^{\mathrm {sd}}$ of the logic \ldata and we obtain the same transfer results as in Thm.~\ref{thm:origin-to-data}. In terms of data, the single-origin predicates become \emph{single datum predicates (sd)} which can specify any regular property over a subword induced by a single datum.

\section{Summary and Discussion}
\begin{figure}
\begin{tikzpicture}

\node (nft) at (0,-0.5) {$1NFT$};
\node (reg) at (2,-0.7) {$fREG$};

\node (tft) at (0,0.5) {$2NFT$};
\node (lt) at (2,0) {$\lorigin$};
\node (nmso) at (4,0) {$\nmso$};

\node (ltso) at (2,1) {$\lorigin^{\mathrm {so}}$};
\node (elt) at (4,1) {$\elorigin$};

\node (eltso) at (3,2) {$\elorigin^{\mathrm {so}}$};

\node (msoo) at (2,2.7) {$\mso_\ori$};

\foreach \a/\b in {reg/tft,reg/lt,reg/nmso,nft/tft,lt/ltso,lt/elt,nmso/elt,ltso/eltso,elt/eltso,tft/msoo,eltso/msoo}
\draw[->, >=stealth] (\a) -- (\b);

\draw[->,>=stealth] (nft) to [bend left=10] (ltso);

\draw[color=white!70!black] (-0.5,2.4) -- (5.5,2.4);
\node[color=white!50!black] (synt) at (5,2.2) {synthesis};
\node[color=white!50!black] (synt) at (5,1.8) {+satisfiability};

\draw[color=white!70!black] (5,0.1) arc (60:80:17);
\node[color=white!50!black] (synt) at (6.2,0.5) {equivalence (with origin)};

\end{tikzpicture}
\caption{Summary of models for transductions and their inclusions. The lines are decidability frontiers.}\label{Fig-Summary}
\end{figure}
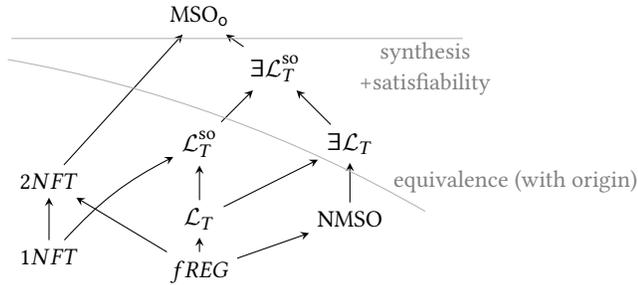
In this paper, we have introduced an \emph{expressive} logic to define
transductions, which we believe is a great tool from both a
theoretical and a more practical point of view. It allows for high-level specification of transductions, 
while having some good properties for synthesis. As an interesting
side contribution, we obtain a new characterisation of the class of
regular transductions, as the (functional) transductions definable in $\lorigin$ (and its
extensions up to $\elorigin^{\text{so}}$). The expressiveness and decidability frontiers on the logic $\lorigin$
and its extensions are summarised in
Fig.~\ref{Fig-Summary}.
We obtained tight complexity results for satisfiability of $\lorigin$
both in the case of binary input predicates given by \mso-formulas
(non-elementary) or automata (\expspace).
We have also shown that slightly
extending the expressiveness by adding the successor over output
positions leads to undecidability. 

Another question is the definition of an automata model equivalent to
$\lorigin$, or even to $\mso_\ori$. Automata for data words have been defined~\cite{BMSSD06,MZ13}, but
none of these models capture $\ldata$. 
 
 
The equivalence problem for $\lorigin$ is origin-dependent. One could relax it by projecting
away the origin information: given two $\lorigin$-formulas
$\phi_1,\phi_2$, are the origin-free transductions they define equal,
\ie $\sem{\phi_1} = \sem{\phi_2}$ ? This (origin-free) equivalence
problem is known to be decidable for regular functions~\cite{Gurari82}, and
undecidable for 1NFT (and hence 2NFT)~\cite{DBLP:journals/jacm/Griffiths68} as well as NMSOT~\cite{AD11}. It is shown by
reduction from the Post Correspondence Problem and it turns out that
the transductions constructed in the reduction of
\cite{DBLP:journals/jacm/Griffiths68} are definable in $\lorigin$,
proving undecidability for $\lorigin$ as well. An interesting line of research would be to consider less drastic relaxations of the
equivalence problem with origin, by comparing transductions with
\emph{similar} origin, as done for instance in~\cite{conf/icalp/FiliotJLW16} for rational
relations. Similarly, the model-checking of two-way transducers against
$\mso_\ori$-sentences is decidable, but it is again
origin-sensitive. Instead, the origin-free version of this problem is
to decide whether for all the pairs of words $(u,v)$ defined by a
two-way transducer $T$, there exists some origin mapping $o$ such that
the $o$-graph $(u,v,o)$ satisfies some formula $\phi$
. Once again, it is possible
to show, by reducing PCP, that this relaxation yields
undecidability, but it could be interesting to consider a stronger
problem where the origin of $T$ is ``similar'' to the origin specified in
$\phi$. 
A related problem is the satisfiability of
logics where two or more origin mappings are allowed.


Another direction would be extending the logic to other structures (\eg 
trees or infinite words), and other predicates over output
positions. However, one has to be careful since the data point of view
shows how close we are to undecidability (\eg over data words,
$\fo^2$ with successor over data and positions is
undecidable~\cite{journals/corr/ManuelSZ13}).


Finally, we have established a tight connection between transductions and data
words, and thus a new decidable logic for data words. The data point of view allowed
us to get decidability of the transduction logic $\lorigin$, inspired
by the decidability result of~\cite{SZ12}. 
Conversely, the logic $\ldata$ extends the known results on data words by adding \mso predicates on the ordered and labelled data. 
We would like to investigate if
further results from the theory
of transductions can be translated into interesting results in the
theory of data words. 


\bibliography{biblio}
\newpage

\appendix

\section{Logics with origin for transductions}

\propUndecfoD*
\begin{proof}
    The proof is a reduction from the Post Correspondence
    Problem (PCP) and is an adaptation to
o-tranductions of the undecidability, over data words, of
$\fo^2$ with a linear order and successor predicates over positions, and a
linear-order on data \cite{BMSSD06}.

    Given an alphabet $A$ and $n$ pairs
    $(u_i,v_i)\in A^+\times A^+$ (they can be assumed to be non-empty
    without losing undecidability), we construct a sentence 
    $\phi\in\fo[\mathcal T_{\Sigma,\Gamma}]$ which is satisfiable iff
    there exist $i_1,\dots,i_k\in\{1,\dots,n\}$ such that 
    $u_{i_1}\dots u_{i_k} = v_{i_1}\dots v_{i_k}$. 
    We let $\Sigma = A$ and $\Gamma = A_1\cup A_2$, where $A_i =
    A\times \{i\}$. Given a word $w=a_1\dots a_p\in A^*$ and $\ell=1,2$, we let 
    $\ell(w) = (a_1,\ell)\dots (a_p,\ell)\in A_\ell^*$. For any
    two sequences of words $s = w_1,w_2,\dots,w_k \in A^*$ and 
    $s' = w'_1,\dots,w'_k\in A^*$, we define $s\otimes s'\in \Gamma^*$ their
    interleaving, by $1(w_1)2(w'_1)1(w_2)2(w'_2)\dots
    1(w_k)2(w'_k)$. \\
    \Eg $(ab,ca)\otimes (a,bca) =
    (a,1)(b,1)(a,2)(c,1)(a,1) (b,2) (c,2)(a,2)$.

    We will construct the formula $\phi$ in such a way that it defines
    the o-tranduction from $\Sigma$ to $\Gamma$ which maps any word
    $u\in \Sigma^*$ for which there exist
    $i_1,\dots,i_k\in\{1,\dots,n\}$ such that $u_{i_1}\dots u_{i_k} =
    u = v_{i_1}\dots v_{i_k}$, to 
    $w = (u_{i_1},\dots, u_{i_k})\otimes (v_{i_1},\dots, v_{i_k})$, with
    origin mapping $o$ which maps any position of $w$ corresponding to
    some $u_{i_j}$ (or to some $v_{i_j}$) to the same position in
    $u$. \Eg, over $A = \{a,b,c\}$, if one takes
    $u_1 = ab$, $u_2 = ca$, $v_1 = a$, $v_2 = bca$, then the sequence
    $1,2$ is a solution to PCP, and it gives rise to the following
    o-graph:

\begin{center}
\begin{tikzpicture}[>=stealth',auto,node distance=3cm]
\node[] (i1) at (0,0) {$a$};
\node[] (i2) at (1,0) {$b$};
\node[] (i3) at (2,0) {$c$};
\node[] (i4) at (3,0) {$a$};

\node[] (j1) at (-2,-1) {$a,1$};
\node[] (j2) at (-1,-1) {$b,1$};
\node[] (h1) at (0,-1) {$a,2$};
\node[] (j3) at (1,-1) {$c,1$};
\node[] (j4) at (2,-1) {$a,1$};
\node[] (h2) at (3,-1) {$b,2$};
\node[] (h3) at (4,-1) {$c,2$};
\node[] (h4) at (5,-1) {$a,2$};

\draw [->] (j1) -- (i1) ;
\draw [->] (j2) -- (i2) ;
\draw [->] (j3) -- (i3) ;
\draw [->] (j4) -- (i4) ;

\draw [->] (h1) -- (i1) ;
\draw [->] (h2) -- (i2) ;
\draw [->] (h3) -- (i3) ;
\draw [->] (h4) -- (i4) ;

\draw[decoration={brace,mirror,raise=5pt},decorate]
  (-2.3,-1) -- node[below=6pt] {$u_1$} (-0.7,-1);

\draw[decoration={brace,mirror,raise=5pt},decorate]
  (-0.3,-1) -- node[below=6pt] {$v_1$} (0.3,-1);

\draw[decoration={brace,mirror,raise=5pt},decorate]
  (0.7,-1) -- node[below=6pt] {$u_2$} (2.3,-1);

\draw[decoration={brace,mirror,raise=5pt},decorate]
  (2.7,-1) -- node[below=6pt] {$v_2$} (5.3,-1);
 \end{tikzpicture}
\end{center}

    First, we express that the output word is of the form 
    $(u_{i_1},\dots,u_{i_k})\otimes (v_{i_1},\dots,v_{i_k})$ for some 
    $i_1,\dots,i_k$. For that, we need to define a formula $\phi_{\text{cut}}(x)$ which holds
    true at output position $x$ if either $x$ is the first output
    position, or it is labelled in $A_1$ while its predecessor is
    labelled in $A_2$:
    $$
    \phi_{\text{cut}}(x)\equiv \forall^\out
    y\cdot S_\out(y,x)\rightarrow A_1(x)\wedge A_2(y)
    $$
    where for all $\ell=1,2$, $A_\ell(x)$ stands for $\bigvee_{a\in A}
    (a,\ell)(x)$. 

    Now, the idea when $x$ is a cut (\ie satisfies the formula
    $\phi_{\text{cut}}(x)$), is to  guess an index $i\in\{1,\dots,n\}$
    and check that the sequence of labels from position $x$ ($x$
    included) to the next cut (if it exists) or to the end (if not) is 
    $1(u_i)2(v_i)$.

    To define this, we introduce, for all 
    formulas $\phi$ with one free variable, the formula
    $\Lbag\phi\Rbag_j(x)$ which holds true if the $j$-th successor of $x$ exists 
    and satisfies $\phi$. It is inductively defined by:
    $$
    \Lbag \phi \Rbag_0(x)\equiv \phi(x) \qquad \Lbag\phi\Rbag_j(x)\equiv \exists^\out y\cdot S_\out(x,y)\wedge \Lbag\phi\Rbag_{j-1}(y)
    $$
    where $y$ is a variable  different from $x$. 
    Then, we define the following formula for $i\in\{1,\dots,n\}$:
    $$
    \begin{array}{lllllll}
    \phi_{u_i,v_i}(x)& \equiv & 
                                \bigwedge_{j=1}^{|u_i|} \Lbag (u_i(j), 1)(x) \Rbag_{j{-}1}(x)
      \\
      & & \wedge \bigwedge_{j=1}^{|v_i|} \Lbag (v_i(j), 2)(x)
          \Rbag_{j{-}1+|u_i|}(x) \\
& & \wedge \Lbag \phi_{\text{cut}}(x)\vee \text{max}_\out(x)\Rbag_{|u_i|+|v_i|-1}(x)
   \end{array}
     $$
     Finally, the following formula expresses that the 
     output word is of the form $(u_{i_1},\dots,u_{i_k})\otimes
     (v_{i_1},\dots,v_{i_k})$ for some $i_1,\dots,i_k$:
     $$
     \phi_{\text{well-formed}}\equiv \forall^\out x\cdot (
     \phi_{\text{cut}}(x)\rightarrow \bigvee_{i=1}^n \phi_{u_i,v_i}(x))
     $$

     So far, we have not checked any property of the origin mapping,
     nor the fact that the output decomposition satisfies
     $u_{i_1}\dots u_{i_k} = v_{i_1}\dots v_{i_k} = u$ if $u$ is the
     input word. To achieve that, it remains to express, for all $\ell=1,2$,  that the
     origin mapping restricted to positions labelled in $A_\ell$ is
     bijective and preserves the orders and labels.
     $$
     \begin{array}{rllllllllll}
     \phi_{\text{bij},\ell} & \equiv & \forinput x \exists^\out
     y\ A_\ell(y)\wedge\ori(y)=x\\
         && \!\!\!\!\wedge \forall^\out x,y\
            (\ori(x)=\ori(y)\wedge A_\ell(x)\wedge
            A_\ell(y)) \\
 & &\qquad \qquad \rightarrow x=y \\
       \phi_{\text{ord-pres},\ell} & \equiv & \forall^\out x,y        \           (\ori(x)<_\inp \ori(y)\wedge
                                              A_\ell(x)\wedge
                                              A_\ell(y))
       \\
   & & \qquad \qquad \rightarrow x<_\out y 
\\
\phi_{\text{lab-pres},\ell}  &\equiv & \forall^\out x\bigwedge_{a\in A}
                                         (a,\ell)(x)\rightarrow a(\ori(x)))
     \end{array}
     $$
The final formula $\phi$ is then:
$$
\phi\equiv     \phi_{\text{well-formed}} \wedge \bigwedge_{\ell=1}^2
     \phi_{\text{bij},\ell}\wedge \phi_{\text{ord-pres},\ell}\wedge \phi_{\text{lab-pres},\ell}
$$

Note that we have only used two variables $x$ and $y$ all over the
construction. 
\end{proof}

\propUndecSynth*

\begin{proof}
    First, let $\phi_{\text{cfl}}$ be the $\mso_\ori$-sentence
    defining the transduction $\tau_{\text{cfl}}$ of Example~\ref{ex:mso}.
    We reduce the $\mso_\ori$ satisfiability
    problem, which is undecidable by Prop.~\ref{prop:undecfo2}, to the
    regular synthesis problem. Let
    $\psi$ be an $\mso_\ori$-sentence of which we want to test
    satisfiability, over alphabets $\Sigma,\Gamma$, which do not
    contain $a,b$. We construct an $\mso_\ori$ sentence $\psi'$ over
    the input alphabet $\Sigma\cup\{a,\#\}$ and output alphabet
    $\Gamma\cup \{b,\#\}$, which defines the
    transduction consisting of the o-graphs
    $(u_1\#u_2,(v_1\#v_2,o))$ such that $(u_1,(v_1,o_1))\models \psi$,
    where $o_1$ is the restriction of $o$ to $v_1$, and
    $(u_2,(v_2,o_2))\models \phi_{\text{cfl}}$, where $o_2$ is the restriction of $o$
    to $v_2$.

    Before explaining how to construct $\psi'$, let us convince the
    reader that $\sem{\psi'}$ is realisable by a regular functional
    transduction iff $\dom(\sem{\psi'}) = \varnothing$ iff $\sem{\psi}=\varnothing$ 
    iff $\semo{\psi}=\varnothing$. Clearly, if $\semo{\psi}=\varnothing$, then
    $\dom(\sem{\psi'}) = \varnothing$ and hence $\sem{\psi'}$ is
    realisable by the regular function with empty domain. Conversely,
    if $\semo{\psi}\neq \varnothing$, there exists $(u_1,(v_1,o_1))\models
    \psi$. Towards a contradiction, suppose that $\sem{\psi'}$ is
    realisable by a regular function $f$. Since regular functions are
    closed under regular domain restriction, the function $f' = f|_L$
    where $L = u_1\#(a+b)^*$ is regular, and hence has regular
    domain. This contradicts the fact that $\dom(f') = \dom(f)\cap L =
    \{ u_1\# a^nb^n\mid n\geq 0\}$ is non-regular.

    Finally, we let $\psi' = \psi_{dom}\wedge \psi_{codom}\wedge \psi^{<\#}
    \wedge \phi_{\text{cfl}}^{>\#}$ where $\psi_{dom}$ expresses that the domain is
    included in $\Sigma^*\#(a+b)^*$, $\psi_{codom}$ that the codomain
    is included in $\Sigma^*\#(a+b)^*$, $\psi^{<\#}$ is just the
    formula $\psi$ where the input (resp. output) quantifiers are
    guarded to range before the unique input (resp. output)
    position labelled $\#$, and symmetrically for
    $\phi_{\text{cfl}}^{>\#}$. 
\end{proof}

\section{Expressiveness, decidability and synthesis for \lorigin}
\subsection{Expressiveness of \lorigin}

\thmMSOexpr*
\begin{proof}

First let us define some unary and binary predicates for the input.
Let $P$ be a subset of $\set{1,\ldots,k}$, we define the formula which states that the copies of $x$ which are used for the output are the ones of $P$:

$$\phi_P(x)=\bigwedge_{c\in P}\phi_{pos}^c(x)\bigwedge_{c\notin P}\neg\phi_{pos}^c(x)$$
Let $c_1, \ldots, c_l$ be a sequence of non-repeating integers smaller than $k$, then we define the formula which says that the order of the copies of $x$ in the output follow the sequence:

$$\phi_{c_1,\dots,c_l}(x)=\phi_{\set{c_1,\dots,c_l}}(x)\bigwedge_{1\leq
  i\leq j\leq l} (\phi_{\leq}^{c_i,c_j}(x,x))$$
Now let $v\in \Gamma^l$, we define the formula specifying the letters of the output positions:

$$\phi_{c_1,\dots,c_l,v}(x)=\phi_{c_1,\dots,c_l}(x)\bigwedge_{i\leq l}\phi_{v(i)}^{c_i}(x)$$
Let $d_1,\ldots, d_m$ be a sequence of non-repeating integers smaller than $k$ and $w\in \Gamma^m$, then we define:

$$\phi_{c_1,\dots,c_l,v,d_1,\dots,d_m,w}(x,y)=\phi_{c_1,\dots,c_l,v}(x)\wedge \phi_{d_1,\dots,d_m,w}(y)$$

Now we define an $\lorigin$-formula $C_i(x)$ which states that $x$ is
exactly the $i$th output position of some input position.

$$C_1(x)= \out(x)\wedge \forall^\out y\  y <_\out x \rightarrow  \{\ori(x) \neq \ori(y)\} $$
And for $i\geq 1$:
$$ \begin{array}{ll}
C_{i+1}(x)=&\exists^\out y\ (y<_\out x \wedge \{ \ori(x) = \ori(y)\} \wedge C_i(y))\\
&\wedge \forall^\out y\ (y<_\out x \wedge \{ \ori(x) = \ori(y)\} \wedge C_i(y))\\
&\  \rightarrow \neg \exists^\out x\ (x<_\out y \wedge \{ \ori(x) = \ori(y)\} \wedge C_i(x))
\end{array}
$$
Note that we have used only two variables $x$ and $y$. 
Now we can define an \lorigin formula which defines the \mso-transduction:

$$\begin{array}{l}
\{\phi_{dom} \} \wedge \foroutput x\ \neg C_{k+1}(x)
        \wedge \forinput x\ \{\phi_\varnothing
        (x)\}\rightarrow (\foroutput y\ \{ \ori(y)\neq x\})\\
\wedge \forall^\out x,y\ \bigwedge_{m,l\leq
    k,c_1,\dots,c_l,v\in\Gamma^l,d_1,\dots,d_m,w\in\Gamma^m,i\leq l,j\leq m} \\

\qquad \qquad\quad \big(C_i(x)\wedge C_j(y)\wedge \\
\qquad \qquad \quad   \{ \phi^{c_i,d_j}_{\leq}\}(\ori(x),\ori(y))\wedge \\
\qquad \qquad \quad \{\phi_{c_1,\dots,c_l,v,d_1,\dots,d_l,w}\}(\ori(x),\ori(y))  \big)\\
\qquad \qquad  \rightarrow \tuple{x\leq_\out y\wedge v(i)(x) \wedge w(j)(y)} \\

\end{array}
$$
\end{proof}

\propIncomp*
\begin{proof}
Firstly, since all \mso-transductions are \lorigin-definable, and as \nmso are defined as \mso-transducers with additional existential parameters, it should be clear that $\exists$\lorigin subsumes \nmso-transductions.

We now turn to the incomparability results.
All witnesses of incomparability are given in Fig.~\ref{fig:expr} that is recalled here.
\setcounter{figure}{1}
\begin{figure}[t]
\begin{minipage}{0.55\linewidth}
\begin{center}
\def\firstcircle{(0,0) circle (1.8cm)}
\def\secondcircle{(55:2cm) circle (1.8cm)}
\def\thirdcircle{(0:2cm) circle (1.8cm)}

\begin{tikzpicture}[scale=0.8]
    \draw[color=blue] \firstcircle node[below left] {};
    \node[color=blue] (tnft) at (-.4,-1.3) {\footnotesize $\text{2NFT}$};
    \draw[color=red] \secondcircle node [above] {};
    \node[color=red] (lt) at (1,3) {\footnotesize $\lorigin$};
    \draw[color=green!50!black]  \thirdcircle node [below right] {};
    \node[color=green!50!black] (nmsot) at (2.2,-1.3) {\footnotesize $\nmso$};
    \node (msot) at (1,0.5) {\tiny REG};
    \draw (1,0.5) circle (0.55cm);

    \draw[rotate=30,dashed] (2.3,-0.2) ellipse (2.7cm and 3.2cm);
    \node (elt) at (3.7, 2.5) {\footnotesize $\elorigin$};

    \node (msor) at (-1,4) {\footnotesize $\mso_\ori$};

    \node[color=red]  (t1) at (1,2.3) { $\tau_1$};
    \node[color=green!50!black] (t2) at (2.9,-0.3) { $\tau_2$};
    \node[color=blue] (t3) at (-1.4,-0.3) { $\tau_3$};
    \node[color=magenta] (t4) at (-0.1,1.3) {
      $\tau_4$};
    \node[color=yellow!55!black] (t5) at (2.1,1.3) { $\tau_5$};
    \node[ color=cyan] (t6) at (1,-0.8) { $\tau_6$};
    \node (t7) at (4,1.6) { $\tau_7$};
    \node (t8) at (0,-0.1) { $\tau_8$};
    \node (t9) at (-1.3,3.3) { $\tau_9$};
\end{tikzpicture}
\end{center}
\end{minipage}
\begin{minipage}{0.45\linewidth}
{\footnotesize
\begin{itemize}[leftmargin=*]
\item[] {\color{red}$\tau_1=\text{shuffle}$}
\item[] {\color{green!50!black}$\tau_2=\{(u,vv)\mid v\preceq u, |v| \text{ is even}\}$}
\item[]{\color{blue}$\tau_3=\{a\}\times (ab)^*$}
\item[]{\color{magenta}$\tau_4=\Sigma^+\times \Gamma^*$}
\item[]{\color{yellow!55!black}$\tau_5=\{(u,vv)\mid v\preceq u\}$}
\item[]{\color{cyan}$\tau_6=\{(u,v)\mid v\preceq u, |v| \text{ is even}\}$}
\item[]$\tau_7=\tau_1\circ \tau_6$
\item[]$\tau_8=\{a\}\times (a+b)^*aa(a+b)^*$
\item[]$\tau_9=\{a^nb^n,(ab)^n\mid\ n>0\}$
\end{itemize}
}
\end{minipage}

\caption{Expressiveness of \lorigin and \elorigin, compared to
  non-deterministic \mso transductions, non-deterministic two-way
  transducers and regular functions.}
\end{figure}
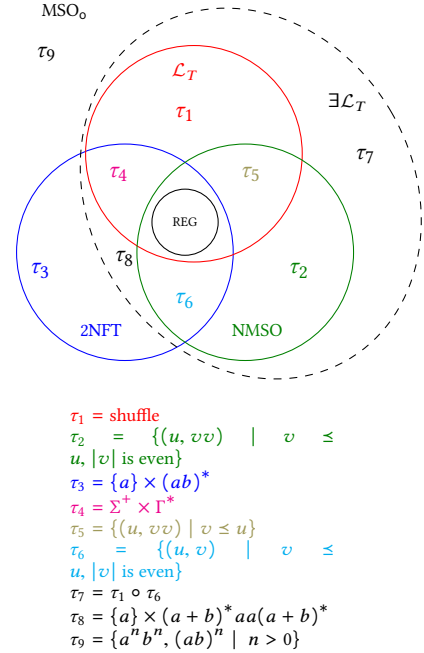
First, $\mso_\ori$ is strictly more expressive than the other formalisms since it is able to specify relations with non regular domain. Indeed a formula for $\tau_9$ simply states that the origin is bijective and label -preserving, that the output domain is $(ab)^*$ and that the input has all $a$s before $b$s.


Now the logic \lorigin and 2NFT are not included in \nmso as they can describe the universal relation $\tau_4=\Sigma^+\times \Gamma^*$, which cannot be defined in \nmso as the number of images of a word $u$ of length $n$ by an \nmso is bounded by the number of possible evaluation of the second order parameters $X_i$, hence bounded by $2^{cn}$, where $c$ is the number of parameters.

\nmso and \lorigin are not included in 2NFT as they can synchronise nondeterministic choices over several readings of the input word, which 2NFT cannot do. This is illustrated by relation $\tau_5$ which first selects a subword of the input and copies it twice. 
An \lorigin formula defining $\tau_5$ states that the input positions producing output produce exactly 2 outputs, that labels are preserved, and that the input order is respected within first copies, as well as the second copies.
An \nmso describing $\tau_5$ simply non deterministically selects a subword via a parameter $X$ and produces $X$ twice, ordering the copies as in the input order.

Finally, \nmso and 2NFT are not included in \lorigin since they are able to specify arbitrary properties of the output that are not definable in $\fo^2$, which is not doable with \lorigin.
The relation $\tau_6$ is easily done with a 2NFT, and can be done in \nmso with a single parameter $X$ which is required to be of even size.
\end{proof}

\section{Domain regularity and synthesis: proofs}
\subsection{Scott Normal Form}
\subsubsection{Non-erasing o-tranductions}
The first step of the transformation consists in ensuring that all the o-graphs satisfying the formula are \emph{non-erasing}, meaning that each input position produces at least one output position.
Formally, an o-graph $(u,(v,o))\in\ProdSG$ is said to be \emph{non-erasing} if
$o$ is a surjective function, and an $\lorigin$-formula $\phi$ is
\emph{non-erasing} if all o-graphs of $\sem{\phi}$ are non-erasing.
Satisfiability of \lorigin is reducible to satisfiability of non-erasing formulas, by adjoining to the output a copy of the input.
\begin{restatable}{proposition}{thmNonErasing}\label{Thm-NonErasing}
For any \lorigin-formula $\phi$ there exists a non-erasing
\lorigin-formula $\phi'$ such that $\dom(\sem{\phi}) =
\dom(\sem{\phi'})$. In particular, $\phi$ is satisfiable if, and only if, $\phi'$ is.
\end{restatable}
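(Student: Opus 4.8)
The plan is to carry out the padding construction hinted at in Section~\ref{sec:sketch}: append to the output, for each \emph{erased} input position (one with no pre-image under the origin map), a single fresh letter whose origin is that position, and express this in $\fo^2$. Fix a symbol $\#\notin\Gamma$ and set $\Gamma' := \Gamma\uplus\{\#\}$. For an o-graph $g=(u,(v,o))\in\ProdSG$, let $I_g := \dom(u)\setminus o(\dom(v)) = \{i_1<\dots<i_k\}$ be the set of its erased positions, and define its \emph{padding} $\widehat g := (u,(v\cdot\#^k,\widehat o))\in\Prods{\Sigma}{\Gamma'}$, where $\widehat o$ extends $o$ by sending the $j$-th appended position to $i_j$. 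Then $\widehat o$ is surjective, so $\widehat g$ is non-erasing, and conversely deleting from $\widehat g$ all positions labelled $\#$ recovers $g$ exactly.

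Next I would set $\phi' := \phi^{\neg\#}\wedge\forinput z\,\exoutput x\,\{z=\ori(x)\}$, where $\phi^{\neg\#}$ is $\phi$ in which every first-order quantifier is relativised to $\{x : \neg\#(x)\}$ (replacing $\exists x\,\psi$ by $\exists x\,\neg\#(x)\wedge\psi$ and $\forall x\,\psi$ by $\forall x\,\neg\#(x)\to\psi$, uniformly for guarded and unguarded quantifiers). Since $\#$ is an output label, $\neg\#(x)$ holds of every input position and of exactly the non-$\#$ output positions, i.e.\ it defines the subdomain $D$ obtained by removing the $\#$-positions; relativising an $\inp$-guarded quantifier this way is harmless, so $\phi^{\neg\#}$ is still an \lorigin-formula (same two variables, same $\binmso[\leqinput,\Sigma]$ predicates and origin terms). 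The added conjunct $\forinput z\,\exoutput x\,\{z=\ori(x)\}$ states exactly that the origin map is surjective, so every model of $\phi'$ is a non-erasing o-graph, as required.

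It then remains to prove $\dom(\sem\phi)=\dom(\sem{\phi'})$, and this rests on the standard relativisation lemma: for any $\Gamma'$-o-graph $g'$, writing $g$ for the $\Gamma$-o-graph obtained from $g'$ by deleting its $\#$-positions, we have $g'\models\phi^{\neg\#}$ iff $g\models\phi$. Given this, the forward inclusion is immediate: if $u\in\dom(\sem\phi)$ pick $g=(u,(v,o))\models\phi$; its padding $\widehat g$ has $\#$-deletion equal to $g$, hence $\widehat g\models\phi^{\neg\#}$, and $\widehat g$ is non-erasing, so $\widehat g\models\phi'$ and $u\in\dom(\sem{\phi'})$. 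For the reverse inclusion, if $g'=(u,(v',o'))\models\phi'$ then its $\#$-deletion $g$ satisfies $\phi$, so $u\in\dom(\sem\phi)$. The ``in particular'' clause follows since a transduction is non-empty iff its domain is non-empty (every input word is non-empty).

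The only mildly delicate step is the relativisation lemma, or rather the bookkeeping it requires: one must check that the substructure of a $\Gamma'$-o-graph induced by $D$ is a genuine $\Gamma$-o-graph — the one non-obvious point being that $\ori$ restricted to $D$ is still a total function, which holds because $\ori$ maps output positions to input positions and input positions to themselves — and that $\neg\#(x)$ is indeed the correct guard for all quantifier sorts. Everything else is routine. For the stronger claim mentioned in the body (that the set of realising regular functions is preserved, which is not needed for this proposition) I would additionally observe that one can pin $\widehat g$ down uniquely by conjoining to $\phi'$ the $\fo^2$ clauses forcing the $\#$-positions to follow all non-$\#$ positions, to be ordered by the $\leqinput$-order of their origins, and to have pairwise distinct origins.
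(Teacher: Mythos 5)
Your proof is correct and takes essentially the same route as the paper's: enlarge the output alphabet with a fresh symbol $\#$, relativise $\phi$ to the non-$\#$ positions, force non-erasingness, and use a padded o-graph as the witness for the forward direction. The only differences are minor: the paper pins the padding down to a single $\#$ followed by an identity copy of the input (a canonical shape it exploits to preserve realising functions), whereas you leave the $\#$-positions unconstrained and simply assert surjectivity of $\ori$ --- and in that conjunct you should reuse the variable name $y$ rather than $z$ to stay literally within the two-variable fragment.
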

\begin{proof}
let $\phi$ be an \lorigin-formula, we want to obtain an \lorigin-formula $\phi^{\text{n.e.}}$ which is non-erasing.
The idea is to extend the output of all o-graphs by a copy of the input word.
We add a new output letter $\sharp$ which will separate the normal output and the copy of the input.
We want to obtain $(u,(v,o))\models\phi$ iff $(u,(v\sharp u,o'))\models\phi^{\text{n.e.}}$ where $o'(i)=o(i)$ if $i\leq |v|$, $o'(1+i+|v|)=i$ if $i\leq |u|$ and $o'(|v|+1)=1$.
From $\phi$, we construct $\phi^{<\sharp}$ where every quantification over the output positions is relativised as being before a position labelled by $\sharp$. Similarly, for $\phi_{\text{id}}$ the identity o-tranduction, we define
$\phi_{\text{id}}^{>\sharp}$ where quantifications over the output are
relativised as appearing after a position labelled by $\sharp$. Adding
the guards can be done while staying in the two-variable fragment. 
Then we define $\phi^{\text{n.e.}}$ to be equal to:
$$\phi^{<\sharp}\wedge \phi_{\text{id}}^{>\sharp} \wedge \exoutput
x\ \sharp(x) \wedge  \text{min}_\inp(\ori(x))\wedge \foroutput y\
\sharp(y)\rightarrow x=y$$
\end{proof}

An \emph{output formula} is an $\lorigin$ formula which is only allowed to quantify over output positions.
The point of considering non-erasing formulas is that one can always transform a non-erasing formula into an equivalent output formula.

\begin{proposition}\label{prop:outfor}
For an $\lorigin$ formula, one can construct an output formula which is equivalent (over non-erasing o-graphs).
\end{proposition}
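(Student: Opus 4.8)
The plan is to exploit non-erasingness in the only way it helps here: over a non-erasing o-graph every input position equals $\ori(x)$ for at least one output position $x$, so a quantifier ranging over input positions can be simulated by a quantifier over output positions that reads $\ori(x)$ in place of $x$.

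First I would make every quantifier explicitly sorted. Since the domain of an o-graph is partitioned by $\inp$ and $\out$, each $\exists x\,\psi$ is equivalent to $(\exists^\inp x\,\psi)\vee(\exists^\out x\,\psi)$ and each $\forall x\,\psi$ to $(\forall^\inp x\,\psi)\wedge(\forall^\out x\,\psi)$; these rewritings only duplicate a subformula and so stay inside the two-variable fragment. After applying them exhaustively, the only quantifiers still to be removed are the input-sorted ones.

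Next I would eliminate those innermost-first. Pick an occurrence $Q^\inp x\,\psi$ with $Q\in\{\exists,\forall\}$ such that $\psi$ contains no input-sorted quantifier (it may still contain output-sorted ones and occurrences of the other variable $y$), and replace it by $Q^\out x\,\psi'$, where $\psi'$ is obtained from $\psi$ by substituting the term $\ori(x)$ for every free occurrence of $x$ — simplifying any nested $\ori(\ori(t))$ to $\ori(t)$, which is sound since $\ori$ always takes its values among input positions, on which it acts as the identity. The substitution introduces no new variable, hence causes no capture and keeps us in $\fod$, and it strictly decreases the number of input-sorted quantifiers; iterating yields a formula all of whose quantifiers are output-sorted, i.e. an output formula.

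The one step with real content is checking that each replacement preserves truth on a non-erasing o-graph $g$. For $Q=\exists$: $g\models\exists^\inp x\,\psi$ iff some input position satisfies $\psi$; since $\ori$ is interpreted on output positions by the origin map — which is surjective precisely because $g$ is non-erasing — and every value of $\ori$ is an input position, this holds iff some output position satisfies $\psi'$, i.e. $g\models\exists^\out x\,\psi'$. The case $Q=\forall$ is dual, the forward direction needing only that $\ori(x)$ is always an input position and the backward direction needing surjectivity. I expect the only mild obstacle to be bookkeeping: one must check that nothing goes wrong with $y$ or with a shadowing $x$-quantifier (a shadowed $x$ has no free occurrence in its scope, so nothing is substituted there), and that atoms such as $\gamma(\ori(x))$ with $\gamma\in\Gamma$, or $\ori(x)\leqoutput\ori(y)$, behave exactly as $\gamma(x)$, $x\leqoutput y$ did when $x$ was input-sorted — in both readings they are simply false.
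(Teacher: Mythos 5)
Your proposal is correct and takes essentially the same route as the paper: its single inductive step rewrites $\exists x\,\psi$ as $\exoutput x\,\psi \vee \exoutput x\,\psi(\ori(x))$, which is exactly the combined effect of your two phases (sorting quantifiers by $\inp$/$\out$, then turning input-sorted quantifiers into output-sorted ones via the substitution $x\mapsto\ori(x)$). Both arguments rest on the same key fact, namely surjectivity of the origin mapping on non-erasing o-graphs, so no further comparison is needed.
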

\begin{proof}
This is shown by constructing inductively an output formula. Atomic formulas are not affected, and boolean connectives are left unchanged.
The remaining case is when $\phi$ is of the form $\exists x\ \psi(x)$.
Then $\phi$ is transformed into $\phi'= \exoutput x\ \psi(x) \vee \exoutput x\ \psi(\ori(x))$. Over non-erasing o-graphs, the two formulas are satisfied by the same models, since any input position is the origin of some output position.
\end{proof}

\subsubsection{Normal Form}
The third step is to normalise any formula in \lorigin into a Scott
normal form (SNF). The procedure to put a formula in SNF is the same
as for $\fo^2$ logics in general (see \cite{GO99} for instance). The point of the SNF is to obtain a formula with additional predicates, which are axiomatised in the formula itself, but with a quantifier depth limited to 2, which lowers the complexity of the formulas.
We prove in our context, along with some preservation property, that
any \lorigin formula can be put in SNF while preserving satisfiability.
Since we aim to get stronger properties than satisfiability, we state
a stronger result, yet the proof is similar.

\begin{restatable}{lemma}{lemScott}\label{lemma-Scott}
For any $\lorigin$-formula $\varphi$ over input alphabet $\Sigma$ and output alphabet $\Gamma$, one can construct an $\lorigin$-formula $\phi$ over $\Sigma$ and $\Gamma\times\Gamma'$ such
that:
\begin{itemize}[leftmargin=*]
\item $\Gamma'$ is a finite alphabet,
\item up to projection on $\Gamma$, $\phi$ and $\varphi$ have the same models,
\item $\phi$ is of the form $\forall^\out x\forall^\out y\
  \psi(x,y)\wedge\bigwedge_{i=1}^m\forall^\out x\exists^\out y\ \psi_i(x,y)$
  where the formulas $\psi$ and $\psi_i$, $i=1,\dots,m$, are
  quantifier free.
\end{itemize}
\end{restatable}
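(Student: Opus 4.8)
The goal is a Scott normal form for $\lorigin$-formulas. I would follow the standard $\fo^2$ construction (as in~\cite{GO99}), but carefully tracking the two features special to $\lorigin$: the origin function $\ori$ among the terms, and the binary $\mso[\leqinput,\Sigma]$ predicates. Before invoking the SNF procedure proper, I would first normalise the \emph{terms} appearing in the formula. Since $\ori$ maps output positions to input positions and acts as the identity on input positions, any composed term $\ori^n(x)$ collapses: either $x$ is an output variable, in which case $\ori^2(x)=\ori(x)$, or $x$ is an input variable and $\ori(x)=x$. Using the $\inp/\out$ guards (which are $\lorigin$-definable, as recalled in the examples), I can rewrite any atomic subformula so that $\ori$ is applied at most once, and only to variables known to be output positions; any remaining term like $\ori(x)\leqoutput y$ with $\ori(x)$ not an output position is simply false and can be replaced by $\bot$. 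This keeps us inside the two-variable fragment and inside the $\lorigin$ signature.

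\textbf{Key steps in order.} First, push negations inward and put $\varphi$ in negation normal form, treating every $\binmso[\leqinput,\Sigma]$ predicate $\{\xi\}$ as an atomic symbol that is never opened (it just becomes a new atom of the formula; this is legitimate because the semantic restriction on these predicates only concerns the input positions, which the SNF rewriting never touches). Second, proceed by the usual renaming argument: repeatedly pick an innermost subformula of the form $Qz\,\chi$ (with $Q\in\{\exists,\forall\}$) whose matrix $\chi$ is already quantifier-free, introduce a fresh unary predicate $P_\chi$ over the output alphabet — i.e.\ extend $\Gamma$ to $\Gamma\times\Gamma'$ with $\Gamma'$ collecting all the fresh bits — replace the occurrence of $Qz\,\chi$ by $P_\chi(\text{free variable})$, and add the defining axiom $\forall^\out x\,(P_\chi(x)\leftrightarrow Qz\,\chi)$, which splits into one $\forall\forall$ conjunct and one $\forall\exists$ conjunct. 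Third, iterate until the only quantifiers left have quantifier-free matrices; collect the $\forall\exists$ conjuncts into the block $\bigwedge_{i=1}^m \forall^\out x\exists^\out y\,\psi_i(x,y)$ and merge all $\forall\forall$ conjuncts into a single $\forall^\out x\forall^\out y\,\psi(x,y)$. Fourth, verify the projection claim: every model of $\varphi$ over $\Gamma$ extends uniquely to a model of $\phi$ over $\Gamma\times\Gamma'$ by interpreting each $P_\chi$ according to its axiom, and conversely the $\Gamma$-projection of any model of $\phi$ satisfies $\varphi$; here I must double-check that the guarded output quantifiers $\forall^\out,\exists^\out$ introduced by the construction are the only quantifiers needed, which follows from Prop.~\ref{prop:outfor} since we may assume $\varphi$ is already an output formula over non-erasing o-graphs.

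\textbf{Main obstacle.} The delicate point is not the renaming machinery, which is textbook, but ensuring the rewriting stays \emph{inside $\lorigin$} — in particular that the new predicates are genuinely unary output predicates and that no step secretly requires a third variable or opens a $\binmso[\leqinput,\Sigma]$ predicate. The potential third-variable issue arises when the subformula $Qz\,\chi$ being abstracted still mentions another free variable besides $z$; the standard fix is that in $\fo^2$ such a subformula has at most one free variable, so $P_\chi$ is unary — but one has to confirm this survives the presence of the term $\ori$, which is where the term-normalisation step above earns its keep: after collapsing $\ori^n$ and eliminating ill-typed terms, $\ori(x)$ behaves syntactically like a fresh variable bound to $x$, and the free-variable count is unaffected. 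I expect this bookkeeping — term normalisation plus the verification that selecting pairs / input predicates pass through untouched — to be the bulk of the (routine) work, and I would relegate it to the appendix, stating in the body only that the construction is the familiar one with these provisos.
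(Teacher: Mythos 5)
Your proposal is correct and follows essentially the same route as the paper's proof: the standard renaming of quantified $\fo^2$ subformulas by fresh unary predicates folded into an extended output alphabet $\Gamma\times\Gamma'$, treating the $\binmso[\leqinput,\Sigma]$ predicates as opaque atoms, and using Prop.~\ref{prop:outfor} (over non-erasing o-graphs) to keep all quantifiers output-guarded. The only differences are cosmetic: the paper exploits negation normal form to add one-directional axioms $P_i(x)\to\xi_i$ instead of your biconditionals, invokes the output-formula reduction at the end rather than at the start, and does not need your preliminary $\ori$-term normalisation since terms do not affect the quantifier bookkeeping.
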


\begin{proof}

The proof is similar to~\cite{SZ12}. We first assume without loss of generality that $\varphi$ is in negation normal form.
We now construct the formula $\phi$ iteratively.
At each iteration, we get formulas $\theta_i$ and $\phi_i$ where $\varphi$ is equivalent to $\theta_i\wedge \phi_i$, $\theta_i$ is in correct form, and $\phi_i$ has a number of quantifiers reduced by $i$ compared to $\varphi$, while using some additional unary predicates $P_1,\ldots,P_i$. 
At first let $\theta_0=\top$ and $\phi_0=\varphi$. 
Then, at each step, consider a subformula $\xi_i(x)$ of $\phi_{i-1}$ with a single quantifier.
Then $\xi_i(x)$ is either $\exists y\ \rho_i(x,y)$ or $\forall y\ \rho_i(x,y)$ where $\rho_i$ a quantifier free formula.
In the first case, we set $\theta_i=\theta_{i-1}\wedge \forall x\exists y\ (P_i(x)\to\rho_i(x,y))$ and
$\phi_i$ is obtained by replacing $\exists y\ \rho_i(x,y)$ by $P_i(x)$.\\
In the second case, we set  $\theta_i=\theta_{i-1}\wedge \forall x\forall y\ (P_i(x)\to\rho_i(x,y))$ and
$\phi_i$ is obtained by replacing $\forall y\ \rho_i(x,y)$ by $P_i(x)$.

This process ends as at each step the number of quantifiers of $\phi_i$ decreases.
In the end, we get $\phi_k$ which is quantifier free and thus equivalent to $\forall x\forall y\ \phi_k$.
By combining all the double $\forall$ conjuncts into one formula $\psi$, we finally set $\phi=\theta_k\wedge \forall x\forall y\ \psi$ which is in the required form.
The size of $\phi$ is linear in the size of the negative normal form of $\varphi$.
Finally, the unary predicates $P_i$ are added to the alphabet to be treated as letters.
This is done by replacing the alphabet $\Gamma$ by $\Gamma\times\Gamma'$ , where $\Gamma'=2^{P_1,\ldots,P_k}$, and replacing in the formula the predicates $P_i(x)$ by the conjunction of letter predicates $\bigvee\limits_{(\gamma,R)\mid P_i\in R}(\gamma,R)(x)$.

We need now to prove the first statement regarding domains.
We prove this by induction on the formulas $\theta_i\wedge \phi_i$.
Assume that the o-graph $w_i$ is a model for $\theta_i\wedge \phi_i$, we construct
$w_{i+1}$ a model for $\theta_{i+1}\wedge\phi_{i+1}$ by adding truth values for the predicate $P_{i+1}$ by setting 
$$P_{i+1}=\{p\mid p\text{ is a position of $w$ and }(w,p)\models \xi_i(x)\}.$$
Conversely, if $(w_i,P_{i+1})$ is a model for $\theta_{i+1}\wedge\phi_{i+1}$, then for any position $p$ of $w_i$ such that $(w_{i+1},p)\models P_{i+1}(x)$, we also have $(w_i,p) \models \xi_{i+1}(x)$ since $P_{i+1}$ does not appear in $\xi_{i+1}$. And since $\varphi$ is in negative normal form, $\xi_{i+1}$ only appears positively and thus $w_i\models \varphi_i$. We conclude by noting that if 
$(w_i,P_{i+1})\models \theta_{i+1}$ then $w_i\models \theta_{i+1}$.
Notice that the number of predicates added is equal to the number of
quantifications in $\varphi$ and hence is linear. However, since they
are not mutually exclusive, thisleads to an exponential blow-up of the
alphabet $\Gamma'$.

Finally, we apply Proposition~\ref{prop:outfor} to remove
quantifications over input positions, a construction which preserves the
normal form. 
\end{proof}

\subsubsection{Sets of constraints}
In the spirit of \cite{SZ12}, we introduce another formalism, called
system of constraints, which is equivalent to SNF $\lorigin$. \emph{Constraints}  are built over label
predicates, and some input
and output predicates. Given an o-graph $(u,(v,o))$, a label predicate $\gamma\in\Gamma$ is
satisfied by an output position $p$ of $v$ if $p$ is labelled $\gamma$. 
Output predicates are restricted to 
directions $\uparrow,\downarrow$, which are satisfied by a pair of output positions 
$(p,p')$ if, respectively, $p < p'$ and $p' <
p$. 
Input predicates are any \mso-definable binary predicate over 
the input using the labels $\Sigma$ and the input order $\leqinput$. A pair of output positions $(i,j)$
satisfies an input predicate $\psi(x,y)\in \mso[\Sigma,\leq]$ if their origin satisfy it, \ie
$u\models \psi(\ori(i),\ori(j))$.

An \emph{existential constraint} is a pair $(\gamma,E)$ where $\gamma\in \Gamma$ and $E$ is a set of tuples $(\gamma',d,\psi)$ such that $d\in\{\uparrow,\downarrow\}$ is an output direction and $\psi$ is an input predicate.
Given an o-graph $(u,(v,o))$, an output position $p$ of $v$
satisfies an $\exists$-constraint $(\gamma,E)$ if whenever $p$ is
labelled $\gamma$, there exist a triple $(\gamma',d,\psi)\in E$ and a $\gamma'$-labelled position $q$ of $v$
 such that  $(p,q)$ satisfies $d$ and $\psi$. In the latter case, we call
 $q$ a \emph{valid witness} of $p$ for $(\gamma,E)$.

A \emph{universal constraint} is a tuple $(\gamma,\gamma',d,\psi)$ where $d$ is an output direction and $\psi$ an input predicate.
A pair $(p,q)$ of positions of $v$ satisfies a $\forall$-constraint
$(\gamma,\gamma',d,\psi)$ if it is not the case that $p$
is labelled by $\gamma$, $q$ by $\gamma'$, and $(p,q)$ satisfy $d$ and $\psi$.
A universal constraint can be thought of as a forbidden pattern over pairs of points.

An instance of the \emph{\mso\ constraint problem} (\lpp) is a pair $C=(C_\exists,C_\forall)$ of sets of existential and universal constraints respectively.
A non-erasing o-graph $w=(u,(v,o))$ is a solution of (or model for) $C$, denoted $w\models C$, if every output position of $v$ satisfies
all constraints in $C_\exists$ and every pair of output positions satisfy all constraints in $C_\forall$.

\begin{restatable}{proposition}{propMSOLPP}\label{prop-MSOLPP}
From any output $\lorigin$-sentence $\phi$ in SNF over $\Sigma$ and $\Gamma$, 
one can construct an instance $C$ of \lpp over $\Sigma$ and $\Gamma$
such that for any non-erasing o-graph $w\in \ProdSG$,  $w\models
\phi$ iff $w\models C$.
\end{restatable}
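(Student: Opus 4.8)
I would translate the SNF sentence
$\phi=\forall^\out x\,\forall^\out y\ \psi(x,y)\wedge\bigwedge_{i=1}^{m}\forall^\out x\,\exists^\out y\ \psi_i(x,y)$
conjunct by conjunct. Since $\phi$ is an output sentence and (after the earlier normalisations, atoms applying an input predicate to an output variable being unsatisfiable and simplified away) every binary input-\mso predicate occurs applied only to origin terms, and since $\ori$ is idempotent (it is the identity on input positions), the only terms that arise are $x,y,\ori(x),\ori(y)$ and each quantifier-free formula $\psi,\psi_i$ is a Boolean combination of atoms of exactly three kinds: output-label atoms $\gamma(x),\gamma(y)$ with $\gamma\in\Gamma$; output-order atoms among $x<_\out y$, $y<_\out x$, $x=y$; and binary input-\mso atoms $\{\xi\}(t_1,t_2)$ with $t_1,t_2\in\{\ori(x),\ori(y)\}$ (unary input atoms $\sigma(\ori(\cdot))$ being the diagonal special case). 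The first step is to put each $\psi$ and each $\psi_i$ into disjunctive normal form and then normalise every disjunct so that (a)~it pins down a single label for $x$ and a single label for $y$ (using that $\Gamma$ is finite with mutually exclusive labels, via a finite case-split, discarding contradictory disjuncts); (b)~it fixes exactly one of the three mutual orders ${\uparrow}$ (for $x<_\out y$), ${\downarrow}$ (for $y<_\out x$), or ${=}$ (for $x=y$), splitting into three a disjunct that is silent about the order and eliminating negated order literals the same way; and (c)~all of its input literals are gathered into a single input-\mso predicate $\theta$, using closure of \mso under conjunction and negation. Each disjunct thus becomes a quadruple $(\gamma,\gamma',d,\theta)$ with $d\in\{{\uparrow},{\downarrow},{=}\}$.

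For the universal conjunct I would pass to $\neg\psi\equiv\bigvee_k D_k$ in this normal form: $\forall^\out x\,\forall^\out y\ \psi$ amounts to forbidding every ordered pair of output positions realising some $D_k$, so a disjunct $D_k=(\gamma,\gamma',{\uparrow},\theta_k)$ is put into $C_\forall$ verbatim as the universal constraint $(\gamma,\gamma',{\uparrow},\theta_k)$, the ${\downarrow}$ case being symmetric (with the two arguments of $\theta_k$ swapped). For the $i$-th existential conjunct I would build, for each label $\gamma$, the set $E^i_\gamma$ of triples $(\gamma',{\uparrow},\theta_D)$, resp.\ $(\gamma',{\downarrow},\theta_D)$, ranging over all disjuncts $D=(\gamma,\gamma',{\uparrow},\theta_D)$, resp.\ ${\downarrow}$, of $\psi_i$ whose label requirement on $x$ is $\gamma$, and add $(\gamma,E^i_\gamma)$ to $C_\exists$; when $\psi_i$ has no disjunct compatible with label $\gamma$ on $x$ this yields $E^i_\gamma=\varnothing$, which correctly forbids $\gamma$ from occurring at all. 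Correctness, $w\models\phi$ iff $w\models C$, is then checked conjunct by conjunct against the constraint semantics: a universal constraint is violated by a pair exactly when that pair realises the corresponding disjunct of $\neg\psi$, and the $i$-th family of existential constraints fails at a $\gamma$-position exactly when $\forall^\out x\,\exists^\out y\ \psi_i$ has no witness there. All the manipulations are Boolean normalisations over a finite atom set with the input predicates carried symbolically, so the construction is effective.

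The delicate point --- and the main obstacle --- is the case $d={=}$: the diagonal instantiation $\psi(x,x)$ of the universal conjunct, and the possibility that a witness $y$ of $\exists^\out y\ \psi_i(x,y)$ is $x$ itself. These produce \emph{unary} requirements --- ``every $\gamma$-labelled output position has its origin satisfying a fixed unary input predicate $\theta$'' (from $\psi(x,x)$) and ``a $\gamma$-labelled output position is exempt from the $i$-th existential obligation whenever its origin satisfies $\theta$'' --- whereas the constraint semantics only speaks of ordered pairs of \emph{distinct} output positions (the directions ${\uparrow},{\downarrow}$ never hold of $(p,p)$) and the precondition of an existential constraint is merely the label of the position. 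I would record these requirements in the unary layer of the constraint system, i.e.\ as a restriction, per label $\gamma$, of the admissible origin-types of $\gamma$-positions; this is exactly the place where the bare ${\uparrow}/{\downarrow}$ machinery must be supplemented by a unary/initial component. The only o-graphs on which such a unary violation cannot already be witnessed by an ${\uparrow}$- or ${\downarrow}$-constraint are those with a single output position, which by non-erasingness forces $|u|=1$: a regular, essentially finite family that can be dispatched by an explicit finite case distinction. Once these unary conditions are absorbed, the translation is exact.
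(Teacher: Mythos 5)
Your core translation coincides with the paper's proof: the paper also works with ``atomic types'' (a label for $x$, a label for $y$, one of the three mutual orders $=,\leftarrow,\rightarrow$, and truth values of the binary input predicates), puts the types violating the universal conjunct into $C_\forall$ as forbidden patterns, and factorises the disjuncts of each $\psi_i$ by the label of $x$ into sets $E$ of witness triples for $C_\exists$. Up to that point your DNF-based construction and the paper's are the same argument.

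The divergence, and the gap, is your handling of the equality direction. The proposition asks for an instance of \lpp exactly as defined --- existential constraints $(\gamma,E)$ and universal constraints $(\gamma,\gamma',d,\psi)$ with $d\in\{\uparrow,\downarrow\}$ --- and the equivalence $w\models\phi$ iff $w\models C$ must hold for \emph{every} non-erasing o-graph against that single instance $C$. There is no ``unary layer'' in \lpp, and no external ``finite case distinction'' is available: the single-output o-graphs you set aside must themselves be handled correctly by $C$. Your own analysis shows why this is the crux: on a one-position output every universal constraint is vacuous and an existential constraint whose label matches the unique position can never be satisfied (its witness must be a distinct position), so the tools you are allowed to use see only the output label there. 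Stopping at ``record these requirements in the unary layer'' and ``dispatch by an explicit finite case distinction'' therefore does not produce the required object: the diagonal requirements coming from $\psi(x,x)$, and the possibility that the witness in $\forall x\exists y\ \psi_i$ is $x$ itself, are never actually absorbed into constraints of the prescribed shape. By contrast, the paper deals with the diagonal \emph{inside} the translation: it keeps $=$ among the directions of its atomic types when building $C_\forall$, and in the existential part argues (tersely) that each equality disjunct $\gamma_j(x)\to\exists y\ t_{j,\ell}$ is either trivially satisfied or can be removed from the disjunction, so that the final output is literally an \lpp instance. You have correctly located the delicate point --- indeed it is the one place where the paper's own write-up is most laconic --- but to prove the statement as written you must show how these unary/diagonal conditions are expressed by, or eliminated in favour of, constraints of the given form, rather than by enriching the constraint system or excluding a family of o-graphs.
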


\begin{proof}

Let  $\phi=\foroutput x\foroutput y\ \varphi(x,y)\wedge\bigwedge\limits_{i=1}^n\foroutput x \exoutput y\ \varphi_i(x,y)$ with the binary input predicates $(\alpha_i)_{i=1}^k$. We treat $0$-ary and unary predicates as binary predicates.
Now given $x$ and $y$ quantifying positions of the output, an \emph{atomic type} for $x$ and $y$ gives
truth value for the predicates  $(\alpha_i)_{i=1}^k$ (evaluated over their origin for the input predicates).
Formally, it is composed of labels for $x$ and $y$, an output direction $x\sim y$ for $\sim\in\{=,\leftarrow,\rightarrow\}$ and truth values for the binary formulas $\alpha_i$. Then a couple of output positions $(p,q)$ is of type $t$ if they satisfy exactly the true properties of $t$ when $x$ and $y$ are evaluated as $p$ and $q$ respectively, and the predicates $\alpha_i$ are evaluated on $\ori(x)$ and $\ori(y)$.
Note that any atomic type can be described by a universal constraint using boolean combinations of the predicates $\alpha_i$.
Note also that any model of $\phi$ has to satisfy the universal part $\foroutput x\foroutput y\ \varphi(x,y)$.
Hence we want to weed out all atomic types that do not satisfy it.
Then the set of universal constraints $C_\forall$ is set as all forbidden types, \ie the atomic types that do not satisfy $\varphi(x,y)$.
Then if $w=(u,(v,o))$ is an o-graph which satisfies $\phi$, any pair of positions of $v$ satisfy $\varphi(x,y)$ if and only if they satisfy every constraint of $C_\forall$.

We now turn to the formulas $\foroutput x \exoutput y\ \varphi_i(x,y)$. By doing an extensive case study over all atomic types for $x$ and $y$, and then factorising for each label $\gamma$ of $\Gamma$, we can rewrite the formulas as 
$$\foroutput x \bigwedge\limits_{j=1}^k ( \gamma_j(x) \to \exoutput y \bigvee\limits_{\ell=1}^{m}t_{j,\ell})$$
where $t_{j,\ell}$ are atomic types. 
We conclude depending on the nature of the direction $d_{j,\ell}$ of $t_{j,\ell}$. If $d_{j,\ell}$ is $x=y$, then if $t_{j,\ell}$ is compatible with $\gamma_j$ the conjunct $\gamma_j(x) \to \exoutput y\ t_{j,\ell}$ is either a tautology and the whole conjunct is trivially satisfied, or it cannot be satisfied and  $t_{j,\ell}$ is removed from the disjunction.
The remaining elements of the disjunction can be combined in a set $E$ to form an existential constraint with $\gamma_j$.
Now if $w\models \foroutput x \bigwedge\limits_{j=1}^k ( \gamma_j(x) \to \exoutput y \bigvee\limits_{\ell=1}^{m}t_{j,\ell})$, then for every position $p$ of $v$, if $p$ is labelled by $\gamma$ then there exists a position $q$ such that $(p,q)$ is of one of the types $t_{j,\ell}$ and thus $q$ is a valid witness for $p$.
Conversely, the fact that any position $p$ has a valid witness means that for any output position labelled by $\gamma$, there is an other position corresponding to its witness which is a valid candidate for $y$, and thus $w$ satisfies $\foroutput x\bigwedge\limits_{j=1}^k ( \gamma_j(x) \to \exoutput y \bigvee\limits_{\ell=1}^{m}t_{j,\ell})$.

This gives an instance $C=(C_\exists,C_\forall)$ of constraints such that for any non-erasing o-graph $w$, $w\models C$ if, and only if, $w\models \phi$.
\end{proof}

\subsection{The profile abstraction}

We define here the important notion of profile,
which is a bounded abstraction, given an o-graph, of 
an input position, the output positions it produces, and its context
within the o-graph (other output and input positions). An o-graph
can then be abstracted by a sequence of profiles.

We define the notions of validity, with respect to an \lpp instance
$C$, and of maximal consistency, for sequences of profiles,
which respectively talk about the satisfaction of constraints by the
profiles of the sequence, and the consistency between consecutive
profiles (the information stored in consecutive profiles is correct and consistent).

\subsubsection{Automata for binary predicates}
In the following we will be using automata for binary predicates. They will serve as \mso types that we can easily manipulate.
It is well-known (see \eg \cite{NPTT05}) that any binary $\mso[\Sigma,\leq]$-predicate
$\psi(x,y)$ over $\Sigma$-labelled words, can be equivalently
defined by a non-deterministic finite automaton (called here a predicate
automaton)
$\mathcal{A}_\psi=(Q_\psi,\Sigma,I_\psi,\Delta_\psi,F_\psi)$ equipped
with a set $\Select_\psi\subseteq Q_\psi^2$ of \emph{selecting pairs} 
with the following semantics: for any word $u\in\Sigma^*$ and any pair of positions
$(i,j)$ of $u$, we have $u\models \psi(i,j)$ if, and only if, there
exists an accepting run $\pi$ of $\mathcal{A}_\psi$ and a pair $(p,q)\in\Select_\psi$
such that $\pi$ is in state $p$ before reading $u(i)$ and in state $q$
before reading $u(j)$.

\begin{example}
\label{Ex-predicate-automaton}
Let us consider as an example the binary between predicate
$Bet_\sigma(x,y)=\exists z\ \sigma(z)\wedge(x<z)\wedge (z<y)$, which
cannot be expressed using only two variables. The automaton for this
predicate is depicted below and its unique selecting pair is $(q_x,q_y)$.
\begin{center}
\begin{tikzpicture}[scale=.9,every node/.style={scale=0.9}]

\node[circle,draw] (qx) at (0,0) {\footnotesize{$q_x$}};
\node[circle,draw] (qs) at (2,0) {\footnotesize{$q_\sigma$}};
\node[circle,draw] (qy) at (4,0) {\footnotesize{$q_y$}};
\node[circle,draw] (qf) at (6,0) {\footnotesize{$q_f$}};

\draw (qx) edge[->,loop below,>=stealth]  (qx);
\draw (qs) edge[->,loop below,>=stealth]  (qs);
\draw (qy) edge[->,loop below,>=stealth]  (qy);
\draw (qf) edge[->,loop below,>=stealth]  (qf);

\node at (0.4,-.8) {\footnotesize{$\Sigma$}};
\node at (2.4,-.8) {\footnotesize{$\Sigma$}};
\node at (4.4,-.8) {\footnotesize{$\Sigma$}};
\node at (6.4,-.8) {\footnotesize{$\Sigma$}};

\draw[->, >=stealth] (-1,0) --(qx);
\draw[<-, >=stealth] (7,0) --(qf);

\draw (qx) edge[->, >=stealth] node[midway,above] {\footnotesize{$\Sigma$}} (qs);
\draw (qs) edge[->, >=stealth] node[midway,above] {\footnotesize{$\sigma$}} (qy);
\draw (qy) edge[->, >=stealth] node[midway,above] {\footnotesize{$\Sigma$}} (qf);

\end{tikzpicture}

\end{center}
\end{example}

\subsubsection{Profiles}

Let $C$ be an instance of \lpp over $\Sigma$ and $\Gamma$, and 
$\Psi$ the set of
$\mso$-predicates occurring in $C$. For all $\psi{\in} \Psi$, we
let $\mathcal{A}_\psi$ with set of states $Q_\psi$ and set of
selecting pairs $\Select_\psi$ be the predicate automaton
for $\psi$.  Let $S_\Psi= \biguplus_{\psi\in\Psi} Q_\psi$. 

The main ingredient of profiles is a sequence of \emph{clauses}, where
a clause is an element of the set
$\clauses=\Gamma\times(\{\cdot\}\cup\mathcal{P}(S_\Psi\times
S_\Psi)\times\set{\leftarrow,\rightarrow})$. 
Clauses of the form
$(\gamma, \cdot)$ are called \emph{local clauses} and clauses of the
form
$(\gamma,R,v)$ are called \emph{consistency clauses}. Intuitively, in a
o-graph, if
the profile of an input position $i$ contains a local clause
$(\gamma,\cdot)$, this clause describes an output position produced by
$i$ (its origin is $i$) and labelled by $\gamma$. If the profile of
$i$ contains a clause $(\gamma,R,v)$, it describes an output position
whose origin $j$ appears in the direction $v$ with respect to position
$i$ (\ie if $v=\leftarrow$ then $j<i$, $i>j$ otherwise), is labelled $\gamma$ and such that for any pair $(p,q)$ of
$R$, there exists an accepting run of $A_\psi$ which reads position $i$ in state $p$ and
position $j$ in state $q$. A clause $A$ is \emph{compatible} is with a
set of states $S\subseteq S_\Psi$ if whenever $A$ is a consistency
clause $(\gamma,R,v)$, then $\dom(R) = \{ p\mid \exists (p,q)\in
R\}\subseteq S$.

A \emph{$C$-profile} (or just profile) is a tuple $\profile=(\sigma,S,A_1\dots A_n)$ where
$\sigma\in\Sigma$ is an input label, $S\subseteq S_\Psi$ and $A_1\ldots
A_n$ is a sequence of clauses from $\clauses$
such that any clause $A_i$ is compatible with $S$ and 
appears at most twice in the sequence, for
all $i=1,\dots,n$. 
By definition, the number of profiles is bounded by $N=|\Sigma|\cdot 2^{|\cS|}(|\Gamma|\cdot( 2(2^{|\cS|^2+1}+1))!$.

\subsubsection{Profile of an input position}\label{subsec:inprof}

To define the profile of an input position $k$ of an o-graph
$w=(u,(v,o))$, with respect to some \lpp instance $C=(C_\exists,C_\forall)$, we first
define the notion of full profile of that position, which keeps
complete (and unbounded) information about the whole o-graph. The profile will be
then a bounded abstraction of the full profile. The \emph{full profile} of $k$ is
defined as the tuple $\lambda^f = (\sigma,S,B_1\ldots B_{|v|})$, where 
each of its elements are defined as follows.
The letter $\sigma$ is the $k$th letter of $u$ and $S$ is the set of
states reached by all the accepting runs of the predicate automaton $A_\Psi$ on $u$ after
reading the prefix $u_1\dots u_{k-1}$.

Let now $j\leq |v|$ be an
output position with origin $k'$. The element $B_j$ is a clause
generated by the output position $j$, defined as follows. If $k'=k$,
then $B_j = (\gamma,\cdot)$. If $k<k'$ (resp. $k>k'$), then we define
$B_j$ to be the consistency clause $(\gamma,R,\rightarrow)$ (resp. $(\gamma,R,\leftarrow)$) where $R$ is the set of all pairs $(p,q)$ from $S_\Psi$ such that
there exists an accepting run on $u$ that reaches $p$
after reading $u_1\dots u_{k-1}$ and $q$ after reading $u_1\dots
u_{k'-1}$ (hence $p\in S$). Therefore
in $R$, the first component always refers to the state at the current
position $k$, and the second component to the state of described
position $k'$. The direction indicate whether the
described position $k'$ is to the right or the left of $k$. Hence, if
$(p,q)\in R$ and the direction is $\rightarrow$, it means that
the state $p$ eventually
reaches $q$ on the right, and if the direction is $\leftarrow$, that the state $q$ was
visited before $p$.

The \emph{profile} $\lambda$ of position
$k$ is obtained from $\lambda^f$ by keeping in $B_1\dots B_{|v|}$ the
outermost occurrences of each clause. Formally, it is $\lambda =
(\sigma,S,\alpha(B_1\dots B_{|v|}))$  where
$\alpha:\clauses^*\to\clauses^*$ erases in a sequence of clauses, all
but the left- and right-most occurrences of each clause of the
sequence.
%
In the following, we show that valid o-graphs give valid sequences of profiles, and that conversely we can construct valid o-graphs from valid sequences.
An example of profile sequence is given in
Fig.~\ref{fig:abstraction}, in which clauses $(\gamma, R,
\rightarrow)$ are denoted $\gamma\overrightarrow{R}$ (and similarly for
clauses $(\gamma, R, \leftarrow)$.



\subsubsection{Profile validity}

As we have seen, an o-graph can be abstracted by the sequence of its
profiles. We aim at defining conditions on profiles and profile
sequences $s$ under which from such a sequence of
profiles we can reconstruct an o-graph which is a model of an \lpp
instance $C$. The notion of profile validity takes care of the
property of being a model, but not any sequence of profiles will be
the profile sequence of an o-graph in general. The notion of profile
consistency, defined in the next section, is introduced to ensure this
property. 
First, we start with the notion of profile validity with respect to an
\lpp-instance $C$.

\begin{definition}[Profile validity] Let $C$ be an \lpp-instance and 
$\profile=(\sigma,S,A_1\ldots A_n)$ a profile. It satisfies 
an existential constraint  $c=(\gamma,E)$
if for every $i$ such that $A_i=(\gamma,\cdot)$ there exists 
a tuple $(\gamma',d,\psi)$ of $E$, $j\leq n$ such that $i$ and $j$
respect the direction $d$ (\ie $i<j$ if $d=\uparrow$ and $j<i$ otherwise), 
and  either
there exists $v\in\{\leftarrow,\rightarrow\}$ such that
$A_j=(\gamma',R,v)$ and $R\cap
\Select_\psi\neq\emptyset$, 
or $A_j=(\gamma',\cdot)$ and there exists $p\in S$ such that $(p,p)\in \Select_\psi$.

The profile $\lambda$ satisfies a universal constraint $(\gamma,\gamma',d,\psi)$ if 
there do not exist $i$ and $j$ such that $i$ and $j$ respects
direction $d$, $A_i=(\gamma,\cdot)$
and $A_j=(\gamma',R,v)$ such that $R\cap \Select_\psi\neq\emptyset$ 
or $A_j=(\gamma',\cdot)$ and there exists $p\in S$ such that $(p,p)\in \Select_\psi$.

Given an instance $C$ of \lpp, a profile is \emph{$C$-valid} (or just
valid) if it satisfies every constraint. A sequence of valid profiles
is also called \emph{valid} sequence.
\end{definition}



Intuitively, let us take the case of existential constraints. In the
o-graph we aim to reconstruct from a sequence of profiles, the
clause $A_i$ will correspond to an output position $p$ with origin $i$,
and the clause $A_j$ will refer to an output position $p'$ with origin
$i'$ produced before
or after $i$ (depending on $v$) which is a witness $p$ and the
constraint $(\gamma',d,\psi)$, because the fact that $R\cap
\Select_\psi\neq \emptyset$ indicates that there is an accepting run
of $A_\psi$ on the input word which selects the pair $(i,i')$,
\ie $(i,i')$ satisfies the MSO-condition $\psi$. 


\subsection{Properties of profile sequences}
\subsubsection{Profile consistency}

Consistency is first defined between two consecutive profiles,
ensuring consistent run information between the clauses of these two
consecutive profiles.%
Then the consistency of every pair of successive profiles in a given
sequence ensures a global consistency allowing to reconstruct full
runs of $A_\Psi$ on the whole input.

We now need a central notion, that of successor (and predecessor) of
clauses. Informally, a clause $A'$ is a successor of $A$ if there is a
o-graph and an input position $k$ such that in the full profiles
$\profile^f,\profile^f_{k+1}$ of positions $k$ and $k+1$ respectively, there
exists $i$ such that $A$ is the $i$th clause of $\profile^f_k$ and $A'$ is
the $i$th clause of $\profile^f{k+1}$. This is just an intuition, and as
a matter of fact a consequence of the formal definition, which is more
constructive and not dependent on o-graphs.


Let us now give here the formal definitions concerning consistency of
profiles. To do so, we first define a successor relation between clauses, parameterized by 
two sets $S,S'\subseteq S_\Psi$ (we remind that $S_\Psi$ is disjoint union of the set of
states $Q_\psi$ of all predicate automata). Informally, since a clause occurring
at input position $k$ stores information about some output position
(whose origin is either $k$, $k'<k$ or $k'>k$), the successor relation
tells us how this information is updated at input position $k+1$,
depending on these cases. We will use the following notation
$s'\in s\cdot \sigma$ whenever there exists a transition of $A_\Psi$
from state $s$ to $s'$ on $\sigma$. We will also say that a pair of
binary relations $(R,R')$ on $S_\Psi$ is \emph{compatible} with $\sigma$
if for all $(p,q)\in R$, there exists $(p',q)\in R'$ such that $p'\in
p\cdot \sigma$ and conversely, for all $(p',q)\in R'$, there exists
$(p,q)\in R$ such that $p'\in p\cdot \sigma$. Note that if $(R,R')$
and $(R,R'')$ are compatible with $\sigma$, then $(R,R'\cup R'')$ is
as well compatible with $\sigma$. Hence, given $R$, there exists
maximal relation $R_m$ (for inclusion) such that $(R,R_m)$ is
compatible with $\sigma$. Symmetrically, given $R'$, there exists a
maximal relation $R_m$ such that $(R_m, R')$ is compatible with
$\sigma$.

\begin{definition}[Successors of clauses]
Let $S,S'\subseteq S_\Psi$, $\sigma\in \Sigma$ and $A$ a clause.
The successors of $A$ with respect to $S,S'$ and
$\sigma$ are clauses $A'$ defined as follows:
\begin{enumerate}
  \item if $A =(\gamma,\cdot)$, there is a unique successor $A' =
    (\gamma, \{ (p',p)\in S'\times S \mid p'\in p\cdot \sigma\},
    \leftarrow)$,

  \item if $A = (\gamma, R, \leftarrow)$, there is a unique successor $A' = (\gamma, R',
    \leftarrow)$ such that $R'$ is the maximal relation such that
    $(R,R')$ is compatible with $\sigma$ and $\dom(R') \subseteq S'$. 

  \item if $A = (\gamma, R, \rightarrow)$, there are several possible
    successors $A'$:
    \begin{enumerate}
      \item either $A' = (\gamma,\cdot)$ under the condition that $R=\set{(p,q)\in
          S\times S'|\ q\in p\cdot \sigma}$
      \item or $A' = (\gamma,R',\rightarrow)$ where
        $\dom(R')\subseteq S'$ and $R$ is the maximal relation such
        that $(R,R')$ is compatible
        with $\sigma$. 
    \end{enumerate}
\end{enumerate}


By extension, given two profiles
$\profile=(\sigma,S,A_1\ldots A_n)$ and
$\profile'=(\sigma',S',A'_1\ldots A'_{n})$, a clause $A_i$ and a
clause $A'_j$, we say that $A'_j$ is a successor of $A_j$ with respect
to $\lambda,\lambda'$ if it is a successor w.r.t. $S,S'$ and $\sigma$. 
\end{definition}

As a remark, we notice that the successor relation is not necessarily
functional in the case where $A = (\gamma, R, \rightarrow)$. This is
consistent with the following observation. Given the full profile of
an input position $k$, two occurrences of a clause $A = (\gamma, R,
\rightarrow)$ may describe two output positions $j_1,j_2$ whose origins $k_1,k_2$ are to the
right of the current position $k$. If for instance $k_1=k+1$ and
$k_2>k_1$, then in the full profile of $k_1$, output position $j_1$ is
described by a clause of the form $(\gamma',\cdot)$ while output
position $j_2$ be a clause of the form $(\gamma'', R, \rightarrow)$,
both clauses being successors of $A$. We also notice that in the
profile of $k$ (the abstraction of the full profile), one occurrence, or both, of the clause $A$ may have
been deleted.

Similarly, we define the predecessors of a clause $A$ with respect to
$S,S'$ and $\sigma$ as the set of clauses $B$ such that $A$ is a successor of $B$
with respect to $S,S'$ and $\sigma$. We prove the following useful proposition:
\begin{proposition}\label{prop:uniquepred}
Let $S,S'\subseteq S_\Psi$, $\sigma\in \Sigma$ and $A$ be a clause of
type $(\gamma, R, \rightarrow)$ or $(\gamma,\cdot)$. Then $A$ has a
unique predecessor with respect to $S,S'$ and $\sigma$.

Symmetrically, if $A$ is of type $(\gamma,R,\leftarrow)$ or
$(\gamma,\cdot)$, then $A$ has a unique successor with respect to
$S,S'$ and $\sigma$
\end{proposition}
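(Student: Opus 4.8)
The plan is to prove both halves by a direct case analysis on the type of the clause $A$, leaning on the two facts recorded just before the definition of successors: for a fixed relation $R$ there is a \emph{unique} maximal relation $R_m$ with $(R,R_m)$ compatible with $\sigma$, and symmetrically for a fixed $R'$ a unique maximal $R_m$ with $(R_m,R')$ compatible with $\sigma$ — this is because compatibility with $\sigma$ is preserved under unions, and so is the relevant domain constraint $\dom(\cdot)\subseteq S'$.

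I would first dispatch the symmetric (successor) claim, which is essentially a restatement of the definition. If $A=(\gamma,\cdot)$, item~(1) of the definition of successors explicitly gives it a unique successor, namely $(\gamma,\{(p',p)\in S'\times S\mid p'\in p\cdot\sigma\},\leftarrow)$. If $A=(\gamma,R,\leftarrow)$, item~(2) gives it the unique successor $(\gamma,R',\leftarrow)$, where $R'$ is the maximal relation with $(R,R')$ compatible with $\sigma$ and $\dom(R')\subseteq S'$; the existence and uniqueness of such a maximal $R'$ are exactly the preservation-under-unions facts above. So in both cases the successor exists and is unique.

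For the predecessor claim the crux is a bookkeeping argument on the direction component. A predecessor of $A$ is a clause $B$ with $A$ among the successors of $B$. Reading off the definition: a local clause or a $\leftarrow$-clause only produces $\leftarrow$-successors (items~(1) and~(2)); hence a clause $A$ of type $(\gamma,\cdot)$ or $(\gamma,R,\rightarrow)$ can only be a successor of a $\rightarrow$-clause $B=(\gamma_B,R_B,\rightarrow)$, and item~(3) furthermore forces $\gamma_B=\gamma$. It then remains to pin down $R_B$. If $A=(\gamma,\cdot)$, item~(3a) forces $R_B=\{(p,q)\in S\times S'\mid q\in p\cdot\sigma\}$, which is determined by $S,S',\sigma$; and this $B$ indeed has $A$ as a successor, so it is the unique predecessor. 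If $A=(\gamma,R,\rightarrow)$, item~(3b) forces $R_B$ to be the (unique) maximal relation with $(R_B,R)$ compatible with $\sigma$; conversely, provided $\dom(R)\subseteq S'$ — which holds automatically whenever $A$ is a clause occurring in a profile with state set $S'$, by the compatibility requirement in the definition of profiles — this $B=(\gamma,R_B,\rightarrow)$ has $A$ as a successor, so it is again the unique predecessor.

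I do not anticipate a real obstacle: the argument is pure bookkeeping on the shape of clauses. The two points that need a little care are (i) correctly invoking well-definedness and uniqueness of the maximal compatible relation, which is precisely what the paragraph preceding the definition of successors sets up, and (ii) the existence half of ``unique predecessor'' in the $\rightarrow$-case, which silently uses the domain condition $\dom(R)\subseteq S'$; this is harmless because in every application of the proposition $A$ is a clause of some profile whose state set is $S'$, and such clauses are by definition compatible with $S'$.
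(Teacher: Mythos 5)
Your proof is correct and follows essentially the same route as the paper's: the successor half is read off directly from the definition, and for predecessors you pin down the direction and label and then invoke uniqueness of the maximal relation compatible with $\sigma$ (via closure under unions), which is exactly the paper's contradiction argument with $R_1\cup R_2$. Your extra remark about the domain condition $\dom(R)\subseteq S'$ for the existence part is a harmless refinement the paper leaves implicit.
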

\begin{proof}
    The second statement is direct by definition. For the first
    statement, if $A$ is of the form $(\gamma,\cdot)$, then by
    definition, the unique predecessor of $A$ is $(\gamma,R, \rightarrow)$
    where $R = \set{(p,q)\in S\times S'|\ q\in p\cdot \sigma}$. If $A$ is of the form
        $(\gamma, R, \rightarrow)$, then suppose there are two
        different predecessors. They are necessarily of the form 
        $(\gamma, R_1,\rightarrow)$ and $(\gamma, R_2,\rightarrow)$ where
        $R_1\neq R_2$. Then, neither $R_1$ nor $R_2$ are maximal
        relations such that $(R_i, R)$ is compatible with $\sigma$ (it
        suffices to take $R_1\cup R_2$, contradicting the definition
        of successor 3.a. 
\end{proof}




The notion of consistency is first defined between two profiles, 
then extended to sequence of profiles. Between two profiles
$\lambda_1,\lambda_2$, consistency is defined as structural properties of
a bipartite graph $G_{\lambda_1,\lambda_2}$ which we now define. The
vertices of $G_{\lambda_1,\lambda_2}$ are clause occurrences in
$\lambda_1$ and $\lambda_2$, labelled by clauses, and the set of edges is a
subset of the successor relation between those
occurrences. Formally, let $s_1 = A_1\dots
A_n$ and $s_2 = B_1\dots B_m$ be the sequence of clauses 
of $\lambda_1,\lambda_2$ respectively. We let $G_{\lambda_1,\lambda_2}
= (V,E,\ell : V\rightarrow \clauses)$ where $V = \{1\}\times
\{1,\dots,n\}\cup \{2\}\times \{1,\dots,m\}$, where for all $i$, 
$\ell(1,i) = A_i$ and $\ell(2,i) = B_i$. We say that $i$ is the
smallest occurrence of a clause $A$ in $s_1$ if $i = \text{min} \{
j\mid A = A_i\}$ (and similarly of $s_2$, and the notion of greatest
occurrence).  The set of edges is defined
as follows. There is an edge from $(1,i)$ to $(2,j)$ if one of the
following condition holds:
\begin{enumerate}
  \item $A_i$ is of the form $(\gamma, R, \rightarrow)$, $i$ is the
    smallest (resp. greatest) occurence of $A_i$ in $s_1$, and $j$ is
    the smallest index (resp. greatest index) such that $B_j$ is a
    successor of     $A_i$ w.r.t.  $\lambda_1,\lambda_2$. 
  \item $B_j$ is of the form $(\gamma, R, \leftarrow)$, $j$ is the
    smallest (resp. greatest) occurence of $B_j$ in $s_2$, and $i$ is
    the smallest index (resp. greatest index) such that $A_i$ is a
    predecessor of $B_i$ w.r.t. $\lambda_1,\lambda_2$. 
\end{enumerate}

We can now define consistency:

\begin{definition}[Profile consistency]
A profile $\profile_1 = (\sigma_1,S_1,A_1\dots A_n)$ is \emph{consistent}
with a profile $\profile_2 = (\sigma_2,S_2,B_1\dots B_m)$
if the following three conditions hold:
\begin{enumerate}
\item for any state $s_1\in S_1$, there is a state $s_2\in S_2$ such that
$s_2\in s_1\cdot \sigma_1$ and conversely for all $s_2\in S_2$, there exists
$s_1\in S_1$ such that $s_2\in s_1\cdot \sigma_1$,
\item  for all clause $A_i$, there exists $B_j$ such that $B_j$ is a
  successor of $A_i$, and conversely for all clause $B_j$, there
  exists $A_i$ such that $A_i$ is a predecessor of $B_j$,

\item the graph $G_{\lambda_1,\lambda_2}$ does not contain the
  following patterns:
  \begin{enumerate}
    \item a vertex with two outgoing edges
    \item a vertex with two ingoing edges
    \item a crossing, \ie two edges $((1,i_1),(2,j_1))$ and
      $((1,i_2),(2,j_2))$ such that $i_1<i_2$ and $j_2<j_1$.
  \end{enumerate}
\end{enumerate}
\end{definition}

A profile $\profile=(\sigma,S,A_1\ldots A_n)$ is \emph{initial} if all
states of $S$ are initial states and there is no consistency clause
pointing to the left (\ie clause $(\gamma,R,\leftarrow)$).
It is \emph{final} if for all states $s$ of $S$, we can reach a final
state by reading $\sigma$, and there is no consistency clause pointing
to the right.
A sequence of profiles $\profile_1\ldots \profile_n$ is
\emph{consistent} if $\profile_1$ is initial, $\profile_n$ is final,
and for all $i<n$, $\profile_i$ is consistent with $\profile_{i+1}$.

We generalise the notion of graph associated with two profiles, to
sequences of profiles $s = \lambda_1\dots \lambda_n$. It is the
disjoint union of all the graphs $G_{\lambda_i,\lambda_{i+1}}$ where
the second component of $G_{\lambda_i,\lambda_{i+1}}$ is glued to
the first component of $G_{\lambda_{i+1},\lambda_{i+2}}$. Formally,
an occurrence of a clause $A$ in $s$ is a pair $(i,j)$ such that $A$
is the $j$th clause of $\lambda_i$. Then, we define $G_s = (V,E,\ell)$
where $V$ is the set of all clause occurrences, $\ell(i,j)$ is the
$j$th clause of $\lambda_i$, and for all $1\leq i<n$, there is an edge 
from $(i,j)$ to $(i+1,j')$ in $G_s$ iff there is an edge from $(1,j)$
to $(2,j')$ in $G_{\lambda_i,\lambda_{i+1}}$.  The following
lemma gives some structural properties of this DAG:
\begin{proposition}\label{prop-consist}
    For any consistent sequence of profiles $s$, the following hold true:
\begin{enumerate}
  \item $G_s$ is a union of disjoint directed paths,
  \item each maximal directed path $\pi$ of $G_s$ is of the form 
    $$
    \pi = (\gamma,R_1, \rightarrow)\dots (\gamma,
    R_i,\rightarrow)(\gamma,\cdot) (\gamma,
    R_{i+1}, \leftarrow)\dots (\gamma, R_{i+k},
    \leftarrow)
    $$
    where $i,k\geq 0$ and $i+k<n$, $\gamma\in \Gamma$ and the $R_j$
    are binary relations on $S_\Psi$,
  \item there is bijection between local clause occurrences of $s$ and
    maximal paths of $G_s$. Therefore, we identify a local clause
    occurrence $(i,j)$ with its maximal directed path, which we denote
    by $\pi_{i,j}$. 
\end{enumerate}
\end{proposition}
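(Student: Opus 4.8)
The plan is to extract the three assertions from three ingredients: the degree bounds that the no-double-edge conditions of profile consistency impose on $G_s$; the rigid shape of the successor relation on clauses (from the definition of successors of clauses); and the existence guarantees of condition~(2) of profile consistency together with the initial/final conditions on $\lambda_1$ and $\lambda_n$.

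First I would prove assertion~(1). A vertex of $G_s$ at level $i$ (that is, of the form $(i,j)$) receives all its incoming edges from $G_{\lambda_{i-1},\lambda_i}$ and emits all its outgoing edges into $G_{\lambda_i,\lambda_{i+1}}$; applying to these two bipartite graphs the consistency conditions forbidding a vertex with two outgoing edges, resp.\ two incoming edges, every vertex of $G_s$ has in-degree at most $1$ and out-degree at most $1$. A finite directed graph with in- and out-degree at most $1$ is a disjoint union of directed simple paths and directed simple cycles, and since every edge of $G_s$ strictly increases the level there is no cycle; hence $G_s$ is a disjoint union of maximal directed paths, and a path with $m$ vertices occupies $m$ consecutive levels among $1,\dots,n$, so $m\le n$. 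Next I record what an edge encodes: if there is an edge from $(i,j)$ to $(i{+}1,j')$ in $G_s$, then inspecting the two cases in the definition of the edges of $G_{\lambda_i,\lambda_{i+1}}$ shows that $\ell(i{+}1,j')$ is a \emph{successor} of $\ell(i,j)$ (the first case states this directly; the second says $\ell(i,j)$ is a predecessor of $\ell(i{+}1,j')$, which is the same relation). Hence along any directed path of $G_s$ consecutive clauses are in the successor relation, and the definition of successors then forces all clauses on the path to carry the same output label $\gamma$, with the direction component evolving only by one of the steps $\rightarrow$ to $\rightarrow$, $\rightarrow$ to $\cdot$, $\cdot$ to $\leftarrow$, $\leftarrow$ to $\leftarrow$. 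So along a maximal path the label is constant and the sequence of directions has the form $\rightarrow^{a}\cdot^{b}\leftarrow^{c}$ with $a,c\ge 0$ and $b\in\{0,1\}$; in particular a maximal path contains at most one local clause.

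The heart of the argument is that every maximal path contains \emph{at least} one local clause. Suppose not; then by the direction analysis the path consists either only of $\rightarrow$-clauses or only of $\leftarrow$-clauses. In the $\rightarrow$-only case, let $(i,j)$ be its last vertex, so $\ell(i,j)=(\gamma,R,\rightarrow)$ and $(i,j)$ has out-degree $0$. If $i<n$, then condition~(2) of profile consistency yields a successor of $(\gamma,R,\rightarrow)$ among the clauses of $\lambda_{i+1}$, hence a position of $\lambda_{i+1}$ holding such a successor; since every clause occurs at most twice in a profile, $j$ is the smallest or the greatest occurrence of $\ell(i,j)$ in $\lambda_i$, so the first case of the edge definition produces an edge out of $(i,j)$ --- a contradiction. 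Thus $i=n$; but $\lambda_n$ is final and so contains no $\rightarrow$-clause --- again a contradiction. The $\leftarrow$-only case is symmetric: take the first vertex of the path; if it sits at a level $i>1$, condition~(2) provides a predecessor clause in $\lambda_{i-1}$ and the second case of the edge definition forces an incoming edge, contradicting in-degree $0$; so it sits at level $1$, but $\lambda_1$ is initial and contains no $\leftarrow$-clause --- a contradiction. Combining this with the ``at most one'' from the previous paragraph, every maximal path contains exactly one local clause, which gives the shape claimed in assertion~(2) with $i:=a$, $k:=c$, and $i{+}k<n$ because the path has $i{+}k{+}1$ vertices spread over that many of the $n$ levels; and assertion~(3) follows, since the maximal paths partition the vertices of $G_s$ and each carries exactly one local-clause occurrence, so the map sending a local-clause occurrence to the maximal path containing it is a bijection onto the maximal paths.

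The step I expect to be the most delicate is the bookkeeping around the ``smallest (resp.\ greatest) occurrence'' clause in the definition of the edges: for each relevant clause occurrence one has to check that this convention does produce the edge one expects, and, for assertion~(1), never produces two. The bound that each clause occurs at most twice in a profile is exactly what makes this manageable, since then every occurrence of a clause is automatically its smallest or its greatest one. I do not expect to need the no-crossing condition for this proposition; it is presumably used only later, to linearise the directed paths of $G_s$ into an actual o-graph.
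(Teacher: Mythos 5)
Your proof is correct and follows essentially the same route as the paper's: part (1) comes from the no-double-edge conditions of consistency, part (2) from the rigid shape of the successor relation plus the key step that every maximal path contains a local clause, which you derive, as the paper does, from condition (2) of consistency together with the initial/final requirements and the edge definition, and (3) then follows. Your only deviation is a slight streamlining of that key step: you argue directly from the last (resp.\ first) vertex, using the fact that each clause occurs at most twice in a profile so every occurrence is extremal, where the paper reasons via the last two vertices of the path; this is the same idea and is sound.
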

\begin{proof}
    $(1)$ It is a direct consequence of conditions 3a. and 3b. in the
    definition of consistency.

    $(2)$ First, any path which contains a local clause contains a
    unique local clause and is
    necessarily of this form. It is a direct consequence of the definition of the
    successor relation. Indeed, the successor of a clause of the form
    $(\gamma, R,\rightarrow)$ are clauses of the form
    $(\gamma,R',\rightarrow)$ or $(\gamma, \cdot)$. The successor of
    clauses of the form $(\gamma,\cdot)$ is necessarily of the form
    $(\gamma,R,\leftarrow)$ and the successors of the latter are
    necessarily of the form $(\gamma, R', \leftarrow)$.

    It remains to prove that any maximal path contains a local
    clause. Suppose that it does not contain any local clause. Then,
    by similar arguments as before, we can show that it contains only
    clauses of the form $(\gamma,R,\rightarrow)$ or only clauses of
    the form $(\gamma, R, \leftarrow)$. Let us assume the first case
    (the other one being symmetric). Suppose that the last two
    vertices of this path are $(i,j_1)$ and $(i+1,j_2)$. If $i+1=n$,
    then we get a contradiction since $\lambda_n$ would not be final
    (which contradicts the fact that $s$ is supposed to be
    consistent). Suppose that $i+1<n$, let $A = \ell(i,j_1)$ and
    $B=\ell(i+1,j_2)$ the clauses associated with these vertices. We
    know that $B$ is a successor of $A$
    w.r.t. $\lambda_i,\lambda_{i+1}$. By definition of
    $G_{\lambda_i,\lambda_{i+1}}$, it is even an extremal successor of
    $A$. Therefore, $(i,j_2)$ is either the smallest occurrence of $B$
    in $\lambda_{i+1}$ or the greatest one. By definition of
    consistency (second condition), we know that $B$ has necessarily a
    successor $C$ in $\lambda_{i+2}$, and by definition of
    $G_{\lambda_{i+1},\lambda_{i+2}}$, there must exist an edge from
    $(i+1,j_2)$ to some occurrence of $C$ in $\lambda_{i+2}$,
    contradicting the fact that the considered path is maximal.

    $(3)$ It is a direct consequence of $(1)$ and $(2)$.
\end{proof}

We now define the notion of maximal consistency for a sequence
of profiles. Intuitively a consistent profile sequence is maximal if one cannot add
states in the clauses without making it inconsistent. 

\begin{definition}[Maximality]
    A consistent profile sequence $s$, with sequence of
    $\Sigma$-components $u = \sigma_1\dots\sigma_n$ and sequence of $S$-components
    $S_1\dots S_n$, is \emph{maximal} if
    \begin{enumerate}

      \item for all $1\leq i\leq n$, $S_i$ is the set of \emph{all} states
        $q$ such that there exists an accepting run $q_1\dots q_{n+1}$
        of $A_\Psi$ on $u$ such that $q_i = q$, and,
      \item 
        for all local
    clause occurrence $(i,j)$, for all vertex $(i',j')$ labelled
    $(\gamma, R, v)$ on $\pi_{i,j}$, $R$ is the set
          of \emph{all} pairs $(p,q)$ such that there exists
          an accepting run $q_1\dots q_{n+1}$ on $u$ satisfying
         $p = q_{i'}$ and $q = q_i$. 
      \end{enumerate}
\end{definition}

We call \emph{good} a sequence of profiles which is maximal,
consistent and valid.

\subsubsection{Completeness}
%
 Given an o-graph
$w$, we denote by $\profseq_C(w)$ (or just $\profseq(w)$ when $C$ is
clear from context) the sequence of
$C$-profiles of its input word. In the following, we prove that given an instance $C$ of \lpp, the set
$\{\profseq_C(w)\mid w\models C\}$ is included in the set of valid and
maximally consistent sequences of $C$-profiles.

We first show that the profile sequence of an o-graph for an \lpp is maximally consistent, while next lemma proves that validity of an o-graph ensures validity of its profile sequence.
\begin{lemma}\label{lem:consmax}
Given an instance $C$ of \lpp and an o-graph $w$,
$\profseq_C(w)$ is maximally consistent.
\end{lemma}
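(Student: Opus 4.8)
The plan is to verify, for an o-graph $w = (u,(v,o))$ with input word $u = \sigma_1\dots\sigma_n$, that $\profseq_C(w)$ satisfies all three conditions in the definition of maximal consistency: consistency between consecutive profiles (the three numbered conditions), the initial/final requirements at the endpoints, and the two maximality conditions on the $S$-components and on the relations $R$ appearing along maximal paths. The key observation throughout is that each $C$-profile $\lambda_k$ of position $k$ is the $\alpha$-abstraction of the \emph{full} profile $\lambda_k^f = (\sigma_k, S_k, B_1\dots B_{|v|})$, where $S_k$ is the set of states reached after reading $\sigma_1\dots\sigma_{k-1}$ by accepting runs of $A_\Psi$, and $B_j$ is the clause generated by output position $j$. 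I would first record this correspondence and then check each condition by tracing what happens between $\lambda_k^f$ and $\lambda_{k+1}^f$.

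\textbf{Consistency between $\lambda_k$ and $\lambda_{k+1}$.} Condition (1) (on the sets $S_k, S_{k+1}$) follows directly from the fact that $S_k$ and $S_{k+1}$ are the reachable-state sets of accepting runs of $A_\Psi$ at consecutive positions, so every $s\in S_k$ has a $\sigma_k$-successor in $S_{k+1}$ and conversely every state in $S_{k+1}$ is reached from some state in $S_k$. For conditions (2) and (3), I would argue at the level of full profiles first: the $j$th clause of $\lambda_k^f$ and the $j$th clause of $\lambda_{k+1}^f$ both describe the \emph{same} output position $j$ with the same origin $o(j)$, and by inspecting the three cases in the definition of ``successor of a clause'' one checks that $B_j^{k+1}$ is always a successor of $B_j^{k}$ with respect to $S_k,S_{k+1},\sigma_k$ — indeed, if $o(j) = k$ then $B_j^k = (\gamma,\cdot)$ and $B_j^{k+1}$ is the unique leftward clause of case (1); if $o(j) < k$ both are leftward clauses and case (2) applies with the relations being exactly the run-pair sets, which are maximal by construction; if $o(j) > k$ both are rightward clauses (or $B_j^{k+1}$ becomes local when $o(j) = k+1$) matching case (3). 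Passing from full profiles to profiles, the abstraction map $\alpha$ keeps only extremal occurrences, and the edge set of $G_{\lambda_k,\lambda_{k+1}}$ is defined precisely to connect extremal occurrences on the left to extremal occurrences on the right; thus every vertex has at most one incoming and one outgoing edge (no patterns (3a),(3b)), and there is no crossing (3c) because the output positions with the same origin-direction are linearly ordered and $\alpha$ preserves this order. The existence parts of condition (2) hold because no clause occurrence of $\lambda_{k+1}$ is ``orphaned'': it descends from the $j$th clause of $\lambda_k^f$ for the relevant $j$.

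\textbf{Endpoints and maximality.} The profile $\lambda_1$ is initial: $S_1$ is the set of initial states of $A_\Psi$, and no output position can have origin to the left of position $1$, so there is no leftward clause. Symmetrically $\lambda_n$ is final. For maximality, condition (1) holds by the very definition of the $S$-component of a full profile: $S_k$ is the set of \emph{all} states appearing at position $k$ in some accepting run of $A_\Psi$ on $u$. Condition (2) is where the ``keep only extremal occurrences'' abstraction must be reconciled with the claim that the surviving relations $R$ along a maximal path $\pi_{i,j}$ record \emph{all} valid run pairs: here I would use Proposition~\ref{prop-consist} to identify $\pi_{i,j}$ with the sequence of clauses describing the single output position with origin $i$ (say it is output position $p$, i.e. $(i,j)$ is the surviving local clause occurrence for $p$), and then observe that the relation $R$ in the vertex at ``column'' $i'$ is, by the definition of the full profile, exactly the set of all pairs $(q_{i'}, q_i)$ over accepting runs $q_1\dots q_{n+1}$ of $A_\Psi$ on $u$ — the abstraction $\alpha$ deletes whole clause occurrences but never shrinks the relation inside a surviving one.

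\textbf{The main obstacle} I expect is the bookkeeping in condition (3) of consistency — in particular ruling out crossings (3c) and double edges (3a),(3b) after the abstraction $\alpha$ has deleted intermediate occurrences. One has to argue that the left-to-right matching of extremal occurrences is order-preserving: if $i_1 < i_2$ are surviving occurrences of (possibly different) rightward clauses in $\lambda_k$, their images $j_1, j_2$ in $\lambda_{k+1}$ satisfy $j_1 < j_2$ (or they coincide only in degenerate ways excluded by taking the smallest/greatest as prescribed). This requires carefully tracking, for each surviving clause occurrence, which output positions it could correspond to in the full profile, and using that the full-profile clause sequence $B_1\dots B_{|v|}$ is linearly ordered by output position while origins to the right (resp.\ left) of $k$ retain a consistent relative order when we move to $k+1$. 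Once this order-preservation is in hand, the absence of all three forbidden patterns is immediate, and the lemma follows by assembling the per-pair checks along the whole sequence.
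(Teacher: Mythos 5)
Your overall strategy matches the paper's: work at the level of full profiles, observe that the trace of each output position advances by the clause-successor relation, and then transfer everything through the abstraction $\alpha$. Conditions (1) of consistency, the initial/final requirements, and maximality condition (1) are handled correctly and exactly as in the paper. However, there is a genuine gap at precisely the place you flag as ``the main obstacle'': you never actually establish condition (3) of consistency. The paper's proof devotes its technical core to this, by formally defining the full graph $G_w$, taking the subgraph $G_K$ induced by the clause occurrences that survive $\alpha$, and proving a claim that $G_{\profseq(w)}$ and $G_K$ are \emph{isomorphic}; the forbidden patterns (two in-edges, two out-edges, crossings) then transfer from $G_w$ (where they obviously cannot occur, since the traces are disjoint horizontal paths) to $G_K$ and hence to $G_{\profseq(w)}$. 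Proving that claim is not routine: it uses the uniqueness of predecessors of clauses of type $(\gamma,R,\rightarrow)$ or $(\gamma,\cdot)$ (Prop.~\ref{prop:uniquepred}) and a contradiction argument showing that the ``smallest/greatest successor occurrence'' chosen by the edge definition of $G_{\lambda_k,\lambda_{k+1}}$ must coincide with the occurrence reached by the actual trace after middle occurrences are deleted. Your sketch stops at ``once this order-preservation is in hand, the absence of all three forbidden patterns is immediate,'' which is exactly the statement that needs proof.

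A second, related inaccuracy: you assert that patterns (3a) and (3b) are excluded because ``the edge set of $G_{\lambda_k,\lambda_{k+1}}$ is defined precisely to connect extremal occurrences.'' That does not follow from the definition. A single vertex $(1,i)$ can in principle acquire one outgoing edge through the first clause of the edge definition (as an extremal occurrence of a rightward clause) and another through the second clause (as the extremal predecessor of some leftward clause in $\lambda_{k+1}$); likewise a vertex of $\lambda_{k+1}$ can receive two incoming edges. Degree-one is therefore a property to be proven for profile sequences arising from o-graphs, not a consequence of the definition, and in the paper it is obtained only via the isomorphism with $G_K$. Your treatment of maximality condition (2) inherits the same dependency: identifying the relation $R$ carried by a vertex of the maximal path $\pi_{i,j}$ with the set of all accepting-run pairs requires knowing that $\pi_{i,j}$ is the $\alpha$-image of the trace of the corresponding output position, which again is the content of the missing claim.
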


\begin{proof}
Let $w=(u,(v,o))$, $\profseq_C(w) = \profile_1\ldots \profile_n$,
and for all $i$, $\profile_i=(\sigma_i,S_i,A_1^i\ldots A_{m_i}^i)$. 
Let also $\theta_i = (\sigma_i, S_i, B_1^i\dots B_{n_i}^i)$ be the
full profile of position $i$. 

\paragraph{Consistency} We prove every condition of the definition one
by one.

$(1)$ The first condition of the definition
of consistency is fulfilled by construction of the sets $S_k,
S_{k+1}$, which are the set of states reached by the accepting runs of
$A_{\Psi}$ on the prefixes $\sigma_1\dots \sigma_{k-1}$ and
$\sigma_1\dots \sigma_k$ respectively. Clearly, for all
$s\in S_k$, there exists $s'\in s\cdot \sigma_k\cap S_{k+1}$ and
conversely.

$(2)$ To show the other two conditions, let us define the \emph{full graph}
$G_w$ of $w$, whose vertices are clause occurrences of the full
profiles of each position respectively.  By definition of full profiles,
every clause of a full profile of an input position contains
information about some output position of $v$, in particular there is
a bijection between the clauses of each full profile and the output
positions of $v$. In other words, any output position $j$ of $v$ 
gives rise to a clause $B_j^i$ in the full profile of input position
$i$, for all $i$. We let $trace(j) = B_j^1\dots B_j^n$ be the sequence
of such clauses, taken in order of input positions. Note that by
definition of full profiles, $trace(j)$ is necessarily of the form 
$$
(\gamma, R_1, \rightarrow)\dots (\gamma, R_{\ori(j)-1},
\rightarrow)(\gamma,\cdot)(\gamma,R_{\ori(j)+1},\leftarrow)\cdots
(\gamma, R_n,\leftarrow)
$$ where $\gamma$ is the label of $j$ and 
$R_1,\dots, R_n\subseteq S_\Psi^2$. 
It is not difficult to see that by definition of the successor
relation, $B_j^{i+1}$ is a successor of $B_j^i$ with respect to
$S_i,S_{i+1}$ and $\sigma_i$, for all $i<n$ and all output position $j$. We can
therefore picture the sequence of full profiles of $w$ as follows,
where $\xrightarrow{S}$ denote the successor relation:
$$
\begin{array}{l|ccccccccccccccccc}
  & \profile^f_1 & & \profile^f_2 & & \ldots & & \profile^f_n \\
\hline \text{input symbol} & \sigma_1  & & \sigma_2 & & \ldots & & \sigma_n \\
\hline \text{$S$-component} & S_1 && S_2 & & \ldots & & S_n \\

\hline trace(|v|) & B_{|v|}^1 & \xrightarrow{S} & B_{|v|}^2 & \xrightarrow{S} & \ldots
                                     & \xrightarrow{S} & B_{|v|}^n \\

trace(|v|-1) & B_{|v|-1}^1 & \xrightarrow{S} & B_{|v|-1}^2 & \xrightarrow{S} & \ldots
                                     & \xrightarrow{S} & B_{|v|-1}^n \\
\vdots & \vdots & \vdots &\vdots & \vdots & \vdots &\vdots & \vdots  \\ 

trace(1) & B_{1}^1 & \xrightarrow{S} & B_{1}^2 & \xrightarrow{S} & \ldots
                                     & \xrightarrow{S} & B_{1}^n \\
\end{array}
$$
It shall now be easy to see that condition (2) of the definition of consistency
is fullfiled. Indeed, every clause occurrence in the full profile of
position $k$ has at least one successor in the full profile of
position $k+1$, and conversely for predecessors. Moreover, the profiles
$\lambda_k,\lambda_{k+1}$ are obtained by $\alpha$-abstraction of the
full profiles of positions $k$ and $k+1$ respectively, and
$\alpha$-abstraction preserves the set of clauses (\ie for any
full profile $\profile^f$, the set of clauses occurring in $\profile^f$ is the
same as the set of clauses occurring in $\alpha(\profile^f)$).

$(3)$ Now, to prove the last condition, we formally define the \emph{full
  graph} of $w$, which is roughly the labelled DAG given by the traces
in the picture. Formally, it is the triple $G_w = (V,E,\ell)$ where 
$V = \{ 1,\dots,|u|\}\times \{ 1,\dots,|v|\}$ with $\ell(i,j) =
B_j^i$, and $E = \{ (i,j),(i+1,j)\mid 1\leq i<|u|, 1\leq j\leq
|v|\}$. For all $1\leq i \leq |u|$, let denote by $D_i$ the set of
clause occurrences $(i,j)$ in $\profile_i$ which are removed by the
profile abstraction $\alpha$, \ie all the $(i,j)$ such that there
exist $j_1<j<j_2$ such that $B_{j_1}^i = B_j^i = B_{j_2}^i$. Let $K =
V\setminus (\bigcup_{1\leq i\leq |u|} D_i)$ (the vertices that are kept
by the $\alpha$-abstraction). The subgraph of $G_w$ induced by $K$ is
defined as the graph obtained by removing all vertices which are not
in $K$, and their incoming / outgoing vertices. Let us denote by $G_K$
this subgraph. 

\vspace{2mm}
\textbf{Claim} $G_{\profseq(w)}$ and $G_K$ are isomorphic.
\vspace{2mm}

The graph $G_{\profseq(w)}$ is obtained from the sequence of profiles
of each input position, each profiles being itself obtained by the
$\alpha$-abstraction on full profiles, the same operation as the one actually performed
on the full graph to obtain $G_K$. Hence there exists a natural
label-preserving bijection $\mu$ from the set of vertices of $G_{\profseq(w)}$ to the
set of vertices $G_K$, which preserves the vertical order, in the
sense that any two vertices $(i,j_1)$ and $(i,j_2)$ of
$G_{\profseq(w)}$ satisfy $j_1\leq j_2$ iff $\mu(i,j_1) = (i,k_1)$ and
$\mu(i,j_2) = (i,k_2)$ satisfy $k_1\leq k_2$.

We now prove that edges of $G_K$ appear in $G_{\profseq(w)}$, and conversely, which concludes the proof of the claim.
 Let us consider an edge of $G_K$, between $(i,j)$ and $(i+1,j)$.
Notice that either $\mu^{-1}(i,j)$ is labelled by some $(\gamma, R,
\rightarrow)$, or  $\mu^{-1}(i+1,j)$ is labelled by some $(\gamma, R,
\leftarrow)$. The two cases are symmetrical, we prove only one case,
and assume that $\mu^{-1}(i,j)$ is labelled by some clause $(\gamma,
R, \rightarrow)$, or equivalently, that the position $i$ of $trace(j)$
is of the form $(\gamma, R, \rightarrow)$. Let us also assume without
loss of generality that the position $(i,j)$ of $G_K$ corresponds to
the maximal occurrence (for the vertical order) of the clause $(\gamma, R, \rightarrow)$ in the
$i$th full profile of $w$. The case where it corresponds to the
minimal occurrence is, again, symmetrical. Suppose that there is no
edge $(\mu^{-1}(i,j), \mu^{-1}(i+1,j))$ in $G_{\profseq(w)}$, we will
show a contradiction. The considered vertices and edges are depicted
on Fig.~\ref{fig:complete}. 


\begin{figure*}[t]
\begin{center}
\begin{tikzpicture}

\foreach \i/\j in {0/i,2/i+1,6/i,8/i+1}{
\draw[color=blue] (\i,0) -- (\i,3);
\node () at (\i,-0.5) {$\j$};
}

\node () at (1,-1) {$G_{\text{seq(w)}}$};
\node () at (7,-1) {$G_k$};

\node (clause) at (0,1) {$\bullet$};
\node () at (0,1) [left,color=green!50!black] {$(\gamma,R,\rightarrow)$};

\node (y) at (2,2.5) {$\bullet$};
\node () at (2,2.5) [left] {$y$};

\node (sy) at (2,1.8) {$\bullet$};

\draw[->, >=stealth,thick,color=green!50!black] (clause) -- (y);

\node (x) at (6,2.5) {$\bullet$};
\node () at (6,2.5) [left,color=green!50!black] {$(\gamma,R,\rightarrow)$};
\node () at (6,2.5) [above right] {$x$};

\node (my) at (8,2.5) {$\bullet$};
\node () at (8,2.5) [right] {$\mu(y)$};

\node (c2) at (6,1.5) {$\bullet$};
\node () at (6,1.5) [left,color=green!50!black] {$(\gamma,R,\rightarrow)$};

\node (smy) at (8,1.5) {$\bullet$};

\draw[->, >=stealth,thick,color=green!50!black] (x) -- (my);
\draw[->, >=stealth,thick,color=green!50!black] (c2) -- (smy);

\node (j) at (8,1.5) [right] {$j$};

\path[->,>=stealth,thick,dashed,color=white!50!black] (y) edge [near start,above,bend left=20] node {$\mu$} (my);
\path[->, >=stealth,thick,dashed,color=white!50!black] (sy) edge [near start,below right,bend left=19] node {$\mu$} (smy);
\path[->, >=stealth,thick,dashed,color=white!50!black] (clause) edge [below,bend right=20] node {$\mu$} (c2);

\end{tikzpicture}
\caption{Proof of Lemma~\ref{lem:consmax}}\label{fig:complete}
\end{center}
\end{figure*}
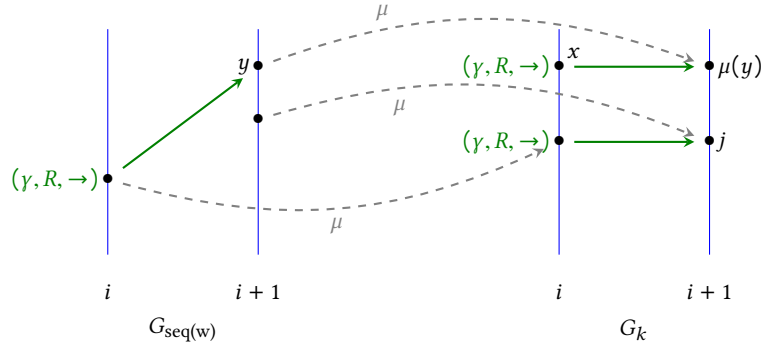

By definition of $G_{\profseq(w)}$, there is necessarily an outgoing edge
from $\mu^{-1}(i,j)$, say to a vertex $y$. This vertex $y$ is
necessarily above (in the vertical order) as $\mu^{-1}(i+1,j)$, by
construction of $G_{\profseq(w)}$. Let $\ell$ be the label of
$y$. Since $\ell$ is a successor clause of $(\gamma, R, \rightarrow)$
(according to the definition of successor clauses), $\ell$ is
necessarily of the form $(\gamma, R',\rightarrow)$ or
$(\gamma,\cdot)$. By Prop.~\ref{prop:uniquepred},
$(\gamma,R,\rightarrow)$ is the unique predecessor clause of $\ell$. Now,
consider the graph $G_K$. The vertex $\mu(y)$ is labelled by $\ell$
and is above $(i+1,j)$, since $\mu$ preserves labels and the vertical
order. The vertex $\mu(y)$ has a predecessor (in $G_k$), say $x$,
which is necessarily labelled by $(\gamma,R,\rightarrow)$. Indeed,  as we
saw, $\ell$ has a unique predecessor clause, and by definition of
$G_w$, the edge relation is compatible with the successor clause
relation (i.e. if there is an edge $(g,h)$ in $G_K$, $g$ is labelled
by the clause $c_1$ and $h$ by the clause $c_2$, then $c_2$ is a
successor clause of $c_1$). Moreover, $x$ is above $(i,j)$ in $G_K$,
which contradicts the fact that $(i,j)$ was the maximal occurrence of
$(\gamma,R,\rightarrow)$ in $G_K$. Therefore, 
$(\mu^{-1}(i,j), \mu^{-1}(i+1,j))$ is an edge of $G_{\profseq(w)}$.

The converse is proved with similar arguments, so we rather sketch the
proof than give the full details.  Let us consider an edge $(p,q)$ in $G_{\profseq(w)}$ with $p$ labelled by $(\gamma, R, \rightarrow)$ and with $\mu(p)=(i,j)$.
Let us assume that $p$ is the maximal occurrence of the clause $(\gamma, R, \rightarrow)$, in the $i$th profile of $w$. Then it must correspond to the maximal position of the $i$th column of $G_w$. Then the successor of $(i,j)$ in $G_w$ corresponds to the maximal position in the $i+1$th column of $G_w$ which is a successor of $(\gamma, R, \rightarrow)$ (otherwise $j$ would not be maximal). Hence the edge $(p,q)$ of  $G_{\profseq(w)}$ corresponds to the edge between $(i,j)$ and $(i+1,j)$ in $G_w$, which must also appear in $G_K$ since its vertices are not deleted.


Clearly in the full graph of $w$,  there is no edge with two incoming or two outgoing
edges, nor any crossing. Since $G_K$ is obtained by removing nodes
from the full graph, these properties are transfered to $G_K$, and by
the claim, to $G_{\profseq(w)}$, proving condition (3) of the
consistency definition.

The fact that $\lambda_1$ and $\lambda_n$ are respectively initial and
final are direct consequences of the definition of the profile of an
input position.

\paragraph{Maximality} To prove maximality, first, notice that the
sets $S_i$ are maximal. Assume it is not the case, \ie there is a
sequence $S'_1\dots S'_n$ such that $S_i\subseteq S'_i$ for all $i$ 
and there exists $s\in S'_i\setminus S_i$ for some $i$. Then by
condition $1$ of the definition of consistency, we could construct 
an accepting run of $A_{\psi}$ on $u$, reaching $s$ at position $i$,
and by definition of the set $S_i$, $s$ would have been already in
$S_i$.

The second condition of maximality is rather direct from the claim and
the definition of the sets $R$ in the profile of an input
position. Indeed, by this definition, the sets $R$ are ``maximal'' in
the full graph $G_w$, and this property is unchanged when going to
$G_K$ (since the sets $R$ are not modified). 
\end{proof}

\begin{restatable}{lemma}{lemModSeq}
\label{lem:model-to-seq}
Given an instance $C$ of \lpp and $w$ a model of $C$ then
$\profseq(w)$ is $C$-valid.
\end{restatable}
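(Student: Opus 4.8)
The plan is to unfold both sides of the implication against the definitions of ``model of an \lpp instance'' and of ``profile validity'', bridging them by an explicit dictionary between the clauses of a full profile and the output positions of the o-graph. Concretely, I would fix a model $w=(u,(v,o))\models C$ with $u=\sigma_1\cdots\sigma_n$, write $\profseq_C(w)=\profile_1\cdots\profile_n$, and for each input position $k$ let $\profile^f_k=(\sigma_k,S_k,B^k_1\cdots B^k_{|v|})$ be the full profile of $k$ and $\profile_k=(\sigma_k,S_k,A^k_1\cdots A^k_{m_k})$ its $\alpha$-abstraction. Since $\alpha$ only deletes clause occurrences and preserves their relative order, there is an order-preserving bijection $\pi_k$ from the set of output positions surviving the abstraction onto $\{1,\dots,m_k\}$ with $A^k_{\pi_k(j)}=B^k_j$. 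It then suffices to show that each $\profile_k$ satisfies every constraint of $C$.

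The first thing I would establish is the dictionary, read off directly from the definition of the full profile together with the selecting-pair semantics of the predicate automata $\mathcal{A}_\psi$. Writing $\gamma_j$ for the label of output position $j$: $B^k_j=(\gamma_j,\cdot)$ iff $o(j)=k$; $B^k_j=(\gamma_j,R,\rightarrow)$ with $R\cap\Select_\psi\neq\emptyset$ iff $k<o(j)$ and $u\models\psi(k,o(j))$; $B^k_j=(\gamma_j,R,\leftarrow)$ with $R\cap\Select_\psi\neq\emptyset$ iff $o(j)<k$ and $u\models\psi(k,o(j))$; and, since $S_k$ is exactly the set of states reached before position $k$ by accepting runs of $\mathcal{A}_\psi$, there is $p\in S_k$ with $(p,p)\in\Select_\psi$ iff $u\models\psi(k,k)$. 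In each case the condition $u\models\psi(k,o(j))$ is literally ``the pair $(j',j)$ satisfies $\psi$'' of the \lpp semantics, for any output position $j'$ of origin $k$. Moreover, since $\pi_k$ is order-preserving and clause indices in $\profile^f_k$ are just the output positions, ``$i$ and $j$ respect direction $d$'' in $\profile_k$ is equivalent to ``$(\pi_k^{-1}(i),\pi_k^{-1}(j))$ satisfies the output direction $d$''.

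With the dictionary in hand, the two cases of validity are short. For an existential constraint $c=(\gamma,E)\in C_\exists$ and any $i$ with $A^k_i=(\gamma,\cdot)$, I set $p=\pi_k^{-1}(i)$, an output position of origin $k$ labelled $\gamma$; since $w\models C$, $p$ has a valid witness, i.e.\ a triple $(\gamma',d,\psi)\in E$ and a $\gamma'$-labelled output position $q$ with $(p,q)$ satisfying $d$ and $\psi$. By the dictionary, $B^k_q$ has label $\gamma'$ and carries $R\cap\Select_\psi\neq\emptyset$ (or triggers the $S_k$-condition, in the case $o(q)=k$) exactly because $(p,q)$ satisfies $\psi$, and ``$(p,q)$ satisfies $d$'' yields the required side condition on indices. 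If $q$ survives the abstraction I take $j=\pi_k(q)$; otherwise I take the extremal surviving occurrence of the clause $B^k_q$ lying, in output order, on the same side of $p$ as $q$ does — it still lies on that side and carries the same clause, hence is an equally good witnessing index. For a universal constraint $(\gamma,\gamma',d,\psi)\in C_\forall$, I argue by contradiction: if indices $i,j$ of $\profile_k$ witnessed a violation, then, since the clauses of $\profile_k$ form a sub-multiset of those of $\profile^f_k$, applying the dictionary backwards to $\pi_k^{-1}(i),\pi_k^{-1}(j)$ would produce output positions $p,q$ of $v$ with labels $\gamma,\gamma'$ such that $(p,q)$ satisfies $d$ and $\psi$, contradicting $w\models C$.

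I expect the only genuinely delicate step to be the existential case through the $\alpha$-abstraction: one must check that if the witness $q$ is deleted by $\alpha$, it can be replaced by an extremal surviving occurrence of the \emph{same} clause lying on the correct side of $p$ — this is precisely why keeping two extremal copies of each clause (rather than one) is exactly enough, and it is the one place where the abstraction interacts non-trivially with the argument. The universal case is easier, since deleting clauses can only make universal constraints harder to violate. The rest — aligning the output directions $\uparrow/\downarrow$ with the clause directions $\leftarrow/\rightarrow$ and with the selecting-pair semantics of the $\mathcal{A}_\psi$ — is routine bookkeeping.
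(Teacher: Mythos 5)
Your proof is correct and follows essentially the same route as the paper's: transfer constraint satisfaction from output positions to (full-)profile clauses via the selecting-pair semantics of the predicate automata, then observe that the $\alpha$-abstraction preserves validity because universal constraints only become easier to satisfy and a deleted existential witness can be replaced by a surviving extremal occurrence of the same clause on the same side. The paper merely organizes this as two steps (validity of the full profile sequence, then preservation under abstraction) rather than working directly through the order-preserving bijection $\pi_k$, which is an inessential difference in presentation.
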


\begin{proof}

Let $C$ be an instance of \lpp, let $w=(u,(v,o))$ be a model of $C$ and
consider $\fprofseq(w)$ the full profile sequence of $w$. The notion of validity can be extended to full profiles in a natural way (\ie without changing the definition).
It is quite easy to show that if $w$ is a model then $\fprofseq(w)$ is valid. Indeed, by definition, if a profile of $\fprofseq(w)$ violates a universal constraint $(\gamma,\gamma',d,\psi)$ this means that we have two clauses $A_i=(\gamma,\cdot)$, $A_j=(\gamma',R,v)$ such that $R\cap \Select_\psi\neq\emptyset$ and $i$ and $j$ respect direction $d$. If we call $i^\out$ and $j^\out$ the $i$th and $j$th output positions, then we have that $i^\out$ and $j^\out$ violate the universal constraint.
Similarly, if the $i$th output position of $w$ labelled by $\gamma$ satisfies an existential constraint $(\gamma,E)$, then there is a triple $(\gamma',d,\psi)\in E$ and an output position labelled by $\gamma'$, the $j$th, such that they satisfy $d$ and $\psi$.

Now we show that, by construction, going from $\fprofseq(w)$ to $\profseq(w)$ preserves validity. First it is obvious that removing clauses can only increase the chance of satisfying a universal constraint. Secondly we show that if a clause appears in a full profile more than twice, the middle occurrences can be safely removed without removing necessary witnesses for existential constraints. Let us assume that in some profile $(\sigma,S,A_1\ldots A_{m})$, a clause $A$ appears at positions $i_0,\ldots,i_{n+1}$ with $n>0$. We claim that if $A_{i_j}$ with $1\leq j\leq n$ is a valid witness of another clause for some existential constraint, then either $A_{i_0}$ or $A_{i_{n+1}}$ as well. Going from the full profile to the profile preserves validity.
\end{proof}

\begin{remark}
Note that the converse of the previous lemma is false: there are o-graphs which are not models but whose profile sequence is valid. However we show in the following that a valid sequence of profile is always the profile sequence of \emph{some} model.
\end{remark}

Finally, we obtain the following completeness corollary:

\begin{corollary}[Completeness]\label{Cor-complete}
    For all models $w$ of an \lpp instance $C$, $\profseq(w)$ is good,
    i.e. it is maximal, consistent and $C$-valid. 
\end{corollary}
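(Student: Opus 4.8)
The plan is to obtain this corollary as an immediate consequence of the two preceding lemmas, since ``good'' is by definition the conjunction of three properties --- maximality, consistency, and $C$-validity --- each of which has already been secured. Concretely, I would first invoke Lemma~\ref{lem:consmax}, which applies to an arbitrary o-graph $w$ and asserts that $\profseq_C(w)$ is maximally consistent; this gives the maximality and consistency components of goodness using nothing about $w$ beyond its being an o-graph. Then I would invoke Lemma~\ref{lem:model-to-seq}, whose hypothesis $w\models C$ is exactly what we are given, to conclude that $\profseq_C(w)$ is $C$-valid. Assembling the three properties yields that $\profseq_C(w)$ is good, which is the claim.

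There is essentially no obstacle at the level of the corollary itself; all the work lives in the two lemmas it assembles. If I were to point at where the genuine difficulty lies, it is in Lemma~\ref{lem:consmax}: the key step is to show that the abstract graph $G_{\profseq(w)}$ is isomorphic, via the label- and vertical-order-preserving bijection $\mu$, to the subgraph $G_K$ of the full graph $G_w$ obtained by deleting the non-extremal clause occurrences, and then to transfer along this isomorphism the structural conditions of consistency (no vertex with two incoming or two outgoing edges, no crossing), which hold trivially for $G_w$ since its traces are parallel directed paths. Maximality additionally requires checking that each state set $S_i$ collects \emph{all} states reachable by an accepting run of $A_\Psi$ at position $i$, and similarly that the relations $R$ occurring along each maximal path are the maximal such relations --- both following directly from how the profile of an input position is defined.

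The $C$-validity half (Lemma~\ref{lem:model-to-seq}) is the easy direction and needs only two observations: first, that the full profile sequence $\fprofseq(w)$ is valid whenever $w\models C$, because a full profile violating a universal constraint $(\gamma,\gamma',d,\psi)$ would exhibit two genuine output positions of $w$ violating it, and a satisfied existential constraint at an output position of $w$ yields a valid witness clause in the corresponding full profile; and second, that the $\alpha$-abstraction step from $\fprofseq(w)$ to $\profseq(w)$ preserves validity --- deleting clauses can only make universal constraints easier to satisfy, and since existential witnesses can always be chosen extremal, keeping just the leftmost and rightmost occurrence of each repeated clause discards no needed witness.
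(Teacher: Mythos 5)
Your proposal is correct and is exactly the paper's route: the corollary is stated as an immediate assembly of Lemma~\ref{lem:consmax} (maximal consistency, for an arbitrary o-graph) and Lemma~\ref{lem:model-to-seq} ($C$-validity, using $w\models C$), with no further argument needed at the corollary level. Your accompanying remarks on where the real work lies (the isomorphism between $G_{\profseq(w)}$ and $G_K$, and the preservation of validity under the $\alpha$-abstraction) also match the paper's proofs of those two lemmas.
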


\subsubsection{Soundness}
Given a \emph{good} sequence of profiles for an \lpp instance $C$, \ie a valid and maximally consistent one, we prove that we are able to construct a valid o-graph of $C$. 
Its input word is the underlying word of the profile sequence, and the output of a given input position is given by the local clauses of its profile.
Since we know what the output positions are, all that is needed to get a valid o-graph is an order over the local clauses in such a way that all constraints are satisfied.
By definition of consistency local clauses are bijective with maximal paths.
Since a profile is made of a sequence of clauses, the maximal paths meeting at one profile are naturally ordered. 
Extending this to all profiles leads to a partial order on maximal paths, and hence on local clauses.
This partial order has to be verified by any o-graph having this sequence of profiles.
We prove that for any total ordering of local clause satisfying this partial order, we are able to construct an o-graph having this sequence of profiles, and that this procedure preserves validity.


\paragraph{Partial order on maximal paths} 
Given a sequence of
profiles $s = \lambda_1\dots \lambda_n$, 
we define a relation $\leqdot_s$ 
between maximal paths of $G_s$. 
Formally, $\leqdot_s$ is the
transitive closure of the relation $\leqdot'_s$ defined on maximal paths of $G_s$ by $\pi\leqdot'_s
\pi'$ if
there exist a vertex $(i,j_1)\in \pi$ and a vertex $(i,j_2)\in
\pi'$ such that $j_1\leq j_2$. 
The order is illustrated on Fig.~\ref{fig:abstraction}.
\begin{lemma}\label{lem-PartOrder}
Given a consistent sequence of profiles $s$, the relation $\leqdot_s$ is 
 a partial order on the maximal paths
 of $G_s$.
 \end{lemma}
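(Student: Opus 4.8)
The plan is to check the three defining properties of a partial order for $\leqdot_s$. Transitivity is free, since $\leqdot_s$ is \emph{defined} as the transitive closure of $\leqdot'_s$. Reflexivity is immediate: any vertex $(i,j)$ of a maximal path $\pi$ satisfies $j\le j$, so $\pi\leqdot'_s\pi$, hence $\pi\leqdot_s\pi$. The real content is \emph{antisymmetry}, and to obtain it I would first extract a little structure from consistency of $s$.

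By Prop.~\ref{prop-consist}(1), $G_s$ is a disjoint union of directed paths; hence two distinct maximal paths share no vertex, and (since edges of $G_s$ go from a column $\lambda_i$ to the next one $\lambda_{i+1}$) each maximal path meets each column in at most one vertex. By Prop.~\ref{prop-consist}(2) the set $\mathrm{span}(\pi)\subseteq\{1,\dots,n\}$ of columns met by $\pi$ is an interval; for $i\in\mathrm{span}(\pi)$ write $\mathrm{row}_\pi(i)$ for the index, within $\lambda_i$, of the unique vertex of $\pi$ in column $i$. The crucial consequence of the non-crossing clause 3c of consistency is a monotonicity statement: if $\pi\ne\pi'$ and $i,i+1\in\mathrm{span}(\pi)\cap\mathrm{span}(\pi')$, then $\mathrm{row}_\pi(i)<\mathrm{row}_{\pi'}(i)$ iff $\mathrm{row}_\pi(i+1)<\mathrm{row}_{\pi'}(i+1)$ — in each column the two values differ because the paths are disjoint, and 3c forbids the two edges joining their column-$i$ vertices to their column-$(i+1)$ vertices from swapping order. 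Since $\mathrm{span}(\pi)\cap\mathrm{span}(\pi')$ is an interval, iterating this shows that for distinct $\pi,\pi'$ with intersecting spans either $\mathrm{row}_\pi<\mathrm{row}_{\pi'}$ holds on the whole common span — write $\pi\sqsubset\pi'$ — or the reverse does; and with this notation, for $\pi\ne\pi'$ we have $\pi\leqdot'_s\pi'$ iff $\mathrm{span}(\pi)\cap\mathrm{span}(\pi')\ne\varnothing$ and $\pi\sqsubset\pi'$. Thus antisymmetry of $\leqdot_s$ reduces to showing that $\sqsubset$ has no cycle.

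To prove $\sqsubset$ is acyclic I would take a shortest hypothetical cycle $\pi_0\sqsubset\pi_1\sqsubset\cdots\sqsubset\pi_{m-1}\sqsubset\pi_0$, necessarily with pairwise distinct paths. The case $m=2$ is impossible since $\sqsubset$ is asymmetric (a common-span column cannot witness both $\mathrm{row}_{\pi_0}<\mathrm{row}_{\pi_1}$ and $\mathrm{row}_{\pi_1}<\mathrm{row}_{\pi_0}$). For $m=3$, the three spans $S_0,S_1,S_2$ intersect pairwise, so by the one-dimensional Helly property ($\max_j a_j\le\min_j b_j$ when intervals $[a_j,b_j]$ meet pairwise) there is a common column $i^*$, and evaluating $\mathrm{row}$ at $i^*$ yields $\mathrm{row}_{\pi_0}(i^*)<\mathrm{row}_{\pi_1}(i^*)<\mathrm{row}_{\pi_2}(i^*)<\mathrm{row}_{\pi_0}(i^*)$, a contradiction. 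For $m\ge4$, the intersection graph of $S_0,\dots,S_{m-1}$ contains the $m$-cycle $C_m$; interval graphs being chordal, this graph must have an edge between two indices $i,j$ that are non-adjacent on $C_m$, i.e. $S_i\cap S_j\ne\varnothing$, and then $\pi_i\sqsubset\pi_j$ or $\pi_j\sqsubset\pi_i$. Composing whichever holds with the suitable arc of the original cycle yields a $\sqsubset$-cycle of length $\ell$ with $3\le\ell<m$, contradicting minimality. Hence $\sqsubset$ is acyclic, so its transitive closure $\sqsubset^{+}$ is a strict partial order, $\leqdot_s$ is the reflexive closure of $\sqsubset^{+}$, and antisymmetry follows: $\pi\leqdot_s\pi'\leqdot_s\pi$ with $\pi\ne\pi'$ would give $\pi\sqsubset^{+}\pi'\sqsubset^{+}\pi$, hence $\pi\sqsubset^{+}\pi$, which is impossible.

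The main obstacle is precisely the last passage, from "no $3$-cycle" to "no $m$-cycle": $\sqsubset$ only compares paths with overlapping spans, so in a long cycle the spans need not meet pairwise and Helly cannot be applied globally. The point that makes it work is the chordality of interval graphs, which always provides a shortcut between two non-consecutive paths of the cycle, letting one induct on the cycle length; everything else — reflexivity, transitivity, the monotonicity lemma from clause 3c, and one-dimensional Helly — is routine once Prop.~\ref{prop-consist} is in hand.
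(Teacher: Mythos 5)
Your proof is correct, but it takes a genuinely different route from the paper's. The paper proves antisymmetry head-on: assuming two distinct maximal paths are related in both directions, it takes the two witnessing chains of overlapping paths and inducts on their combined length, in each case either exhibiting a crossing in some graph $G_{\profile_i,\profile_{i+1}}$ (contradicting clause 3c of consistency, via Prop.~\ref{prop-consist}) or producing a strictly shorter pair of witnessing chains; this is a direct but fairly delicate case analysis on how the intermediate paths overlap, illustrated by Fig.~\ref{Fig-pathOrder}. You instead isolate the use of non-crossing in a single local monotonicity lemma --- two vertex-disjoint maximal paths keep the same relative vertical order on every column of their common span, which is an interval --- after which the one-step relation $\leqdot'_s$ on distinct paths becomes an overlap relation between intervals with a constant sign, and acyclicity is settled by standard interval combinatorics: $2$-cycles by asymmetry, $3$-cycles by the one-dimensional Helly property, and longer cycles by chordality of interval graphs, which supplies a chord shortcutting a minimal cycle. (Your interval-span observation in fact follows already from the definition of $G_s$, whose edges join consecutive columns, independently of Prop.~\ref{prop-consist}(2); this is harmless.) What each approach buys: the paper's induction is self-contained and never leaves $G_s$, while your factorisation pins down the single place where consistency is invoked and delegates the global argument to well-known facts about intervals, which is arguably more modular and easier to verify, at the modest cost of importing Helly and chordality.
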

 \begin{proof}
  The relation $\leqdot_s$ is clearly reflexive and it is transitive by definition.
  All that remains to prove is that for any two diffrent paths $\pi$ and $\pi'$, 
  we do not have $\pi\leqdot_s \pi'$ and $\pi'\leqdot_s\pi$.
  We prove that this situation induces a crossing at some point in the graph.
  Which concludes the proof since crossing does not happen in a consistent sequence by condition $(3)$ of Proposition~\ref{prop-consist}.
  
  Assume that we have two such different paths $\pi$ and $\pi'$.
  Then there exist two sequences of different paths $(\pi_k)_{k=1}^n$ (resp. $(\pi'_\ell)_{\ell=1}^m$) 
  such that $\pi=\pi_1 \leqdot_s \ldots\leqdot_s \pi_n=\pi'$ (resp. $\pi'=\pi'_0 \leqdot_s \ldots\leqdot_s \pi'_m=\pi$).
  Moreover, for all $1\leq k<n$ (resp. $\ell<m$), let $i_k$ (resp. $i'_\ell$) be an input position on which we have two positions $j_{k,1} <j_{k,2}$ (resp. $(j'_{\ell,1} < j'_{\ell,2}$) such that $(i_k,j_{k,1})\in \pi_k$ and $(i_k,j_{k,2})\in \pi_{k+1}$ (resp. $(i'_\ell,j'_{\ell,1})\in \pi'_\ell$ and $(i'_\ell,j_{\ell,2})\in \pi'_{\ell+1}$). Figure~\ref{Fig-pathOrder} illustrates these notations.
  
  We prove that such sequences generate crossings by induction on $n+m$.
  If $n=m=1$, then the paths $\pi$ and $\pi'$ appear one on top on another. 
  In particular, we have that $(i_1,j_{1,1})$ and $(i'_1,j'_{1,2})$ belong to $\pi$, 
   $(i_1,j_{1,2})$ and $(i'_1,j'_{1,1})$ belong to $\pi'$,
   and on $i_1$ we have $j_{1,1}<j_{1,2}$ while on $i'_1$ we have 
   $j'_{1,2}>j'_{1,1}$. Since the paths are continuous by definition, this induces a crossing 
   between $\pi$ and $\pi'$.
   
  Let assume now that $n>1$, and that if two different paths are ordered both ways by $\leqdot_s$ by a sequence of length smaller than $n+m$, then there exists a crossing.
  First, if there is a crossing between $\pi=\pi_1$ and $\pi_2$, then we get our conclusion.    
  Secondly, if $\pi$ ranges over to $i_2$ as defined previously, then there is a clause $(i_2,j)$ that belongs to $\pi$.
  If $j>j_{2,1}$, it implies that there is a crossing between $\pi$ and $\pi_2$.
  If $j=j_{2,1}$, then the âths $\pi$ and $\pi_2$ have one common node. Since they are different paths, we have a node with at least two ingoing or outgoing arrows, which contradicts the consistency definition.
  So if $j<j_{2,1}$, then we also have $j<j_{2,2}$ and hence $\pi_2$ is not needed in the sequence.
  We get a strictly smaller witness sequence for $\pi$ and $\pi'$, and hence by induction there is a crossing in the subsequence with $\pi_2$ deleted.

	Now assume that $\pi$ does not range over to $i_2$.
	By transitivity we have that $\pi_2\leqdot_s\pi'$ and $\pi'\leqdot_s \pi_2$.
	Then if we consider the sequence $(\pi'_\ell)_{\ell=1}^m$, the sequence has to pass by $i_2$ 
	in order to reach the path $\pi$.
	In other words, there is a path $\pi'_h$ that has a node $(i_2,j_h)$.
	If $j_h>j_{2,1}$, we get that the sequences $(\pi_k)_{k=2}^n$ and $(\pi'_\ell)_{\ell=1}^h$ are respectively witnesses of $\pi_2\leqdot_s\pi'$ and $\pi'\leqdot_s \pi_2$ of length strictly smaller than $n+m$, and hence contain a crossing by induction.
	If $j_h=j_{2,1}$, we have a node with two ingoing or outgoing edges, which contradicts the consistency definition.
	If $j_h<j_{2,1}$, then the sequences $(\pi_k)_{k=1}^2$ and $(\pi'_\ell)_{\ell=h}^m$ are a witness of $\pi\leqdot_s\pi'_h$ and $\pi'_h\leqdot_s \pi$ of length strictly smaller than $n+m$, and hence contain a crossing.
	
	\begin{figure}
	\begin{center}
	\begin{tikzpicture}
	\foreach \i/\j/\k in {1/1/1, 2/2/1.5, 3/3/2.3, n-1/4.5/3,n/5.5/3.5}
	{
	\node (pi\i) at (\j-0.3,6.5-\k) {$\pi_{\i}$};
	\draw (pi\i) -- (\j+1.5,6.5-\k);	
	}
	
	\foreach \i/\j in {1/2.1,2/3.45,n-1/5.8}
	{
	\node[color=gray] (i\i) at (\j,0) [above] {$i_{\i}$}; 
	\draw[dashed,color=black!20!white] (i\i) -- (\j,6);
	}

	\foreach \i/\j/\k in {m/0.5/1, m-1/1/1.5, 2/4.5/2.2,1/5.5/3}
	{
	\node (pi\i) at (\j+2.3,\k) {$\pi'_{\i}$};
	\draw (pi\i) -- (\j+0.4,\k);	
	}	

	\foreach \i/\j in {m-1/1.45,1/6.3}
	{
	\node[color=gray] (i\i) at (\j,0) [above] {$i'_{\i}$}; 
	\draw[dashed,color=black!20!white] (i\i) -- (\j,6);
	}
	
	\node[rotate=135] (dot) at (4.7,4) {$\cdots$};
	\node[rotate=45] (dot2) at (4.2,2) {$\cdots$};
	
	\node[color=gray,above right] (j11) at (2.1,5.5) {\footnotesize $j_{1,1}$};
	\node[color=gray,below right] (j12) at (2.1,5) {\footnotesize $j_{1,2}$};	
	\node[color=gray,above right] (j21) at (3.45,5) {\footnotesize $j_{2,1}$};
	\node[color=gray,below right] (j22) at (3.45,4.2) {\footnotesize $j_{2,2}$};
	\node[color=gray,above right] (jn1) at (5.8,3.5) {\footnotesize $j_{n-1,1}$};
	\node[color=gray,above right] (jn2) at (5.8,3) {\footnotesize $j_{n-1,2}$};		

	\node[color=gray,below right] (jp11) at (6.3,3) {\footnotesize $j'_{1,1}$};
	\node[color=gray,below right] (jp12) at (6.3,2.2) {\footnotesize $j'_{1,2}$};
	\node[color=gray,above left] (jpm1) at (1.45,1.5) {\footnotesize $j'_{m-1,1}$};
	\node[color=gray,above left] (jpm2) at (1.45,1) {\footnotesize $j'_{m-1,2}$};		
	
	\end{tikzpicture}
	\caption{The sequence of ordered path back and forth between $\pi$ and $\pi'$. Recall that $\pi_1=\pi=\pi'_m$ and $\pi_n=\pi'=\pi'_1$. The $j_{a,b}$ points indicate intersections.}\label{Fig-pathOrder}
	\end{center}
	\end{figure}
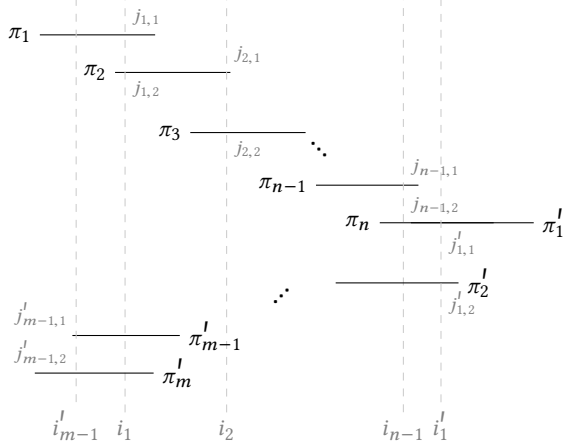
 \end{proof}

We can now prove a soundness result: from any $C$-valid and maximally
consistent sequence of profiles $s$, we can reconstruct models of $C$. As
we have seen, the elements of the relation $\leqdot_s$ are in
bijection with the local clause occurrences of $s$. 
By Lemma~\ref{lem-PartOrder}, the relation $\leqdot_s$ is a partial order when $s$ is consistent, and hence can be linearised.
We call 
\emph{linearisation} of $\leqdot_s$ any total order on the elements 
of $\leqdot_s$ which is compatible with this partial order. 
Since the paths of $\leqdot_s$ are in bijection with the
local clause occurrences of $s$, any such linearisation $\leq$ induces an
output word, and thus an o-graph $(u,(v,o))$. 
Formally, it is 
defined by $u$ the sequence of $\sigma$-symbols in $s$, and for any
occurrence $(i,j)$ of a local clause $(\gamma,\cdot)$, if the path $\pi_{i,j}$ is
the $k$th in the linearisation, then the $k$th position of $v$ is
labelled by $\gamma$, and its origin is $o(k) = i$.

The next Lemma proves that for any linearisation of $\leqdot_s$, we
can construct a model with $s$ as a sequence of profiles that respects
the partial order $\leqdot_s$. 


\begin{restatable}{lemma}{lemLinear}\label{lem:lemLinear}
Given an instance $C$ of \lpp, a valid and maximally consistent
sequence of  $C$-profiles $s = \profile_1\ldots \profile_n$,
and a linearisation $\leq$ of $\leqdot_s$, the o-graph induced by $\leq$ is $C$-valid
and verifies $\profseq_C(w)=s$.
\end{restatable}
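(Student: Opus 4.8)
The plan is to verify the two assertions about $w=(u,(v,o))$ separately: first that $\profseq_C(w)=s$, and then that $w\models C$. Throughout I rely on the structural facts already established: the description of $G_s$ as a disjoint union of maximal directed paths carrying a fixed label (Prop.~\ref{prop-consist}), the fact that $\leqdot_s$ is a partial order on these paths (Lemma~\ref{lem-PartOrder}), and the definitions of consistency, maximality and validity.

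For $\profseq_C(w)=s$ I argue componentwise. The $\Sigma$-component of the profile of input position $i$ in $w$ is $u(i)=\sigma_i$ by construction. The $S$-component is the set of states reached at position $i$ by the accepting runs of $A_\Psi$ on $u$; since consistency condition $(1)$ together with the initiality of $\profile_1$ and finality of $\profile_n$ guarantees that $A_\Psi$ accepts $u$, and maximality condition $(1)$ asserts that $S_i$ is exactly this set, the $S$-components agree. For the clause sequences, the key point is that the output positions of $w$ are by definition in bijection with the maximal paths of $G_s$ (equivalently, the local clause occurrences of $s$), and that for the output position $p_\pi$ carried by a path $\pi$ with local clause at column $i_0$, maximality condition $(2)$ forces the clause generated by $p_\pi$ at any column $i$ lying on $\pi$ to coincide with the label of the vertex of $\pi$ in column $i$. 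Combining this with the order-compatibility of the chosen linearisation — if a vertex of $\pi$ lies strictly below a vertex of $\pi'$ in some column then $\pi\leqdot_s\pi'$, hence $p_\pi<_\out p_{\pi'}$ — I would show, mirroring the Claim in the proof of Lemma~\ref{lem:consmax} but read in the opposite direction, that the full graph $G_w$ of $w$, restricted to the vertices kept by the abstraction $\alpha$, is isomorphic as a column-ordered labelled graph to $G_s$; since $\alpha$ of the full profile sequence of $w$ is $\profseq_C(w)$, and this abstracted sequence is recoverable from that graph together with the (matching) $\Sigma$- and $S$-components, we get $\profseq_C(w)=s$. One also checks that each profile of a good sequence carries a local clause, so that $w$ is non-erasing.

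For $w\models C$, take first a universal constraint $(\gamma,\gamma',d,\psi)\in C_\forall$ and suppose a pair of output positions $(p,q)$ of $w$ violates it. In the full profile $\theta$ of input position $o(p)$, both $p$ and $q$ generate clauses, $p$ a local clause $(\gamma,\cdot)$ and $q$ either a local clause $(\gamma',\cdot)$ (when $o(q)=o(p)$) or $(\gamma',R,v)$, where maximality makes $R$ the set of all state pairs obtained from accepting runs relating positions $o(p)$ and $o(q)$; since $u\models\psi(o(p),o(q))$ this yields $R\cap\Select_\psi\neq\emptyset$ (or $(x,x)\in\Select_\psi$ for some $x\in S_{o(p)}$ in the local case), so $\theta$ contains a forbidden pattern. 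But $\alpha$ preserves such patterns: a discarded clause is one occurring at least three times, so it is still present, and $\alpha$ preserves the relative order of kept occurrences; moreover $(\gamma,\cdot)$ occurs at most twice in $\theta$ (at most two output positions share the origin $o(p)$ and the label $\gamma$), so $p$'s occurrence survives. Hence $\profile_{o(p)}$ would be invalid, contradicting that $s$ is valid. For an existential constraint $(\gamma,E)\in C_\exists$ and an output position $p$ of $w$ labelled $\gamma$ with origin $i_0$, I first note that $p$'s local clause $(\gamma,\cdot)$ is kept in $\profile_{i_0}$, because the output positions of $w$ with origin $i_0$ and label $\gamma$ are in bijection with the (at most two) occurrences of $(\gamma,\cdot)$ in $\profile_{i_0}$. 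Profile validity then supplies $(\gamma',d,\psi)\in E$ and a clause $A_j$ of $\profile_{i_0}$, at an index respecting $d$ relative to $p$'s index, referring to $\gamma'$ and meeting $\Select_\psi$; translating $A_j$ into the output position $q$ carried by the maximal path of $G_s$ through it, one checks, again using maximality condition $(2)$ and the order-compatibility of the linearisation, that $q$ is a genuine valid witness of $p$ in $w$: its label is $\gamma'$ by Prop.~\ref{prop-consist}$(2)$, the direction $d$ is respected in $<_\out$, and $R\cap\Select_\psi\neq\emptyset$ is equivalent to $u\models\psi(o(p),o(q))$.

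The main obstacle is the identity $\profseq_C(w)=s$, and more precisely pinning down $\alpha$ applied to the full profile sequence of $w$: one has to control the clauses generated at columns lying outside the span of a maximal path and explain why these are exactly the occurrences discarded by $\alpha$. This is the mirror image of the Claim in the proof of Lemma~\ref{lem:consmax} and relies essentially on maximality (to fix both the state sets $S_i$ and the relations $R$ inside clauses) together with the structural description of $G_s$ from Prop.~\ref{prop-consist}; the remaining verifications of validity are then comparatively routine consequences of the definitions.
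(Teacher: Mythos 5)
Your overall route is the same as the paper's: build $w$ from the linearisation, prove $\profseq_C(w)=s$, and then transfer validity from $s$ to $w$ via the bijection between output positions and local clause occurrences. The validity half of your argument is fine (if anything more detailed than the paper's, e.g.\ your check that the abstraction $\alpha$ cannot destroy a forbidden pattern and that $(\gamma,\cdot)$ occurs at most twice in the relevant column of the constructed $w$). The genuine gap is exactly where you flag it: the identity $\profseq_C(w)=s$ is announced, not proved, and your validity argument itself depends on it. Appealing to ``the mirror image of the Claim in the proof of Lemma~\ref{lem:consmax}'' does not close it: that Claim compares two objects ($G_{\profseq(w)}$ and the $\alpha$-restriction of the full graph) which are both derived from one and the same o-graph, and its proof rests on uniqueness of predecessor/successor clauses; here you must compare the \emph{given} abstract sequence $s$ with the profile sequence of the \emph{reconstructed} $w$, which is a different statement. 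In particular you have not ruled out that a clause generated by some output position at a column outside the span of its maximal path survives $\alpha$ and creates an occurrence in $\profseq_C(w)_k$ with no counterpart in $\lambda_k$, nor that the kept occurrences end up associated with different local clauses.

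The missing argument, as the paper gives it, is an extremal-occurrence argument on top of the two facts you do state. Fix a column $k$. Because the linearisation refines $\leqdot_s$, the local clauses whose maximal paths meet $\lambda_k$ yield output positions of $w$ ordered as their vertices in $\lambda_k$, so (using maximality to match clause contents) $\lambda_k$ embeds as an order-preserving subsequence of the full profile of column $k$ of $w$. Now take any occurrence kept by $\alpha$ in that full profile, say the minimal occurrence of a clause $(\gamma,R,v)$, and let $\pi$ be the maximal path of the local clause producing it. If $\pi$ were not the path carrying the minimal occurrence of $(\gamma,R,v)$ in $\lambda_k$, then the path that does carry it is strictly below $\pi$ in $\leqdot_s$, hence its output position precedes that of $\pi$ in the linearisation, and by the embedding it generates an occurrence of the same clause strictly below the one assumed minimal --- a contradiction; the maximal case is symmetric. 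Applying this to every kept occurrence shows that $\lambda_k$ and the $k$-th profile of $w$ consist of the clauses generated by the same local clauses in the same order, whence $\profseq_C(w)=s$ (the $\Sigma$- and $S$-components match as you argued). Without this step, which also uses the structural description of $G_s$ from Proposition~\ref{prop-consist}, the central identity remains unestablished.
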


\begin{proof}
Let $w=(u,(v,o))$ be the o-graph induced by $\leq$ and we set $s'=\profile_1'\ldots \profile_{n'}'$ the sequence of profiles of $w$. 
We now prove that $s=s'$.
First, let us remark that by construction the underlying word of $s'$ is $u$, and equal to the underlying input word of $s$. Thus $n=n'$, and for $k\leq n$, $\sigma_k=\sigma_k'$ and since $s$ is maximal, $S_k=S_k'$.
All that is left to show is that for any $k\leq n$, the sequence of clauses 
of $\profile_k$ is equal to the sequence 
 of $\profile_k'$.

The local clauses of $s$ are exactly the output positions of $w$.
Then the local clauses of $s'$ are exactly the same as the ones of $s$ by construction of $w$, and there is a canonical bijection between the two.
Now since $s'$ is consistent,
each consistency clause belongs to a maximal path that contains exactly one local clause.
Let $((i_\ell,j_\ell))_{\ell=1}^n$ (resp. $((i_\ell,j_\ell))_{\ell=1}^m$) be the sequence of local clauses
whose maximal path have a node in the profile $\profile_k$ (resp. $\profile'_k$), ordered by their appearance in $\profile_k$ (resp. $\profile'_k$).
We aim to prove that these two sequences are equal, which would conclude the proof.
Indeed, as $s$ and $s'$ have the same input word and the same local clauses, each local clause generates the same path of clauses in each sequence profile, and if the profiles of the same input position have consistency clauses belonging to the same sequence of paths, then the sequences of clauses are the same.
%

First, since $w$ comes from a linearisation of $\leqdot_s$, the sequence $((i_\ell,j_\ell))_{\ell=1}^n$ appears appear in the same order in $w$, and thus in $s'$, and it generates $\profile_k$ as a subsequence of the full profile of $k$ of $w$, \ie the sequence of all traces of output positions of $w$ on $k$ before suppressing redondancy.
Now take a clause $B^k_i$ of $\profile_k'$, and let $B^k_i=(\gamma,R,v)$. Then $B^k_i$ is either the minimal or maximal clause $(\gamma,R,v)$ in the full profile of $k$ of $w$.
Without loss of generality, assume $B^k_i$ is minimal.
Let $(i_h,j_h)$ be the local clause whose maximal path contains $B^k_i$.
Since $s$ and $s'$ have the same local clauses, there exists a local clause $(f,g)$ in $s$ which is the bijective image of $(i_h,j_h)$. 
Assume now that the path $\pi_{f,g}$ is not minimal at position $k$ (in $s$).
This means there exists an element $e$ of $((i_\ell,j_\ell))_{\ell=1}^n$ that generates the clause $(\gamma,R,v)$ on $k$ such that $(i_e,j_e)$ is strictly smaller than $(f,g)$ in the $\leqdot_s$ order, and hence the local clause bijective to $(i_e,j_e)$ in $s'$ is smaller than $(i_h,j_h)$.
But since $\profile_k$ is a subsequence of the full profile of $k$ of $w$, 
we get that the local clause bijective to $(i_e,j_e)$ in $s'$ generates a clause $(\gamma,R,v)$ in the full profile of $k$ of $w$ that is smaller than $B^k_i$, which is a contradiction to the fact that $B^k_i$ was the minimal occurrence of such a clause in the full profile of $k$ of $w$.
Consequently, the clause $B^k_i$ is associated to the same local clause as the minimal clause $(\gamma,R,v)$ in $\profile_k$ (up to the bijection between local clauses of $s$ and $s'$).
Applying this to any clause of $\profile_k'$, we get that the profiles $\profile_k$ and $\profile_k'$ are associated to the same set of local clauses, concluding the proof as explained above.

Now it remains to prove that $w$ is $C$-valid.
This is a direct consequence of the facts that its sequence of profiles $s$ is $C$-valid and 
that the output positions are exactly the local clauses of $s$.
Indeed, since if $w$ does not satisfy a universal constraint, then there exists two output positions of $w$ that violates it.
Then there exists two local clauses that violate it, and by definition of clauses there is a profile with two clauses that violates this constraint. Since $s$ is $C$-valid, this is not the case, and hence $w$ satisfies all universal constraints.
Now given an output position of $w$ and an existential constraint, we know that the associated local clause either does not satisfy the constraint's label, 
or it has a valid witness in the form of a clause in its profile. 
This consistency clause belongs to a maximal path to a local clause whose associated output position is a valid witness, concluding the proof.
\end{proof}

\subsection{Back to the theorems}

In this section, we prove Proposition~\ref{prop:regDom} about the
regularity of the input domain of any \lorigin-o-tranduction, a result
which we later use to prove our main result (Theorem~\ref{thm:unif})
about the regular synthesis of \lorigin. 

The proof of these results rely mainly on the profile approach developped in the
previous section. The proof of synthesis unfolds into two major steps. First, we use the
profile automaton to create a non-deterministic one-way transducer
which associates to an input word the set of its valid sequences of
profiles. Then by realising it by a regular function, we associate to each input word a
unique sequence of profiles.
For the second part, we use some results by Courcelle to prove
that any partial order (seen as DAG) can be linearised by some
MSO-transduction, \ie there exist DAG-to-words MSO-transductions that
define linearisations.

Let us start with a key lemma. The size of an instance $C = (C_\exists,C_\forall)$ of \lpp is
$|C| = (\sum_{(\gamma,\gamma',d,\Psi)\in C_\forall} |\Psi|+3) +
\sum_{(\gamma,E)\in C_\exists} 1+\sum_{(\gamma',d,\Psi)\in E}
2+|\Psi|$, where $|\Psi|$ is the number of symbols of the MSO-formula
$\Psi$. If MSO-predicates are given by query automata, then we do not
include the size of these automata in the size of $C$, i.e., it is
defined as before where $|\Psi|$ is just replaced by $0$. 

\begin{restatable}{lemma}{lemAutomProf}\label{lem-ProfileAut}
Given an instance $C$ of \lpp, the set $\{ \profseq(w)\mid w\models
C\}$ is effectively regular. 

Moreover, if the \mso predicates of $C$ are given as query automata, then checking emptiness 
of $C$ can be done in \textsf{PSpace} in the size of $C$ and
\textsf{ExpSpace} in the number of states of the query automata.
\end{restatable}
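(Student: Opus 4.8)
The plan is to identify $\{\profseq(w)\mid w\models C\}$ with the set of \emph{good} profile sequences (maximal, consistent and $C$-valid) and then to show that this set is recognised by a finite automaton over the (finite) alphabet of $C$-profiles. One inclusion is the completeness result (Cor.~\ref{Cor-complete}): $\profseq(w)$ is good whenever $w\models C$. The other is the soundness argument: given a good sequence $s$, the o-graph $w$ obtained from any linearisation of the partial order on the maximal paths of $G_s$ (Lem.~\ref{lem-PartOrder}) satisfies $C$ and has $\profseq_C(w)=s$ (Lem.~\ref{lem:lemLinear}). So the two sets coincide, and it is enough to build an automaton for the good sequences; its size is bounded by the bound $N$ on the number of profiles given above.

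For regularity I would take the product of three automata reading a profile sequence $\profile_1\dots\profile_n$. First, $C$-validity: a single profile either does or does not satisfy each existential and each universal constraint of $C$, so this is a letter-by-letter check. Second, consistency: this is a condition on $\profile_1$ (initial), a condition on $\profile_n$ (final), and, for each consecutive pair, the absence of the three forbidden patterns in the bounded bipartite graph $G_{\profile_i,\profile_{i+1}}$ together with existence of successors and predecessors --- a $2$-local property, hence regular. Third, maximality. Condition~(1) requires each $S$-component to be the set of states lying on some accepting run of $A_\Psi$ at that position, that is, the intersection of the forward-reachable set and the backward-co-reachable set; the forward-reachable sets are computed deterministically by the subset construction of $A_\Psi$, while the co-reachable sets are guessed on the fly and verified against the remaining input symbols and the final condition. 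Condition~(2) on the relations $R$ occurring inside the clauses then follows automatically once the $S$-components are maximal: consistency fixes the relation of the clause adjacent to a local clause as the maximal compatible one, and Prop.~\ref{prop:uniquepred} forces this maximality to propagate, through unique successors and predecessors, along every maximal path of $G_s$. (One may equivalently observe that validity, consistency and maximality are all MSO-definable properties of the profile word, which immediately yields regularity.) This automaton is built effectively and recognises exactly the good sequences; projecting it onto $\Sigma$ and invoking Lem.~\ref{lem:lemLinear} re-proves Prop.~\ref{prop:regDom}.

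For the complexity bound I would not construct this automaton explicitly. A $C$-profile is representable in space polynomial in $|C|$ and exponential in $|\cS|$: it carries a subset of $\cS$ and at most $2|\Gamma|(1+2^{|\cS|^2+1})$ clauses, each a binary relation on $\cS$ with a label and a direction; and checking validity of one profile, consistency of a consecutive pair, and maintaining the subset-construction and co-reachability information all fit in the same space. Testing non-emptiness of the good-sequence automaton on the fly --- nondeterministically guessing the sequence symbol by symbol while keeping only the current profile, the previous profile, and this auxiliary information --- therefore runs in space polynomial in $|C|$ and exponential in the number of states $|\cS|$ of the query automata, which by Savitch's theorem gives the claimed \textsf{PSpace}/\textsf{ExpSpace} bounds.

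The main obstacle is condition~(1) of maximality: it is a genuinely two-sided reachability property of the input word, and the difficulty is to certify it with a left-to-right device. Pairing the forward subset construction of $A_\Psi$ with a guess-and-verify of the co-reachable sets is what makes it finite-state, and one has to check carefully that the guessed backward data stay coherent with the clause components of the profiles all the way to $\profile_n$. Everything else --- validity, consistency, and, given~(1), maximality of the clause relations through Prop.~\ref{prop:uniquepred} --- is local and routine.
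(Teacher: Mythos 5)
Your proposal is correct and follows the paper's skeleton: the same identification of $\{\profseq(w)\mid w\models C\}$ with the good (valid, maximally consistent) sequences via Cor.~\ref{Cor-complete} and Lemma~\ref{lem:lemLinear}, a finite automaton over the bounded profile alphabet, and the same on-the-fly nondeterministic emptiness check closed off by $\textsf{NPSpace}=\textsf{PSpace}$ and $\textsf{NExpSpace}=\textsf{ExpSpace}$. Where you genuinely diverge is in how the automaton certifies maximality. The paper builds one deterministic automaton over \emph{enriched} profiles: each $S$-component is paired with the set $S'$ of states reached so far but not declared in the $S$-components (rejecting when such a state could merge into a declared $S$ or is accepting at the end), and each clause relation $R$ is paired with a relation $R'$ of candidate pairs that could still be added to $R$, so both conditions of maximality are monitored explicitly. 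You instead verify condition~(1) by running the forward subset construction and guessing the co-reachable sets, checking the backward recurrence locally together with the final condition --- a clean and correct alternative --- and you then claim condition~(2) comes for free. That claim is true, but it is the one place where your justification is too thin: Prop.~\ref{prop:uniquepred} and consistency only tell you that the relations along each maximal path of $G_s$ are \emph{uniquely forced} by propagation outward from the local clause; you still have to argue that these forced relations coincide with the relations demanded by maximality~(2). This needs the short run-stitching argument: once the $S$-components are maximal, every state in them is both reachable and co-reachable, so any transition chain between $S$-states passing through $S$-states extends to a single accepting run visiting both endpoints, whence the propagated relation equals the set of pairs realised by one accepting run. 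Spelled out, your route arguably yields a slightly leaner automaton than the paper's $R'$-enrichment (only the co-reachability guess is extra state), at the price of this additional lemma; the complexity analysis is unaffected either way, though you should add, as the paper does, the exponentially-represented counter bounding the length of the guessed run so that the on-the-fly emptiness test terminates.
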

\begin{proof}
Let $G$ be the set of $C$-valid and maximally consistent
sequences of profiles. We claim that $G = \{ \profseq(w)\mid w\models C\}$. If
$w\models C$, then $\profseq(w)$ is $C$-valid and maximally consistent by Corollary~\ref{Cor-complete}, hence $\profseq(w)\in G$. Conversely, if
$\lambda$ is a $C$-valid and maximally consistent sequences of
profiles, then take any linearisation $\leq$ of $\leqdot_s$. It
induces an o-graph $w$ such that $\profseq(w) = \lambda$ and $w\models C$ by
Lemma~\ref{lem:lemLinear}. Hence $\lambda\in \{ \profseq(w)\mid
w\models C\}$.

Then, it suffices to show that $G$ is regular.
We construct a single deterministic automaton $\mathcal{A}$ that reads sequences of profiles, 
and checks validity, consistency and maximality.
Validity is a local property that can be checked on the profile read (as long as consistency holds), 
thus it only corresponds to restricting the profiles that can be read by $\mathcal{A}$ to the valid ones.
Consistency and maximality are checked simultaneously.
Indeed, consistency requires that clauses of two consecutive profiles
are matched, and ensures that the runs appearing in one profile are
present in the other one, up to one transition taken by the predicate
automaton. Maximality is dual in the sense that once the matching is done by consistency, 
maximality amounts to check that the runs not considered in one
profile do not merge with runs in the other one, and that we cannot
globally add any accepting run.

Let us now be more precise. Let $\text{Cons}$ (resp. $\text{Val}$) be
the set of pairs $(\lambda,\lambda')$ of consistent profiles (resp. of
$C$-valid profiles). The states of the automaton $\mathcal{A}$ are
profiles enriched with some information that allows one to check
maximality. First, when $\mathcal{A}$ reads a profile $\lambda\not\in
\text{Val}$, it rejects. If from a state consisting of the enrichment
of a profile $\lambda$ it reads a profile $\lambda'$ such that
$(\lambda,\lambda')\not\in \text{Cons}$, it rejects as well. Let us now
explain what is the extra information added to the profiles. 
%
%
Let $S_\Psi$ be the union of the states of the predicate automata. 
An \emph{enrichment} of a profile $\lambda = (\sigma, S, A_1\dots
A_n)$ is a tuple $\overline{\lambda} = (\sigma,(S,S'), C_1\dots C_n)$ such that
$S'\subseteq S_\Psi$ and $C_i = (\gamma,\cdot)$ if $A_i =
(\gamma,\cdot)$, and $C_i = (\gamma, (R,R'),v)$ for $R'\subseteq
S_\Psi^2$ if $A_i = (\gamma, R, v)$. The tuple $\overline{\lambda}$ is
called an \emph{enriched profile}. The states of $\mathcal{A}$ are
enriched profiles. Intuitively, the set $S'$ consists of all states
not in $S$ which have been reached so far. If at some point, there
exist $\sigma$, a state $s\in S$ and a state $s'\in S'$, and two
transitions from $s$ and $s'$ on reading $\sigma$, towards the same
state, the automaton rejects (because $s'$ could be added to $S$,
contradicting its maximality). Otherwise, $S$ and $S'$ are updated
according to the transitions of the predicate automaton, as for a
subset construction. If eventually $\mathcal{A}$ reads the whole word
and ends up with some $S'$ containing some accepting state, it
rejects. Likewise, the information contained in $R'$ is used by $\mathcal{A}$
to monitor candidate pairs of states that could be added to $R$.
%

Formally, suppose $\mathcal{A}$ read some profile $\profile=(\sigma,S,C_1\ldots C_n)$ and is in some state
$\overline{\lambda}$ where we enrich $S$ with $S'$ and each clause $(\gamma,R,v)$ is enriched with $R'$.
%
Upon reading a profile $\profile''=(\sigma'',S'',A_1''\ldots A_m'')$,
if $\profile''\not\in \text{Val}$ or $(\lambda,\profile'')\not\in\text{Cons}$, then $\mathcal{A}$ rejects (i.e.,
there is no transition). Otherwise, let $T = S'\cdot \sigma$. If
$T\cap S''\neq \emptyset$, then $A$ rejects. Otherwise
we update $\mathcal{A}$ to the state $(\sigma,(S'',T),D_1\ldots D_m)$
constructed as follows. Let $j\leq m$, if $C_i$ is matched with
$A''_j$ (according to consistency), we define $D_j$ in the following way:
\begin{itemize}
\item if $C_i=(\gamma,R,\rightarrow)$ and $A''_j=(\gamma,R'',\rightarrow)$, then we set
$Q=\set{(p'',q)\mid \exists (p',q)\in R'\wedge p''\in p'\cdot\sigma}$. If $R''\cap Q\neq \emptyset$, then we reject. Otherwise, we set
$D_j=(\gamma,(R'',Q),\rightarrow)$.
\item if $C_i=(\gamma,R,\rightarrow)$ and $A''_j=(\gamma,\cdot)$, 
then if there is a pair $(p,q)\in R'$ where $p\in q\cdot\sigma$ and $q\in S''$ we reject, otherwise
we set $D_j=(\gamma,\cdot)$.
\item if $C_i=(\gamma,\cdot)$ and $A''_j=(\gamma,R'',\leftarrow)$, then we define
 $Q=\set{(p,q)\mid \exists p\in T\wedge p\in q\cdot\sigma}$.
 If $R''\cap Q\neq \emptyset$, then we reject, otherwise we set
$D_j=(\gamma,(R'',Q),\leftarrow)$.
\item if $C_i=(\gamma,R,\leftarrow)$ and $A''_j=(\gamma,R'',\leftarrow)$, then we set 
$Q=\set{(p'',q)\mid \exists (p',q)\in R'\wedge p''\in p'\cdot\sigma}$. If $R''\cap Q\neq \emptyset$, then we reject. Otherwise, we set
$D_j=(\gamma,(R'',Q),\leftarrow)$.
\end{itemize}

The initial state is a special state $init$.
The transition from $init$ upon reading a profile $\profile=(\sigma,S,C_1\ldots C_n)$ exists only if $S$ contains only initial states and for each $C_i$ is of the form $(\gamma,\cdot)$ or $(\gamma,R,\rightarrow)$, where the first component of $R$ contains only initial states.
This transition goes to state $\overline{\profile}=(\sigma,(S,S'),D_1\ldots D_n)$ where $S'$ is the set of all initial states of the query automata that are not in $S$, and $D_i=C_i$ if $C_i=(\gamma,\cdot)$ and $D_i=(\gamma,(R,R'),\rightarrow)$ where $R'=\{(p,q)\mid p\in S'\}$ otherwise.
The accepting states are the accepting profiles
where there is not any addition $S'$ or $R'$ containing a final state.

The language recognized by $\mathcal{A}$ is exactly the set $G$ of
good sequences of profiles. The size of an enriched profile is
$1+2^{O(|S_\Psi|)}$, because it may contain all pairs of subsets of
$S_\Psi$. Hence the number of states of $\mathcal{A}$ is
doubly exponential in the number of states of $S_\Psi$. 
Constructing this automaton explicitly would give a doubly exponential
time algorithm. Instead, we can use the classical \textsf{NLogSpace} emptiness
checking algorithm of finite automata by constructing $\mathcal{A}$
on-the-fly. The algorithm needs a counter up to a doubly exponential
value (which is singly-exponentially represented). We have to be
careful though because constructing the automaton on-the-fly requires
to check the properties $\lambda''\not\in \text{Val}$ and
$(\lambda,\lambda'')\not\in \text{Cons}$ with a reasonable
complexity. Checking validity a profile $\profile''$ requires for each
clause of $\profile''$ to scan all constraints of $C$, and 
possibly again all clauses of $\profile''$ (because constraints of $C$
are between two clauses). Overall, this can be achieved in polynomial
time w.r.t. the size of $C$ and of the profile, hence in time
polynomial in $|C|$ and exponential in the number of states of the
query automata, and so in space polynomial in $|C|$ and exponential in
the number of states of the query automata. Consistency between two
profiles can be checked with the same complexity. Overall, one obtains
a non-deterministic algorithm to check emptiness of $G$, which runs in
space polynomial in $|C|$ and exponential in the number of states of
the query automata. The result follows since $\textsf{PSpace =
  NPSpace}$ and $\textsf{NExpSpace = ExpSpace}$.

\end{proof}

We are now able to prove domain regularity. 

\propRegDom*

\begin{proof}
    Let $\varphi\in\lorigin$. By Proposition~\ref{Thm-NonErasing},
    there exists a non-erasing o-tranduction definable by some
    \lorigin-formula $\varphi'$ such that $\dom(\sem{\varphi}) =
    \dom(\sem{\varphi'})$.  By Lemma~\ref{lemma-Scott}, the formula
    $\varphi'$ can in turn be converted into a formula $\varphi''$ in
    SNF which have the same models as $\varphi'$, up to some output
    label morphism. In particular, $\dom(\sem{\varphi'}) =
    \dom(\sem{\varphi''})$. By Proposition~\ref{prop-MSOLPP}, the
    formula $\varphi''$ can be transformed into an equivalent set of
    contraints $C$, \ie such that any non-erasing o-graph $w$
    satisfies $\varphi''$ iff it satisfies $C$. Since $\varphi'$, and
    so $\varphi''$, are non-erasing, for all o-graphs $w$,
    $w\models \varphi''$ iff $w\models C$. Hence, $\dom(\varphi)
    = \dom(\varphi'') = \{ u\mid \exists w = (u,(v,o))\cdot w\models
    C\}$. Now, remind that any profile is of the form $(\sigma, S,
    A_1\dots A_n)$ where $\sigma$ is an input label. Denote by $\pi_1$
    the first projection over these tuples. We then have 
    $\dom(\varphi) = \{ \pi_1(\profseq(w))\mid w\models C\}$, which is
    regular since by Lemma~\ref{lem-ProfileAut}, $\{ \profseq(w)\mid
    w\models C\}$ is regular and regularity is preserved by morphisms
    (and in particular projection).
    
    Now we can easily extend the result to any $\elorigin$ formula $\varphi$: If we see the second-order variables as new unary predicates, we obtain a formula $\varphi'$ over an extended alphabet. By the above proof, the domain of $\varphi'$ is effectively regular, and since regular languages are stable by alphabet projection, we can project away the additional letters and obtain effectively a regular domain for $\varphi$. 
\end{proof}

\thmUnif*

\begin{proof} To ease the reading the proof, we have divided in
    several parts. 

    \paragraph{Preliminaries}
    First, let us show that we can assume \wog that $\varphi$ defines
    a non-erasing o-tranduction. Indeed, if it is not the case, then by
    Proposition~\ref{Thm-NonErasing}, $\varphi$ can be converted into
    a non-erasing \lorigin-o-tranduction $\varphi'$ such that 
    $(u,(v,o))\models \varphi$ iff $(u,(v\#u,\ori'))\models \varphi'$,
    where $\ori'$ coincides with $\ori$ on $v$, maps $\#$ to $1$ and the
    $i$-th symbol of $u$ to position $i$ in the input. Suppose
    $\varphi'$ is realisable by some MSOT $T$. Then, it is not
    difficult to transform $T$ into a realisation of $\varphi$. It
    suffices to compose $T$ with an MSOT $T'$ which maps $v\# u$ to
    $v$ and use closure under composition of MSOT~\cite{journals/eatcs/CourcelleE12} and the fact
    that closure under composition can be done while preserving
    origins, as noticed in \cite{Bojanczyk14}. Hence, it suffices to
    show the result for non-erasing \lorigin-o-tranductions.

    Let $\varphi$ be a non-erasing \lorigin-sentence over input
    alphabet $\Sigma$ and output alphabet $\Gamma$. By
    Lemma~\ref{lemma-Scott}, one can construct a non-erasing
    $\lorigin$-sentence $\varphi'$ in Scott normal form, over input alphabet $\Sigma$ and
    output alphabet $\Gamma\times \Gamma'$ which have the same models
    as $\varphi$ up to projection of $\Gamma\times \Gamma'$ on
    $\Gamma$, \ie $\sem{\varphi} = \{ (u,(\pi_\Gamma(v),o))\mid
    (u,(v,o))\in\sem{\varphi'}\}$. Any synthesis of $\varphi'$ can be
    composed with an \msot which defines the projection on $\Gamma$,
    and once again we use closure under composition of \msot to get
    the result. Therefore, we now focus on realising $\varphi'$,
    and write $\Lambda = \Gamma\times \Gamma'$ (hence $\varphi'$
    defines an o-tranduction from $\Sigma^*$ to $\Lambda^*$). By
    Proposition~\ref{prop-MSOLPP}, $\varphi'$ is equivalent
    to a system of constraints $C\in\text{MCP}$, in the sense that
    $\semo{\varphi'} = \semo{C}$.

    \paragraph{General scheme of the proof}
    
    We first define four o-transductions, some of them being functional, whose composition has the same domain
    as $\varphi'$ and is included in $\semo{\varphi'}$. Then, by
    realising in a regular manner all non-functional transductions
    of this composition, and by composing all syntheses, 
    we will get a regular synthesis of
    $\semo{\varphi'}$. The four o-transductions are defined as follows
    (they are denoted with an $f$-symbol when they are functional):
    \begin{enumerate}[leftmargin=*]
      \item $R_{pro}$ associates with any input word $u$ the set
        $R_{pro}(u) = \{ (\profseq(u,(v,o)),\ori_1)\mid (u,(v,o))\models
        C\wedge \forall i\in\dom(u), \ori_1(i) = i\}$,

      \item $f_G$ which associates with any consistent sequence of
        profiles $s$ the graph $G_s$, with origin mapping $\ori_2$
        taking any vertex of $G_s$ corresponding to an input position
        $p$ of the sequence $s$, to $p$. 

      \item $f_{par}$ which takes $G_s = (V,E,\ell)$ as input and outputs the
        partial order denoted $\leqdot_s^\Gamma$, obtained by
        restricting $\leqdot_s$ to the nodes labelled in
        $\Gamma$, \ie  $\leqdot_s^\Gamma = \leqdot_s\cap \{ (x,y)\mid
        \ell(x),\ell(y)\in\Gamma\}$. It corresponds to the partial
        order depicted in red in Fig.~\ref{fig:abstraction}, where
        each maximal path is identified by a single local clause. 
        The origin mapping is the identify: $\ori_3(x)=x$ for any
        element $x$ of the partial order $\leqdot_s^\Gamma$. 

      \item $R_{lin}$ which inputs $\leqdot_s^\Gamma$ and outputs all
        linearisations of it, again with an identity origin mapping $\ori_4$. 
    \end{enumerate}

      \textbf{Claim}  Let\footnote{Note that we compose here
        relations with origin information: the origin mappings are as
        well composed. Formally, for any two relations $R_1,R_2$, and
        element $u$, we define $(R_2\circ R_1)(u)$ as the set 
        $\{ (v,\ori_1\circ \ori_2)\mid \exists
        (v',\ori_1)\in R_1(u), \exists (v,\ori_2)\in f_2(v')\}$}  $g
      =R_{lin} \circ f_{par}\circ f_G \circ R_{pro}$. Then, 
      \begin{enumerate}
        \item $\dom(g) = \dom(\varphi')$
        \item for all $u\in\dom(g)$, all $(v,o)\in g(u)$, $(u,(v,o))\in\sem{\varphi'}$.
      \end{enumerate}
      Before proving these two points, note that they imply that any realisation of
      $g$ which preserves origins is a synthesis of $\sem{\varphi'}$.

      Let us now prove these two points. Suppose that $u\in \dom(\varphi')$, then there exists
      $(v,o)$ such that $w=(u,(v,o))\models \varphi'$, hence
      $w\models C$ and $\profseq(w)\in R_{pro}(u)$. Since $f_{G}$
      is defined for all consistent sequences of profiles, 
      $f_{par}$ is defined for all partial orders and $R_{lin}$
      is total, we get that $\profseq(w)\in\dom(R_{lin}
      \circ f_{par} \circ f_G)$ and hence $u\in \dom(g)$.

      The inclusion $\dom(g)\subseteq \dom(\varphi')$ is a consequence
      of item $2$, so let us prove item 2. Let $u\in \dom(g)$ and
      $(v,o)\in g(u)$. By definition of $g$, there exists an o-graph
      $w$ with input $u$, a linearisation $l_s$ of $\leqdot^\Gamma_s$ for
      $s = \profseq(w)$, and origin mappings $\ori_1,\ori_2,\ori_3,\ori_4$ such
      that:
      \begin{itemize}
        \item $\ori = \ori_1\circ \ori_2\circ \ori_3\circ \ori_4$
        \item $(s,\ori_1)=(\profseq(w),\ori_1)\in R_{pro}(u)$ (hence $w\models C$)
        \item $(G_s,\ori_2) = f_G(s)$
        \item $(\leqdot_s^\Gamma,\ori_3)  = f_{par}(G_s)$
        \item $(v,\ori_4)\in R_{lin}(\leqdot_s)$.
      \end{itemize}
      Then, $(u,(v,o))$ is the o-graph induced by some linearisation
      of $\leqdot_s$. Since $w\models C$, then $s = \profseq(w)$ is $C$-valid by
      Lemma~\ref{lem:model-to-seq}. It is also maximally consistent by
      Lemma~\ref{lem:consmax}. Therefore we can apply
      Lemma~\ref{lem:lemLinear} and get that 
      $(u,(v,o))\models C$, and so $(u,(v,o))\models \varphi'$.

      \paragraph{MSOT-definability of $f_{par}$, $f_G$ and
        MSOT-Synthesis of $R_{pro}$ and $R_{lin}$} Our goal now
      is to show that the functions $f_{par}$ and $f_G$ are
      MSOT-definable, and that the relation $R_{pro}$ and $R_{lin}$
      are realisable by MSOT-definable functions. Conclusion will
      follow as MSOT transductions are closed under
      composition~\cite{journals/eatcs/CourcelleE12} and moreover, the
      composition procedure preserves origins.

      \emph{MSOT-synthesis of $R_{pro}$}. By
      Lemma~\ref{lem-ProfileAut}, the set $\{ \profseq(w)\mid w\models
      C\}$ is regular. It implies that $R_{pro}$ is
      rational~\cite{Bers79}, \ie is definable by a non-deterministic
      (one-way) transducer. Indeed, if $B$ is an automaton defining $\{ \profseq(w)\mid w\models
      C\}$, it suffices to turn each of its transitions on a clause
      $(\sigma,S,A_1\dots A_n)$ into a (transducer) transition on input $\sigma$
      producing output $(\sigma,S,A_1\dots A_n)$. It is well-known
      that rational relations can be realised by rational
      functions~\cite{Bers79}, and most known realisation
      procedures (for instance based on a lexicographic ordering of
      runs) preserve origin mappings. We can conclude since rational
      functions, as a special case of functions definable by two-way
      deterministic transducers, are MSOT-definable~\cite{EH01}.

      \emph{MSOT-definability of $f_{G}$}. The function $f_G$ inputs a
      consistent sequence $s$ and outputs $G_s$, which is a
      graph. Hence by MSOT-definability we mean MSOT from string to
      graphs. We should now make clear how we represent $G_s$ as a
      structure. We use the signature $\mathcal{D} = \{ E(x,y),
      (c(x))_{c\in\mathcal{C}}, \rightarrow, \downarrow\}$ where $E$ is
      the edge relation, $\mathcal{C}$ is the set of clauses, 
      $c\in\mathcal{C}$ are monadic
      predicates for node labels, $\rightarrow$ and $\downarrow$ are
      respectively induced by the input order on abscissas of $G_s$
      and $\downarrow$ by ordinates of $G_s$ ($(p_1,p_2)\rightarrow
      (p'_1,p'_2)$ if $p_1\leq p'_1$ and 
      $(p_1,p_2)\downarrow
      (p'_1,p'_2)$ if $p_2\leq p'_2$). 
      Let us
      now sketch the definition of an MSO-transduction producing $G_s$
      from $s$: since $s$ is a word and we aim to produce a graph
      whose nodes are the clause occurrences of $s$, we use as many
      copies of $s$ as the maximal number $m$ of clause occurrences in
      a profile of $s$. A copy node $(i,j)$ thus denote the $j$th
      clause of the $i$th profile of $s$. The predicate $\rightarrow$
      is then naturally defined by a formula
      $\phi_{\rightarrow}^{i,j}(x,y)\equiv x\preceq y$ where $\preceq$
      is the linear order on positions of $s$. We also have
      $\phi_{\downarrow}^{i,j}(x,y)\equiv x = y$ if $i<j$, and $\bot$ otherwise. 
      To define the
      edge relation, we have to come back to the definition of $G_s$. 
      For instance, there is an edge between $x^i$ and $y^j$ if $y$ is
      the successor of $x$ in the input sequence $s$, the $i$th clause $A$
      of $\lambda_x$ is of the form $(\gamma,R,\rightarrow)$, $i$ is
      the smallest occurrence of $A$ in $\lambda_x$, and
      the $j$th clause $B$ of profile $\lambda_y$ is a successor of
      $A$, according to the definition of the successor relation between
      clauses, and $j$ is the smallest successor of $A$ in
      $\lambda_y$. Other cases are similar and it shall be clear that
      all these properties are MSO-definable.

      \emph{MSOT-definability of $f_{par}$} Now, $f_{par}$ must output
      a partial order from $G_s$. We represent this partial order naturally as a
      DAG. To implement $R_{lin}$, we also need this DAG to be
      \emph{locally ordered}, \ie all the successors of a node are
      linearly ordered by some order we denote $\leq_{succ}$. Hence, the
      output structure is over the signature $\{ E(x,y), \leq_{succ},
      (\gamma(x))_{\gamma\in\Gamma}\}$. Since $f_{par}$ only adds
      edges, we just take a single copy of the input structure, and we
      filter the nodes which are not labelled in $\Gamma$ (thanks to
      monadic MSO formulas).  
      Then, there is an edge between two vertices $(x,y)$ in the DAG
      iff there was an edge in $G_s$ between these two vertices, or
      there is a vertex $x'$ on the maximal path of $x$ in $G_s$, and
      a vertex $y'$ on the maximal path of $y$ in $G_s$, such that 
      $x'\downarrow y'$ and $x',y'$ have the same abscissa,
      \ie $x'\rightarrow y'\wedge y'\rightarrow x'$. Since
      connectivity is MSO-definable on graphs, it should be clear that
      these properties are MSO-definable over $G_s$. The local order
      $\leq_{succ}$ is defined by the formula $\phi_{\leq_{succ}}(x,y)
      \equiv x\rightarrow y$. Finally, the labels are preserved, hence
      defined by a formula $\phi_\gamma(x) = \gamma(x)$.

      \emph{MSOT-synthesis of $R_{lin}$} We use a known result
      by Courcelle~\cite{cou95} about MSO-definable topological
      sorts of graphs. More precisely, it is shown in Theorem 2.1 of
      \cite{cou95} that there exists an MSOT that, given
      any locally ordered DAG, produces a linear order of the dag
      compatible with its edge relation. This MSOT uses only one copy
      of the input DAG structure, and is defined by some MSO formula
      $\phi_{<}(x,y)$
      over the signature of $\leqdot_s^\Gamma$, with two free variables, which defines a linearisation $<$ of the
      DAG. Since one also needs to preserve
      the labels of the nodes of the DAG, we augment this MSOT with
      label formulas $\phi_A(x) = A(x)$ for all clauses $A$. 
\end{proof}

\section{Data words}

\thmOrtoData*
\begin{proof}
The first part of the statement is a direct consequence of the definition of $\td$. 
We now focus on the equivalence of logics.

Let $\varphi$ be an \lorigin-sentence defining a non-erasing
transduction, we want to obtain an \ldata-sentence $\phi$ defining its
encoding as a typed data word. First we transform $\varphi$ into $\varphi'$ a formula where all quantifications are either input or output quantifications.
This can be done inductively on \lorigin-formula by replacing $\exists
x\ F(x)$ by $\exinput x\ F(x) \vee \exoutput x\ F(x)$.
Then, we
simplify the resulting formula by removing inconsistent use of
variables in the predicates with respect to the type of their quantifiers. 
For that, we say that the occurrence of a term $t$ is of type $\inp$
if it is equal to $x$ where $x$ is quantified over the input, or of
the form $\ori(t')$ for some term $t'$.
It is of type
$\out$ if $t=x$ for $x$ a variable quantified over the output. Now, we replace in $\phi$ all occurrences of
the following atoms by $\bot$ under the following conditions: 
\begin{itemize}
    \item the atom is $\gamma(t)$, for $\gamma\in\Gamma$, and $t$ is not of type $\out$,
    \item the atom is $t_1\leqoutput t_2$ and some $t_i$ is not of
      type $\out$,
    \item the atom is $\{\psi\}(t_1,t_2)$ and some $t_i$ is not
      of type $\inp$.
\end{itemize}
By doing this we obtain a new formula which is equivalent to $\phi$,
and makes a consistent use of its variables. We do not give a name to
this new formula and rather assume that $\phi$ satisfies this
property. 

Then, we do the following replacement in $\phi$ to transform it into an
equivalent \ldata-formula. First, similarly to the bijection $\td$ 
in which the origin of a position
becomes its data value, any term of the form $\ori^n(x)$ is
replaced by $x$. Then, any occurrence of 
an MSO predicate $\{\psi\}(x,y)$ is replaced by $\{\psi'\}(x,y)$,
where $\psi'$ is obtained by replacing in $\psi$ all atoms of the form
$x\leqinput y$ by $x\preceq y$.
We also replace the atom of the form $x\leqoutput y$ by $x\leq y$.
If we denote by $\phi'$ the obtained formula, by construction we have $(u,(v,o))\models
\phi$ iff $\td(u,(v,o))\models \phi'$. 

%

\begin{example}
For instance, consider the following formula $\phi$:
$$
\forall x\  \{ \sigma(x')\}(x)\rightarrow \exists y\ 
\{ y'\leqinput x'\}(\ori(y),x)
$$
where $\sigma\in\Sigma$. It expresses the fact for any input position
labelled $\sigma$, there is another input position before which is the
origin of some output position. First, note that $\forall x\ \psi$ being a shortcut for
$\neg \exists x\ \neg\psi$, the first replacement by typed quantifiers
gives the formula $\forinput x\ \psi \wedge \foroutput x\
\psi$. Then, the first rewriting step of $\phi$ gives:
$$
\begin{array}{ccc}
\forall^\inp x\  \{ \sigma(x')\}(x) & \rightarrow & \exists^\inp y\ 
\{ y'\leqinput x'\}(\ori(y),x) \\
& & \vee \\
& & \exists^\out y\ 
\{ y'\leqinput x'\}(\ori(y),x) \\
& \wedge & \\
\forall^\out x\  \{ \sigma(x')\}(x) & \rightarrow & \exists^\inp y\ 
\{ y'\leqinput x'\}(\ori(y),x) \\
& & \vee \\
& &  \exists^\out y\ 
\{ y'\leqinput x'\}(\ori(y),x) \\
\end{array}
$$
After the simplification step according to types, we get:
$$
\begin{array}{ccc}
\forall^\inp x\  \{ \sigma(x')\}(x) & \rightarrow & \exists^\inp y\ 
\bot \\
& & \vee \\
& & \exists^\out y\ 
\{ y'\leqinput x'\}(\ori(y),x) \\
& \wedge & \\
\forall^\out x\  \bot & \rightarrow & \exists^\inp y\ 
\bot \\
& & \vee \\
& &  \exists^\out y\ 
\{ y'\leqinput x'\}(\ori(y),x) \\
\end{array}
$$
which could be again simplified into:
$$
\forall^\inp x\  \{ \sigma(x')\}(x)  \rightarrow   \exists^\out y\ 
\{ y'\leqinput x'\}(\ori(y),x)
$$
Then, according to all the replacement rules, one gets the
$\ldata$-formula
$$
\forall x\  \{ \sigma(x')\}(x)  \rightarrow   \exists y\ 
\{ y'\preceq x'\}(y,x)
$$
which expresses that for all positions $x$, if the data type of $x$ is
$\sigma$, then there is a position $y$ whose data is smaller than that
of $x$. 
\end{example}

The converse is slightly easier, since we do not have to deal with
inconsistent use of variables. Any \ldata-sentence $\psi$ is converted
into an \lorigin-sentence $\psi'$ by doing the following replacements:
\begin{itemize}
  \item any quantifier $\exists$ is replaced by $\exists^\out$ (any
    variable is assumed to be quantified over outputs)
  \item $x\leq y$ is replaced by $x\leq_\out y$
  \item predicates $\{\phi\}(x,y)$ are replaced by $\{\phi'\}(\ori(x),\ori(y))$ where $\phi'$ is obtained from $\phi$
    by replacing $\preceq$ by $\leq_\inp$. 
\end{itemize}
By construction, a typed data word $w$ satisfies $\psi$ iff
$\td^{-1}(w)$ satisfies $\psi'$. 

\end{proof}

\section{Complexity of satisfiability}

\ThmComplex*

\begin{proof}
Let us first remark that the complexity of \ldata and \lorigin is equivalent since the translation between the two is linear.
Moreover, since the logic $\fo^2[\Gamma,\preceq, S_\preceq,\leq]$ is \textsc{ExpSpace}-complete~\cite{SZ12}, we get ExpSpace-hardness as \ldata strictly extends this logic.

Let us now prove the ExpSpace solvability of \lorigin.
As stated by Lemma~\ref{lem-ProfileAut}, satisfiability of an \lpp instance $C$ can be solved in ExpSpace with respect to the number of states in the query automata and in PSpace with respect to the size of $C$.
Now given a formula in Scott Normal Form, we obtain a \lpp instances $C$ whose size is linear in the size of the output alphabet and the size of the query automata.
However, the construction from a formula of \lorigin to a formula in Scott Normal Form 
trades each quantification for a unary predicate, whichis then incorporated into the extended alphabet of the SNF formula. Since these predicates are not mutually exclusive, this results in an exponential blow up of the output alphabet, and hence an exponential number of constraints, while using the same query automata.
Combining this and the complexity result from Lemma~\ref{lem-ProfileAut}, we get an ExpSpace complexity for the satisfiability of \lorigin, and hence \ldata.

\end{proof}

\section{Extensions of \elorigin}

\paragraph{Existential \lorigin}

\NmsoToElorigin*

\begin{proof}
Here the proof is exactly the same as the one of Theorem~\ref{thm:msoexpr} where the existentially quantified monadic predicates play the role of the parameters.
\end{proof}

\UnifElorigin*

\begin{proof}

The synthesis result of \lorigin can be extended to any \elorigin formula $\psi=\exists X_1 \ldots X_n \phi$: if we consider the monadic second-order variables as additional unary predicates, we obtain a formula $\phi$ over a signature extended with new unary predicates. Using Theorem~\ref{thm:unif}, we are able to obtain, for instance, a deterministic two-way transducer $T$ realising $\phi$ but over an extended alphabet. By projecting back to the original alphabets what we obtain is a non-deterministic two-way transducer $\overline T$ realizing a relation included in $\sem\varphi$ and with the same domain. We can then make $\overline T$ deterministic using the result of \cite{dS13} -- or even reversible using \cite{DFJL17} -- and thus obtain a synthesis of $\psi$.

\end{proof}

\UnivElorigin*

\begin{proof}
The main idea is to show that the satisfiability problem for the $\forall \lorigin$ logic, \ie formulas with a block of universal monadic quantifications followed by an \lorigin-formula, is undecidable. To this end we encode the same transduction as in the proof of Proposition~\ref{prop:undecfo2}.
We re-use the notations and definitions of the proof of Proposition~\ref{prop:undecfo2} and our goal is to define a sentence $\psi$ defining the same transduction as $\phi$ as:
$\psi \equiv \foroutput X\ \psi_{\text{well-formed}}(X)\wedge \bigwedge_{\ell=1}^2
     \phi_{\text{bij},\ell}\wedge \phi_{\text{ord-pres},\ell}\wedge \phi_{\text{lab-pres},\ell} $.
We have left to define $\psi_{\text{well-formed}}(X)$, whose role is to ensure that the output word belongs to the language $(+ _{1\leq i\leq n}1(u_i)2(v_i))^*$.
First we define a predicate  which states that $X$ is a contiguous set of positions:
$$\begin{array}{rl}
\text{cont}(X)\equiv \neg\big(\exoutput x\ x\in X  & \wedge
(\exoutput y\ x<_{\out} y \wedge y\notin X \\
 & \wedge (\exoutput x\ y<_{\out}x \wedge x\in X ))\big)
\end{array}$$
Similarly, for a word $w\in (A_1 + A_2)^*$ we define in the logic a new predicate $w(X,x)$ which states that $w$ is a subword of the positions of $X$, starting at position $x$. The predicates are defined by induction. For a letter $\sigma$ and a word $w$: $\sigma(X,x)\equiv\sigma(x)\wedge x\in X$ and  $\sigma w(X,x)\equiv\sigma (X,x) \wedge \exoutput y\ x<_{\out}y \wedge w(X,y)$. Using the same technique, without considering labels, we can define predicates $|X|\bowtie i$ for any integer $i$ and $\bowtie \in \set{<,>,\leq,\geq, =}$. We also define $w(X)\equiv \exoutput x\ w(X,x)\wedge |X|=|w|$. Let $m=\max\set{|u_i|+|v_i||\ 1\leq i \leq n }+1$.

We define $F$ as the set of factors of words of the language $(+ _{1\leq i\leq n}1(u_i)2(v_i))^*$ of size $\leq m$, \ie the set of acceptable output factors of length less than $m$.
We also define $P$ as the set of words in $(+ _{1\leq i\leq n}1(u_i)2(v_i))A_1$ and $S$ as the set $A_2(+ _{1\leq i\leq n}1(u_i)2(v_i))$.
Let $\min(X)$ and $\max(X)$ denote that the minimum, and respectively the maximum, position of the word belongs to $X$.

$$\begin{array}{llll}

\psi_{\text{well-formed}}(X)&\equiv&
\text{cont}(X)\wedge |X|\leq m\\
& \rightarrow&  \bigvee_{w\in F} w(X)\\
&& \wedge \bigvee_{w\in P} \min(X)\wedge |X|=|w| \rightarrow w(X)\\
&& \wedge \bigvee_{w\in S} \max(X)\wedge |X|=|w| \rightarrow w(X)
\end{array}$$

Note that this formula does not consider the solutions of size one of the PCP instance, which is not a problem since if $u_i=v_i$ is a solution, then so is $u_iu_i=v_iv_i$.
Intuitively, $\foroutput X\ \psi_{\text{well-formed}}(X)$ ensures that all factors (up to some length) of the output, the prefix and the suffix are of the correct form, which guarantees that the output word is indeed in the language $(+ _{1\leq i\leq n}1(u_i)2(v_i))^*$.
\end{proof}

\paragraph{Single-origin predicates}

\NFTtoLTSO*

\begin{proof}
Let us consider a one-way non-deterministic transducer $T$. Our goal is to define a formula of $\lorigin^{\mathrm{so}}$ defining the same transduction. We can see $T$ as a deterministic automaton over the alphabet $\Sigma\uplus\Gamma$, with a state space $Q$, initial state $q_0\in Q$, set of final states $F\subseteq Q$. For a state $q\in Q$ and a letter $a\in \Sigma\uplus\Gamma$, we denote by $q{\cdot} a$ the state reached upon reading the letter $a$ from state $q$. In order to simplify the proof, and without loss of generality, we assume that the initial state $q_0$ has to read an input letter. We see $T$ as a transduction with the following semantics: let $a_1v_1\cdots a_nv_n$ be a word accepted by $T$ such that for $1\leq i \leq n$ we have $a_i\in \Sigma$ and $v_i\in \Gamma^*$.
We define $(u,(v,o))$ with $u=a_1\cdots a_n$, $v=v_1\cdots v_n$ and for $|v_1\cdots v_{j-1}|< i \leq |v_1\cdots v_{j}|$, $o(i)=j$. Then we have $(u,(v,o))\in \sem T_o$. Let $p,q\in Q$ and let $L_{p,q}$ denote the set of words of $\Gamma^*$ which go from $p$ to $q$.
We define the $\lorigin^{\mathrm{so}}$-formula $\phi_{\mathrm{pres}}\wedge \{ \exists_{p,q,r\in Q}\ X^p_{q,r} \phi \}$, with $\phi_{\mathrm{pres}}$ the same formula defined in Example~\ref{ex:lorigin}. The formula $\phi$ will be a conjunction of four formulas: $\phi_{var}\wedge \phi_{succ}\wedge\phi_{min}\wedge\phi_{max}$. The formula $\phi_{var}$ will encode that each variable $X^p_{q,r}$ contains the input positions which can go from $p$ to $q$ upon reading the letter and then produce a word in $L_{q,r}$. The formula $\phi_{succ}$ encodes that two successive input positions must belong to some $X^p_{q,r}$ and $X^r_{s,t}$ respectively. The $\phi_{min}$ formula states that the first input position starts in the initial state and $\phi_{max}$ that the last input position produces a word which goes to some accepting state.

$$
\begin{array}{lcl}
\phi_{var} & = & \forinput x\ X^p_{q,r}(x) \rightarrow \left(L_{q,r}(x) \bigwedge_{\sigma\in \Sigma} \sigma(x)\rightarrow p.\sigma =q \right)\\
\phi_{succ} & = &\forinput x,y \bigwedge_{p,q,r} \left( S_{\inp}(x,y) \wedge X^p_{q,r}(x) \rightarrow \bigvee_{s,t} X^r_{s,t}(x) \right) \\
\phi_{min} & = & \forinput x\ \min(x) \rightarrow \bigvee_{p,q} X^{q_0}_{p,q}(x) \\
\phi_{max} & = & \forinput x\ \max(x) \rightarrow \bigvee_{r_f\in F, p,q} X^{p}_{q,q_f}(x) \\

\end{array}
$$
\end{proof}

\UnifLTso*

\begin{proof}
Here we sketch how to synthesise a regular function from an $\lorigin^{\mathrm{so}}$ formula $\phi$.
The extension to $\elorigin^{\mathrm {so}}$ follows by the same procedure as Proposition~\ref{prop-UnifElorigin}.
 The idea is to view $\sem \phi_o$ as the composition of two transductions $\tau_2\circ \tau_1$, with $\tau_1$ being rational (\ie given by a one-way transducer), and $\tau_2$ being an \lorigin-transduction.
We first synthesise $f_2$, realising $\tau_2$, using Theorem~\ref{thm:unif}.
Then the restriction $\tau_1'$ of $\tau_1$ to $\Sigma^*\times dom(f_2)$ is also rational since $f_2$ has a regular domain.
Using~\cite{Elgot:Mezei:ibmjrd:1965}, we can then synthesise $f_1$, realising $\tau_1'$, and the composition $f_2\circ f_1$ realises the original formula $\phi$.
The transduction $\tau_1$ is just a rational transduction which after a letter $a\in \Sigma$ produces an arbitrary word of $a\Gamma^*$.
The idea is that now the single-origin predicates can talk directly about the input and in Fig.~\ref{fig:so} we give an example of what $\tau_2$ is supposed to do. We transform $\phi$ syntactically into an \lorigin-formula $\phi'$ over an enriched signature, with the input alphabet being now $\Sigma\uplus\Gamma$ (we actually use a distinct copy of $\Gamma$ but don't write it differently for simplicity). 
We need to define a binary input predicate which relates input positions labelled in $\Gamma$ to the previous input position labelled by $\Sigma$, \ie its ``origin'' with respect to $\tau_1$ (which we call its \emph{virtual origin}): $$\vo(x,y)= \Sigma(y) \wedge y <_{\inp} x \wedge (\forinput z\ y <_{\inp} z  \leqinput x \rightarrow \Gamma(z))$$
For any regular language $L$ over $\Gamma^*$ we denote by $\phi_L$ the MSO formula recognizing it.
The syntactic transformation only modifies the input predicates and is done in three steps: 

1) We guard all quantifications such that they only talk about positions labelled in $\Sigma$, $(\exinput x\ \psi(x))'=\exinput x\ \Sigma(x) \wedge \psi'(x)$, $(\forinput x\ \psi(x))'=\forinput x\ \Sigma(x) \rightarrow \psi'(x)$.

2) The binary predicates of the form $\{ P(x,y) \}$ are replaced by $\{\exinput z,t\ \vo(x,z) \wedge \vo(y,t) \wedge P(z,t) \}$.

3) Finally  all predicates $L(x)$ are replaced by $\phi_L^{\vo(\_,x)}$ where $\vo(\_,x)$ means that all quantifications of $\phi_L$ are restricted to positions with virtual origin $x$.

The final formula is $\phi'\wedge\phi_{\mathrm{well-formed}}$ where $\phi_{\mathrm{well-formed}}$ states that:

1) The input positions labelled by $\gamma\in \Gamma$ produce exactly one position labelled by $\gamma$

2) The input positions labelled in $\Sigma$ don't produce anything

3) Given two output positions $x,y$, if there exists an input position $z$ such that $\vo(\ori(x),z)\wedge \vo(\ori(y),z)$ then $x\leqoutput y \leftrightarrow \ori(x)\leqinput \ori(y)$

\begin{figure}

\begin{tikzpicture}[scale=.5,every node/.style={scale=0.8}]
\node (s1) at (0,2) {\color{red} $\sigma_1$};
\node (s2) at (2,2) {\color{red}$\sigma_2$};
\node (s3) at (4,2) {\color{red}$\sigma_3$};

\node (g1) at (0,0) {\color{blue}$\gamma_1$};
\node (g2) at (1,0) {\color{blue}$\gamma_2$};
\node (g3) at (2,0) {\color{blue}$\gamma_3$};
\node (g4) at (3,0) {\color{blue}$\gamma_4$};
\node (g5) at (4,0) {\color{blue}$\gamma_5$};
\node (g6) at (5,0) {\color{blue}$\gamma_6$};

\foreach \a/\b in {g2/s1,g4/s1,g1/s2,g3/s2,g5/s3,g6/s3}
\draw[->, >=stealth] (\a) -- (\b);

\draw (6.5,2) -- (6.5,0);

\end{tikzpicture}
~\begin{tikzpicture}[scale=.5,every node/.style={scale=0.8}]
\node (s1) at (0,2) {\color{red}$\sigma_1$};
\node (gi2) at (1,2) {\color{blue}$\gamma_2$};
\node (gi4) at (2,2) {\color{blue}$\gamma_4$};
\node (s2) at (3,2) {\color{red}$\sigma_2$};
\node (gi1) at (4,2) {\color{blue}$\gamma_1$};
\node (gi3) at (5,2) {\color{blue}$\gamma_3$};
\node (s3) at (6,2) {\color{red}$\sigma_3$};
\node (gi5) at (7,2) {\color{blue}$\gamma_5$};
\node (gi6) at (8,2) {\color{blue}$\gamma_6$};

\node (g1) at (0,0) {\color{blue}$\gamma_1$};
\node (g2) at (1,0) {\color{blue}$\gamma_2$};
\node (g3) at (2,0) {\color{blue}$\gamma_3$};
\node (g4) at (3,0) {\color{blue}$\gamma_4$};
\node (g5) at (4,0) {\color{blue}$\gamma_5$};
\node (g6) at (5,0) {\color{blue}$\gamma_6$};

\foreach \a/\b in {g2/gi2,g4/gi4,g1/gi1,g3/gi3,g5/gi5,g6/gi6}
\draw[->, >=stealth] (\a) -- (\b);

\node at (-1.0,0) {};

\end{tikzpicture}

\caption{An o-graph of $\tau$ and its translated version as an o-graph of $\tau_2$.}
\label{fig:so}
\end{figure}
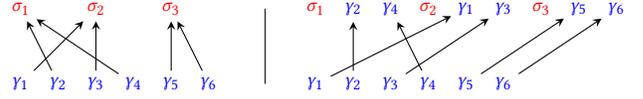
\end{proof}

\end{document}